\documentclass[lettersize,journal]{IEEEtran}
\usepackage{amsfonts}
\usepackage{amsmath}
\usepackage{amssymb}
\usepackage{epsfig}
\usepackage{multicol}
\usepackage{booktabs}
\usepackage{lettrine}
\usepackage{float}
\usepackage{footmisc}
\usepackage{epstopdf}
\usepackage[compress]{cite}
\usepackage{times}
\usepackage{stfloats}
\usepackage{mathtools}
\usepackage{algorithm}
\usepackage{algorithmic}
\usepackage{multirow}
\usepackage{color}
\usepackage{subfigure}
\usepackage{diagbox}
\usepackage[caption=false,font=normalsize,labelfont=sf,textfont=sf]{subfig}
\usepackage{makecell}
\usepackage{colortbl}
\usepackage{xcolor}
\usepackage{array}
\usepackage{arydshln}
\newtheorem{assumption}{Assumption}

\newtheorem{theorem}{Theorem}
\newtheorem{remark}{Remark}
\newtheorem{lemma}{Lemma}
\newtheorem{proposition}{Proposition}
\newtheorem{definition}{Definition}
\newtheorem{corollary}{Corollary}

\allowdisplaybreaks[4]
\input{tcilatex}
\begin{document}
\title{Critic-Free Policy Iteration for Continuous-Time Zero-Sum Games: A Policy-Space Riccati Approach
\thanks{
}}
\author{Jiacheng Wu, {Yang~Zhu}\IEEEauthorrefmark{1}, ~Hongye~Su,
\IEEEmembership{Senior Member,~IEEE}

\thanks{
Jiacheng Wu, Yang Zhu, and Hongye Su are with State Key Laboratory of Industrial Control Technology,
Institute of Cyber-Systems and Control, Zhejiang University, Hangzhou
310027, China (e-mail: jiachengwu@zju.edu.cn;
zhuyang88@zju.edu.cn;
hysu@iipc.zju.edu.cn;).}

\thanks{\textsuperscript{*}Corresponding author: Yang Zhu.}%

}
\maketitle
\begin{abstract}
This paper develops a critic-free policy iteration (PI) method for continuous-time linear zero-sum games. The central idea is to characterize the saddle-point policies directly in the joint policy space, rather than treating the quadratic value matrix as an iterative variable. A policy game Riccati equation (PGRE) is introduced whose unknowns are policy gains only. Its solutions are shown to be in one-to-one correspondence with the symmetric solutions of the game algebraic Riccati equation. Based on a stabilizing anchor, PI is performed directly in the actor space. The actor-space Jacobian is nonsingular at every stabilizing policy, and the resulting policy sequence coincides with that of simultaneous PI. For unknown dynamics, a data-driven algorithm uses a single batch of data and nullspace projection of endpoint increments to eliminate the value matrix, yielding an actor-only regression. A necessary and sufficient rank condition for unique policy recovery is established and shown to be iteration-invariant, demonstrating that critic identifiability is unnecessary. A power systems frequency-regulation example verifies convergence and policy recovery, while scalability tests demonstrate substantial reductions in computational and memory requirements.
\end{abstract}
\begin{IEEEkeywords}
Reinforcement learning, zero-sum differential games,
Riccati equations,
data informativity.
\end{IEEEkeywords}
\section{Introduction}
\IEEEPARstart{R}OBUST control of dynamical systems subject to adversarial or energy-bounded disturbances can be formulated as a two-player zero-sum differential game, in which a controller minimizes a prescribed performance index while a disturbance acts to maximize it \cite{basar2008hinfinity,kiumarsi2017h,luo2021policyc}. For continuous-time linear systems with quadratic performance \cite{li2025reinforcementc,jiang2023reinforcement}, the saddle-point control and disturbance policies are characterized by a stabilizing solution of the game algebraic Riccati equation (GARE). This characterization establishes a fundamental connection among $H_{\infty }$ control, differential games, and optimal state-feedback design. When an accurate system model is available, the saddle-point solution can be computed using model-based Riccati equation methods \cite{kleinman1968iterative,nortmann2026iterative}. In many practical applications \cite{LianBS2024BOOK,lin2025datadriven,liu2026modelfree,yang2024hamiltoniannnls}, the system matrices are unavailable, motivating reinforcement learning (RL) and adaptive dynamic programming approaches \cite{wu2026accelerated,liu2020adaptive,tan2025prescribed,Jha2019IEEETAC} that learn the saddle-point policies directly from measured system trajectories.

Policy iteration (PI) has provided a systematic basis for data-driven optimal control of continuous-time systems. Early adaptive-critic PI schemes learned linear-quadratic feedback policies with partial model knowledge \cite{Vrabie2009}, while subsequent data-driven formulations reused fixed trajectory data to address systems with completely unknown dynamics \cite{Jiang2012}. For zero-sum games, integral RL was developed to recover saddle-point policies from measured trajectories \cite{VrabieLewis2011,Li2014,shen2025data}. Wu \emph{et al.} \cite{WuLuo2013} proposed a simultaneous policy update algorithm (SPUA), in which the control and disturbance policies are improved within a single iteration loop and convergence is established through a Newton sequence for the GARE. Off-policy extensions further allowed the behavior data to be generated independently of the policies being evaluated \cite{LuoWuHuang2015,lopez2023efficient,zhao2024novel,yang2023modelfree}. Despite their different model requirements and data-acquisition protocols, these methods typically retain the value function, or its quadratic matrix parameter, during policy evaluation. Their data-driven implementations therefore lead to actor–critic regressions in which the policy gains and critic parameters are estimated jointly.

From an implementation viewpoint, the critic is not part of the final 
feedback law. Nevertheless, for an $n$-dimensional system, a quadratic 
value function introduces $d_n=n(n+1)/2$ critic parameters in addition 
to the $sn=(m+q)n$ joint-policy parameters. Conventional off-policy PI 
therefore estimates $d_n+sn$ unknowns and imposes a full-rank condition 
on the joint actor--critic regression 
\cite{Jiang2012,Li2014,wu2025distributed}. Policy-only algebraic formulations have recently 
been derived for LQR \cite{sassano2025policyac} and $H_\infty$ control \cite{Sassano2024}. For LQR, 
the policy Riccati equation admits a Newton iteration identical to 
Kleinman's method \cite{PossieriSassano2026}, and endpoint projection can 
produce a data-driven actor-only update \cite{WuZhuSu2026}. However, the zero-sum 
setting couples the minimizing and maximizing policies through 
an indefinite weighting matrix. It further requires the policy equation 
to preserve the GARE solution structure and the data condition to 
distinguish policy recovery from critic identification. These issues are 
not resolved by the single-player LQR constructions, while the existing 
$H_\infty$ policy equation does not provide an iterative or data-driven 
critic-free scheme.

Motivated by these observations, this article develops a critic-free PI 
framework for continuous-time linear zero-sum games. The control and 
disturbance gains are combined into a joint policy variable, and a 
quadratic policy game Riccati equation (PGRE) is constructed through a 
Sylvester lifting. In contrast to the Hamiltonian-observability 
construction in \cite{Sassano2024}, the PGRE establishes a direct 
correspondence between its policy solutions and the symmetric solutions 
of the GARE. A stabilizing anchor then removes the open-loop spectral 
condition and enables Newton updates directly in the actor space, 
extending the policy-space iteration developed for LQR in 
\cite{PossieriSassano2026}. Each update is shown to be well posed at a 
stabilizing policy and to coincide with the model-based SPUA in 
\cite{WuLuo2013}. For unknown dynamics, the endpoint-projection mechanism 
introduced for LQR in \cite{WuZhuSu2026} is extended to the joint 
min--max policy. Applying this projection to an integral system identity 
eliminates the value-matrix term and yields an actor-only regression from 
a fixed batch of data. The resulting representation also separates the 
rank condition for policy recovery from that for critic identification.

The main contributions of this article are summarized as follows. 
\begin{itemize}
\item[1)] We develop a policy-space formulation for continuous-time linear zero-sum games by introducing a PGRE in the joint control–disturbance gains. The PGRE preserves the full symmetric solution structure of the GARE without requiring a matching condition between the two channels, while its anchored form enables a well-posed Newton iteration directly in the actor space despite the indefinite game weighting.

\item[2)] We develop a data-driven critic-free PI scheme for continuous-time linear zero-sum games with unknown dynamics. By projecting the integral policy-evaluation identity onto the nullspace of the endpoint increments, the value-matrix term is eliminated before regression, yielding an actor-only update involving only the joint policy gains. The resulting update is lossless with respect to the actor variables and reproduces the model-based PI.

\item[3)] We establish a policy-oriented data-informativity theory that characterizes when the saddle-point policy can be uniquely recovered from data. A necessary and sufficient actor rank condition is derived and separated from critic identifiability, and is further shown to be invariant across all stabilizing iterations. This reveals that unique policy recovery does not require the value matrix to be identifiable.
\end{itemize}

\emph{Notation:}
Throughout this article, let $\mathbb{R}$ and $\mathbb{R}^{a\times b}$ denote 
the sets of real numbers and real $a\times b$ matrices, respectively, 
and let $\mathbb{S}^{a}$ denote the set of real symmetric 
$a\times a$ matrices.
For a matrix $\mathbf{X}$, $\mathbf{X}^{\top}$, 
$\operatorname{tr}(\mathbf{X})$, $\operatorname{spec}(\mathbf{X})$, 
$\operatorname{rank}(\mathbf{X})$, $\ker(\mathbf{X})$, and 
$\operatorname{im}(\mathbf{X})$ denote its transpose, trace, spectrum, 
rank, kernel, and image, respectively. The orthogonal complement of a 
subspace $\mathcal{V}$ is denoted by $\mathcal{V}^{\perp}$. 
The operators $\operatorname{col}(\cdot)$ and 
$\operatorname{diag}(\cdot)$ denote vertical concatenation and block 
diagonalization, respectively. For a matrix $\mathbf{X}$, 
$\operatorname{vec}(\mathbf{X})$ is obtained by stacking its columns, 
and $\operatorname{unvec}_{a\times b}(\cdot)$ denotes the inverse 
vectorization onto $\mathbb{R}^{a\times b}$. For 
$\mathbf{X}\in\mathbb{S}^{n}$, 
$\operatorname{vech}(\mathbf{X})\in\mathbb{R}^{d_n}$ stacks its 
lower-triangular entries, where $d_n=n(n+1)/2$. The trace-dual 
vectorization $\operatorname{svec}(\cdot)$ is defined by
$
  \operatorname{svec}(\mathbf{P})^{\top}
  \operatorname{vech}(\mathbf{X})
  =
  \operatorname{tr}(\mathbf{P}\mathbf{X}),
$
$
  \mathbf{P},\mathbf{X}\in\mathbb{S}^{n}.
$
The symbols $\otimes$ and $\oplus$ denote the Kronecker product and 
Kronecker sum, respectively. 
 For a Fr\'echet-differentiable matrix-valued mapping 
$\mathcal{F}$, $D\mathcal{F}(\mathbf{X})[\mathbf{E}]$ denotes its 
derivative at $\mathbf{X}$ in the direction $\mathbf{E}$.

\section{System Model and Zero-Sum Game}
\label{sec:system_game}

Consider the continuous-time linear system
\begin{equation}
    \dot{x}(t)
=\mathbf{A}x(t)+\mathbf{B}_{1}u(t)+\mathbf{B}_{2}w(t),
    \qquad x(0)=x_{0},
\label{1}
\end{equation}
where $x(t)\in\mathbb{R}^{n}$ is the state,
$u(t)\in\mathbb{R}^{m}$ is the control input, and
$w(t)\in\mathbb{R}^{q}$ is the exogenous disturbance, with
$w(\cdot)\in L_{2}[0,\infty)$. The system matrices
$\mathbf{A}\in\mathbb{R}^{n\times n}$,
$\mathbf{B}_{1}\in\mathbb{R}^{n\times m}$, and
$\mathbf{B}_{2}\in\mathbb{R}^{n\times q}$ are unknown.
Let $\mathbf{Q}=\mathbf{Q}^{\top}\geq0$ and
$\mathbf{R}=\mathbf{R}^{\top}>0$. For a prescribed attenuation level
$\gamma>0$, define the running payoff
$
    r(x(t),u(t),w(t))
    \triangleq x^{\top}(t)\mathbf{Q}x(t)
    +u^{\top}(t)\mathbf{R}u(t)-\gamma^{2}w^{\top}(t)w(t)
$
and the infinite-horizon payoff
\begin{equation}
    J(x_{0},u,w)
    \triangleq\int_{0}^{\infty}
    r\bigl(x(t),u(t),w(t)\bigr)\,dt.
\label{2}
\end{equation}
For an admissible pair $(u,w)$, the associated cost-to-go is
\begin{equation}
    V\bigl(x(t)\bigr)
    \triangleq\int_{t}^{\infty}
    r\bigl(x(\tau),u(\tau),w(\tau)\bigr)\,d\tau.
\label{3}
\end{equation}
Let $\mathcal{U}$ and $\mathcal{W}$ denote the admissible classes of
causal state-feedback control and disturbance policies, respectively. The upper value
of the associated two-player zero-sum game is
\begin{equation*}
    V^{\ast}(x_{0})
    =\inf_{u\in\mathcal{U}}\sup_{w\in\mathcal{W}}
      J(x_{0},u,w).
\end{equation*}
A policy pair $(u^{\ast},w^{\ast})$ is a saddle point if
\begin{equation*}
    J(x_{0},u^{\ast},w)
    \leq J(x_{0},u^{\ast},w^{\ast})
    \leq J(x_{0},u,w^{\ast})
\end{equation*}
for every $u\in\mathcal{U}$ and $w\in\mathcal{W}$.
Assume that $(\mathbf{A},\mathbf{B}_{1})$ is stabilizable and
$(\mathbf{Q}^{1/2},\mathbf{A})$ is detectable. Let
$\gamma^{\ast}$ denote the infimum of the attenuation levels for
which a stabilizing saddle point exists \cite{lewis2012optimal}. Fix a feasible level
$\gamma>\gamma^{\ast}$, and let
$\mathbf{P}^{\ast}=\mathbf{P}^{\ast\top}\geq0$ denote a stabilizing
solution of the game algebraic Riccati equation (GARE)
\begin{equation}
\begin{aligned}
    \mathbf{0}={}&
    \mathbf{A}^{\top}\mathbf{P}^{\ast}
    +\mathbf{P}^{\ast}\mathbf{A}
    +\mathbf{Q}
    -\mathbf{P}^{\ast}\mathbf{B}_{1}\mathbf{R}^{-1}
     \mathbf{B}_{1}^{\top}\mathbf{P}^{\ast}\\
&+\gamma^{-2}\mathbf{P}^{\ast}\mathbf{B}_{2}
     \mathbf{B}_{2}^{\top}\mathbf{P}^{\ast}.
\end{aligned}
\label{4}
\end{equation}
The corresponding saddle-point policies are
\begin{align}
    u^{\ast}(t)
    &=-\mathbf{K}^{\ast}x(t)
      =-\mathbf{R}^{-1}\mathbf{B}_{1}^{\top}
       \mathbf{P}^{\ast}x(t),
    \label{5}\\
    w^{\ast}(t)
    &=\mathbf{L}^{\ast}x(t)
      =\gamma^{-2}\mathbf{B}_{2}^{\top}
       \mathbf{P}^{\ast}x(t).
    \label{6}
\end{align}
The game value is
$
    V^{\ast}(x_{0})=x_{0}^{\top}\mathbf{P}^{\ast}x_{0}.
$

Equations \eqref{4}-\eqref{6} characterize the saddle point through the value matrix \(\textbf{P}^\ast\). To avoid treating \(\textbf{P}\) as an iterative variable, Section III recasts the GARE and stationarity conditions in the joint policy variable.

\section{Policy Game Riccati Equation}
\label{sec:PGRE}

We first introduce a joint policy variable that combines
the minimizing control policy and the maximizing disturbance
policy. This representation allows the GARE and the saddlepoint
stationarity conditions to be recast as an algebraic
equation whose unknowns are policy gains only.

For feedback gains
$\mathbf{K}\in\mathbb{R}^{m\times n}$ and
$\mathbf{L}\in\mathbb{R}^{q\times n}$, define
$
    \mathbf{\Theta}
    \triangleq\operatorname{col}(\mathbf{L},-\mathbf{K})
    \in\mathbb{R}^{(q+m)\times n},
$
together with
$
    \mathbf{B}_{g}
    \triangleq
    \begin{bmatrix}\mathbf{B}_{2}&\mathbf{B}_{1}\end{bmatrix}
    \in\mathbb{R}^{n\times(q+m)}
$
and
$
    \mathbf{R}_{g}
    \triangleq
    \operatorname{diag}
(-\gamma^{2}\mathbf{I}_{q},\mathbf{R}).
$
The saddle-point stationarity conditions
\eqref{5}-\eqref{6} become
\begin{equation}
    \mathbf{R}_{g}\mathbf{\Theta}
    +\mathbf{B}_{g}^{\top}\mathbf{P}=\mathbf{0},
\label{pgre_actorcritic}
\end{equation}
while the GARE takes the compact form
\begin{equation}
\begin{aligned}
    \mathbf{0}={}&\mathbf{A}^{\top}\mathbf{P}
    +\mathbf{P}\mathbf{A}+\mathbf{Q}
    -\mathbf{P}\mathbf{B}_{g}\mathbf{R}_{g}^{-1}
     \mathbf{B}_{g}^{\top}\mathbf{P}.
\end{aligned}
\label{pgre_gare_compact}
\end{equation}

The following identity rewrites the quadratic Riccati term as a
quadratic function of the joint policy matrix.

\begin{lemma}
\label{lem:game_sylvester_identity}
Let $\mathbf{P}=\mathbf{P}^{\top}$ and suppose that
$(\mathbf{P},\mathbf{\Theta})$ satisfies
$
    \mathbf{R}_{g}\mathbf{\Theta}
    +\mathbf{B}_{g}^{\top}\mathbf{P}=\mathbf{0}
$. Then \eqref{pgre_gare_compact} holds if and
only if
\begin{equation}
    \mathbf{A}^{\top}\mathbf{P}+\mathbf{P}\mathbf{A}
    =\mathbf{\Theta}^{\top}\mathbf{R}_{g}\mathbf{\Theta}
    -\mathbf{Q}.
\label{pgre_fixed_sylvester}
\end{equation}
\end{lemma}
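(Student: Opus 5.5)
The plan is to use the stationarity relation to rewrite the quadratic Riccati term $\mathbf{P}\mathbf{B}_{g}\mathbf{R}_{g}^{-1}\mathbf{B}_{g}^{\top}\mathbf{P}$ as a quadratic form in $\mathbf{\Theta}$. Once that term is expressed through the policy variable, \eqref{pgre_gare_compact} and \eqref{pgre_fixed_sylvester} become the same linear relation, just rearranged.

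First, note that $\mathbf{R}_{g}=\operatorname{diag}(-\gamma^{2}\mathbf{I}_{q},\mathbf{R})$ is symmetric and invertible, because $\gamma>0$ and $\mathbf{R}>0$. Indefiniteness does not matter here; only invertibility and symmetry are used. The hypothesis $\mathbf{R}_{g}\mathbf{\Theta}+\mathbf{B}_{g}^{\top}\mathbf{P}=\mathbf{0}$ then gives
\begin{equation*}
\mathbf{\Theta}=-\mathbf{R}_{g}^{-1}\mathbf{B}_{g}^{\top}\mathbf{P}, \qquad \mathbf{B}_{g}^{\top}\mathbf{P}=-\mathbf{R}_{g}\mathbf{\Theta}.
\end{equation*}
Using $\mathbf{P}=\mathbf{P}^{\top}$ and $\mathbf{R}_{g}^{\top}=\mathbf{R}_{g}$, transposing the second identity yields $\mathbf{P}\mathbf{B}_{g}=-\mathbf{\Theta}^{\top}\mathbf{R}_{g}$. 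Substituting both expressions gives
\begin{equation*}
\mathbf{P}\mathbf{B}_{g}\mathbf{R}_{g}^{-1}\mathbf{B}_{g}^{\top}\mathbf{P}
=(-\mathbf{\Theta}^{\top}\mathbf{R}_{g})\mathbf{R}_{g}^{-1}(-\mathbf{R}_{g}\mathbf{\Theta})
=\mathbf{\Theta}^{\top}\mathbf{R}_{g}\mathbf{\Theta}.
\end{equation*}

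Inserting this identity into \eqref{pgre_gare_compact} turns it into $\mathbf{0}=\mathbf{A}^{\top}\mathbf{P}+\mathbf{P}\mathbf{A}+\mathbf{Q}-\mathbf{\Theta}^{\top}\mathbf{R}_{g}\mathbf{\Theta}$. This is exactly \eqref{pgre_fixed_sylvester} after moving terms across. The substitution is an equality of matrices that holds whenever the stationarity condition holds, so it works in both directions. Hence \eqref{pgre_gare_compact} holds if and only if \eqref{pgre_fixed_sylvester} holds.

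There is no real obstacle. The only point needing care is the transpose step, which depends on the symmetry of both $\mathbf{P}$ and $\mathbf{R}_{g}$. It is also worth stating that the sign-indefinite weighting does not affect the argument, since nothing beyond invertibility of $\mathbf{R}_{g}$ is used.
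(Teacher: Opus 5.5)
Your proposal is correct and follows essentially the same route as the paper. Both solve the stationarity relation for $\mathbf{\Theta}$, then use the symmetry of $\mathbf{P}$ and $\mathbf{R}_{g}$ to identify $\mathbf{\Theta}^{\top}\mathbf{R}_{g}\mathbf{\Theta}$ with $\mathbf{P}\mathbf{B}_{g}\mathbf{R}_{g}^{-1}\mathbf{B}_{g}^{\top}\mathbf{P}$, and substitute into \eqref{pgre_gare_compact}; the substitution is reversible, which gives the converse.
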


\begin{proof}
Since $\mathbf{R}_{g}$ is symmetric and nonsingular,
we have
$
    \mathbf{\Theta}
    =-\mathbf{R}_{g}^{-1}\mathbf{B}_{g}^{\top}\mathbf{P}.
$
The symmetry of $\mathbf{P}$ and $\mathbf{R}_{g}$ then yields
$
\mathbf{\Theta}^{\top}\mathbf{R}_{g}\mathbf{\Theta}
    =\mathbf{P}\mathbf{B}_{g}\mathbf{R}_{g}^{-1}
     \mathbf{B}_{g}^{\top}\mathbf{P}.
$
Substitution into \eqref{pgre_gare_compact} proves
\eqref{pgre_fixed_sylvester}. Reversing the substitution proves the
converse.
\end{proof}

To recover $\mathbf{P}$ uniquely from a prescribed policy matrix, we
impose the following open-loop spectral condition.

\begin{assumption}
\label{ass:nonresonance_A}
The matrix $\mathbf{A}$ satisfies
\begin{equation}
    \operatorname{spec}(\mathbf{A})
    \cap\operatorname{spec}(-\mathbf{A})=\varnothing.
\label{pgre_nonresonance}
\end{equation}
\end{assumption}

Under Assumption~\ref{ass:nonresonance_A}, the Sylvester operator
$
    \mathcal{L}_{\mathbf{A}}(\mathbf{X})
    \triangleq\mathbf{A}^{\top}\mathbf{X}+\mathbf{X}\mathbf{A}
$
is nonsingular. Accordingly, for every
$\mathbf{\Theta}\in\mathbb{R}^{(q+m)\times n}$, define the
policy-dependent Sylvester lifting
\begin{equation}
\begin{aligned}
    \mathbf{P}_{\mathbf{A}}(\mathbf{\Theta})
    \triangleq\mathcal{L}_{\mathbf{A}}^{-1}\!\left(
    \mathbf{\Theta}^{\top}\mathbf{R}_{g}\mathbf{\Theta}
    -\mathbf{Q}\right).
\end{aligned}
\label{pgre_PA_definition}
\end{equation}
Equivalently, $\mathbf P_{\mathbf A}(\mathbf\Theta)$ is the unique
solution of
\begin{equation}
\begin{aligned}
    \mathbf{A}^{\top}\mathbf{P}_{\mathbf{A}}(\mathbf{\Theta})
    +\mathbf{P}_{\mathbf{A}}(\mathbf{\Theta})\mathbf{A}
    =\mathbf{\Theta}^{\top}\mathbf{R}_{g}\mathbf{\Theta}
    -\mathbf{Q}.
\end{aligned}
\label{pgre_PA_lyap}
\end{equation}
The right-hand side of \eqref{pgre_PA_lyap} is symmetric.
Transposing the equation therefore shows that
$\mathbf P_{\mathbf A}(\mathbf\Theta)^\top$ satisfies the same
Sylvester equation. Uniqueness then gives
$\mathbf{P}_{\mathbf{A}}(\mathbf{\Theta})
=\mathbf{P}_{\mathbf{A}}(\mathbf{\Theta})^{\top}$.
The vectorized representation of
\eqref{pgre_PA_lyap} is
\begin{equation}
\begin{aligned}
    \operatorname{vec}\!\left(
    \mathbf{P}_{\mathbf{A}}(\mathbf{\Theta})\right)
    ={}&\left[
    \mathbf{I}_{n}\otimes\mathbf{A}^{\top}
    +\mathbf{A}^{\top}\otimes\mathbf{I}_{n}
    \right]^{-1}\\
    &\times\operatorname{vec}\!\left(
    \mathbf{\Theta}^{\top}\mathbf{R}_{g}\mathbf{\Theta}
    -\mathbf{Q}\right).
\end{aligned}
\label{pgre_PA_vec}
\end{equation}

\begin{definition}
\label{def:PGRE}
Under Assumption~\ref{ass:nonresonance_A}, define the policy residual
\begin{equation}
    \mathbf{\Psi}_{\mathbf{A}}(\mathbf{\Theta})
    \triangleq\mathbf{R}_{g}\mathbf{\Theta}
    +\mathbf{B}_{g}^{\top}
     \mathbf{P}_{\mathbf{A}}(\mathbf{\Theta}).
\label{pgre_residual}
\end{equation}
The equation
$\mathbf{\Psi}_{\mathbf{A}}(\mathbf{\Theta})=\mathbf{0}$
is called the \emph{Policy Game Riccati Equation} (PGRE).
\end{definition}

In terms of $\mathbf{L}$ and $\mathbf{K}$, the PGRE $\mathbf{\Psi}_{\mathbf{A}}(\mathbf{\Theta})=\mathbf{0}$ becomes
\begin{equation}
\left\{
\begin{aligned}
    \mathbf{0}
    &=-\gamma^{2}\mathbf{L}
      +\mathbf{B}_{2}^{\top}
       \mathbf{P}_{\mathbf{A}}(\mathbf{\Theta}),\\
    \mathbf{0}
    &=-\mathbf{R}\mathbf{K}
      +\mathbf{B}_{1}^{\top}
       \mathbf{P}_{\mathbf{A}}(\mathbf{\Theta}).
\end{aligned}
\right.
\label{pgre_block_equations}
\end{equation}
Since $\mathcal{L}_{\mathbf{A}}^{-1}$ is linear,
$\mathbf{P}_{\mathbf{A}}(\mathbf{\Theta})$ is a matrix-valued
polynomial of degree at most two. Thus, the PGRE consists of
$(m+q)n$ scalar polynomial equations of degree at most two in the same
number of policy variables.
Define the symmetric GARE solution set and the PGRE solution set by
\begin{subequations}
\label{pgre_solution_sets}
\begin{align}
    \mathcal{P}_{\mathrm{G}}
    \triangleq\bigl\{\mathbf{P}\in\mathbb{R}^{n\times n}:{}&
    \mathbf{P}=\mathbf{P}^{\top},\
    \mathbf{P}\text{ satisfies \eqref{pgre_gare_compact}}\bigr\},
    \label{pgre_gare_solution_set}\\
    \mathcal{T}_{\mathrm{G}}
    \triangleq\bigl\{\mathbf{\Theta}
    \in\mathbb{R}^{(q+m)\times n}:{}&
    \mathbf{\Psi}_{\mathbf{A}}(\mathbf{\Theta})=\mathbf{0}\bigr\}.
    \label{pgre_policy_solution_set}
\end{align}
\end{subequations}

\begin{theorem}
\label{thm:PGRE_GARE_equivalence}
Under Assumption~\ref{ass:nonresonance_A}, consider the maps
$
    \mathfrak{T}:\mathcal{P}_{\mathrm{G}}
    \rightarrow\mathcal{T}_{\mathrm{G}}$,
    $
    \mathfrak{T}(\mathbf{P})
    \triangleq-\mathbf{R}_{g}^{-1}
      \mathbf{B}_{g}^{\top}\mathbf{P},
      $
      $
    \mathfrak{P}:\mathcal{T}_{\mathrm{G}}
    \rightarrow\mathcal{P}_{\mathrm{G}}$,
    $
    \mathfrak{P}(\mathbf{\Theta})
    \triangleq\mathbf{P}_{\mathbf{A}}(\mathbf{\Theta}).
$
The maps $\mathfrak T$ and $\mathfrak P$ are well-defined and mutually
inverse. Consequently, the PGRE solution set is in one-to-one
correspondence with the set of all symmetric GARE solutions.
\end{theorem}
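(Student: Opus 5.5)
Four facts need checking: $\mathfrak T$ maps into $\mathcal T_{\mathrm G}$, $\mathfrak P$ maps into $\mathcal P_{\mathrm G}$, $\mathfrak P\circ\mathfrak T=\mathrm{id}$ on $\mathcal P_{\mathrm G}$, and $\mathfrak T\circ\mathfrak P=\mathrm{id}$ on $\mathcal T_{\mathrm G}$. The tools are Lemma~\ref{lem:game_sylvester_identity} and the uniqueness of solutions of the Sylvester equation $\mathcal L_{\mathbf A}(\mathbf X)=\mathbf Y$ under Assumption~\ref{ass:nonresonance_A}.

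\emph{Step 1: $\mathfrak T$ is well defined and $\mathfrak P\circ\mathfrak T=\mathrm{id}$.} Take $\mathbf P\in\mathcal P_{\mathrm G}$ and set $\mathbf\Theta=\mathfrak T(\mathbf P)=-\mathbf R_g^{-1}\mathbf B_g^\top\mathbf P$. Then $\mathbf R_g\mathbf\Theta+\mathbf B_g^\top\mathbf P=\mathbf 0$, and $\mathbf P$ is symmetric and satisfies \eqref{pgre_gare_compact}. Lemma~\ref{lem:game_sylvester_identity} therefore gives $\mathcal L_{\mathbf A}(\mathbf P)=\mathbf\Theta^\top\mathbf R_g\mathbf\Theta-\mathbf Q$. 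Since $\mathcal L_{\mathbf A}$ is invertible, comparison with \eqref{pgre_PA_lyap} yields $\mathbf P_{\mathbf A}(\mathbf\Theta)=\mathbf P$. Hence $\mathbf\Psi_{\mathbf A}(\mathbf\Theta)=\mathbf R_g\mathbf\Theta+\mathbf B_g^\top\mathbf P=\mathbf 0$, so $\mathbf\Theta\in\mathcal T_{\mathrm G}$ and $\mathfrak P(\mathfrak T(\mathbf P))=\mathbf P$.

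\emph{Step 2: $\mathfrak P$ is well defined and $\mathfrak T\circ\mathfrak P=\mathrm{id}$.} Take $\mathbf\Theta\in\mathcal T_{\mathrm G}$ and set $\mathbf P=\mathbf P_{\mathbf A}(\mathbf\Theta)$. The remark after \eqref{pgre_PA_lyap} shows that $\mathbf P$ is symmetric. The PGRE states exactly that $\mathbf R_g\mathbf\Theta+\mathbf B_g^\top\mathbf P=\mathbf 0$, and \eqref{pgre_PA_lyap} is exactly \eqref{pgre_fixed_sylvester}. The converse direction of Lemma~\ref{lem:game_sylvester_identity} then shows that $\mathbf P$ satisfies \eqref{pgre_gare_compact}, so $\mathbf P\in\mathcal P_{\mathrm G}$. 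Because $\mathbf R_g$ is nonsingular, the stationarity relation gives $\mathbf\Theta=-\mathbf R_g^{-1}\mathbf B_g^\top\mathbf P=\mathfrak T(\mathfrak P(\mathbf\Theta))$.

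Together, Steps 1 and 2 show that $\mathfrak T$ and $\mathfrak P$ are mutually inverse bijections, which is the claimed one-to-one correspondence. The only point needing care is in Step 1: identifying $\mathbf P$ with $\mathbf P_{\mathbf A}(\mathbf\Theta)$ requires uniqueness of the Sylvester solution. This is exactly where Assumption~\ref{ass:nonresonance_A} enters. Without it, a GARE solution $\mathbf P$ would still satisfy \eqref{pgre_fixed_sylvester}, but $\mathbf P_{\mathbf A}$ would not be defined. The remaining steps are direct substitutions.
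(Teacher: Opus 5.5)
Your proposal is correct and follows essentially the same route as the paper's proof. Both directions rest on Lemma~\ref{lem:game_sylvester_identity} and on uniqueness of the Sylvester solution under Assumption~\ref{ass:nonresonance_A}: your Step~2 is the paper's forward direction, and your Step~1 is its converse, presented in the opposite order.
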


\begin{proof}
Let $\mathbf{\Theta}\in\mathcal{T}_{\mathrm{G}}$ and set
$\mathbf{P}=\mathbf{P}_{\mathbf{A}}(\mathbf{\Theta})$. The preceding
symmetry argument gives $\mathbf{P}=\mathbf{P}^{\top}$.
Equations \eqref{pgre_PA_lyap} and \eqref{pgre_residual} yield
\eqref{pgre_fixed_sylvester} and \eqref{pgre_actorcritic},
respectively. Lemma~\ref{lem:game_sylvester_identity} then shows that
$\mathbf{P}$ satisfies \eqref{pgre_gare_compact}. Therefore,
$\mathfrak{P}(\mathbf{\Theta})\in\mathcal{P}_{\mathrm{G}}$, and
\eqref{pgre_actorcritic} gives
$
    \mathfrak{T}\bigl(\mathfrak{P}(\mathbf{\Theta})\bigr)
    =\mathbf{\Theta}.
$
Conversely, let $\mathbf{P}\in\mathcal{P}_{\mathrm{G}}$ and set
$\mathbf{\Theta}=\mathfrak{T}(\mathbf{P})$. By construction,
\eqref{pgre_actorcritic} holds. Lemma~\ref{lem:game_sylvester_identity}
then yields \eqref{pgre_fixed_sylvester}. The uniqueness of the
solution to \eqref{pgre_PA_lyap} implies
$\mathbf{P}=\mathbf{P}_{\mathbf{A}}(\mathbf{\Theta})$, which in turn
gives $\mathbf{\Psi}_{\mathbf{A}}(\mathbf{\Theta})=\mathbf{0}$.
Thus,
$
    \mathfrak{P}\bigl(\mathfrak{T}(\mathbf{P})\bigr)=\mathbf{P}.
$
The two identities establish the result.
\end{proof}

\begin{remark}
\label{rem:pgre_no_rank_condition}
Theorem~\ref{thm:PGRE_GARE_equivalence} requires neither a matching
condition between $\mathbf B_1$ and $\mathbf B_2$ nor injectivity of
$\mathbf B_g^\top$. The injectivity of $\mathfrak T$ on
$\mathcal P_{\mathrm G}$ follows from uniqueness of the solution to
the fixed Sylvester equation, not from a rank condition on
$\mathbf B_g^\top$.
\end{remark}

\begin{remark}
It should also be noted that the correspondence in Theorem 1 concerns all symmetric GARE solutions. A PGRE zero need not be stabilizing and therefore need not represent an admissible saddle-point policy for the infinite-horizon game. The game-relevant solution is selected by the closed-loop stability requirement together with the feasibility condition $\gamma>\gamma^{\ast}$.
\end{remark}

\section{Model-Based Critic-Free RL Algorithm}
\label{sec:model_based_rl}
 To remove the open-loop spectral requirement in Definition 1, we introduce a stable-anchor PGRE and derive Newton updates directly in policy space. We then establish their equivalence to offline SPUA
\cite{WuLuo2013}, thereby transferring its convergence guarantee under the same assumptions.

The PGRE in Definition~\ref{def:PGRE} relies on the Sylvester operator
associated with the open-loop matrix $\mathbf A$ and hence on
Assumption~\ref{ass:nonresonance_A}. To avoid this open-loop requirement in the Newton formulation, introduce a fixed
joint anchor policy
\begin{equation}
    \mathbf{\Theta}_{\mathrm{a}}
    \triangleq\operatorname{col}\!\left(
    \mathbf{L}_{\mathrm{a}},-\mathbf{K}_{\mathrm{a}}\right)
    \in\mathbb{R}^{s\times n},
    \qquad s\triangleq q+m,
\label{mb_anchor_policy}
\end{equation}
and let
$
    \mathbf{A}_{\mathrm{a}}
    \triangleq\mathbf{A}+\mathbf{B}_{g}\mathbf{\Theta}_{\mathrm{a}}.
$

\begin{assumption}
\label{ass:anchor_stable}
A fixed anchor $\mathbf{\Theta}_{\mathrm{a}}$ is available such that
$\mathbf{A}_{\mathrm{a}}$ is Hurwitz.
\end{assumption}

The anchor matrix $\mathbf A_{\mathrm a}$ induces the fixed Lyapunov
operator
$
    \mathcal{L}_{\mathrm{a}}(\mathbf{X})
    \triangleq\mathbf{A}_{\mathrm{a}}^{\top}\mathbf{X}
    +\mathbf{X}\mathbf{A}_{\mathrm{a}}.
$
Assumption~\ref{ass:anchor_stable} ensures that
$\mathcal L_{\mathrm a}$ is invertible. For any candidate policy
$\mathbf\Theta\in\mathbb R^{s\times n}$, define its deviation from the
anchor by
$
    \mathbf{\Delta}_{\mathrm{a}}(\mathbf{\Theta})
    \triangleq\mathbf{\Theta}-\mathbf{\Theta}_{\mathrm{a}},
$
and define
\begin{equation}
\begin{aligned}
    \mathbf{Q}_{\mathrm{a}}(\mathbf{\Theta})
    \triangleq{}&\mathbf{Q}
    +\mathbf{\Theta}_{\mathrm{a}}^{\top}
     \mathbf{R}_{g}\mathbf{\Theta}_{\mathrm{a}}-\mathbf{\Delta}_{\mathrm{a}}(\mathbf{\Theta})^{\top}
      \mathbf{R}_{g}
      \mathbf{\Delta}_{\mathrm{a}}(\mathbf{\Theta}).
\end{aligned}
\label{pgre_anchor_weight}
\end{equation}

Define the associated anchored lifting
$\mathbf P_{\mathrm a}(\mathbf\Theta)$ as the unique symmetric solution
to
\begin{equation}
\begin{aligned}
    \mathbf{0}={}&\mathbf{A}_{\mathrm{a}}^{\top}
    \mathbf{P}_{\mathrm{a}}(\mathbf{\Theta})
    +\mathbf{P}_{\mathrm{a}}(\mathbf{\Theta})
    \mathbf{A}_{\mathrm{a}}
    +\mathbf{Q}_{\mathrm{a}}(\mathbf{\Theta}),
\end{aligned}
\label{pgre_anchor_lift}
\end{equation}
or, equivalently,
$
    \mathbf{P}_{\mathrm{a}}(\mathbf{\Theta})
    =-\mathcal{L}_{\mathrm{a}}^{-1}\!\left(
    \mathbf{Q}_{\mathrm{a}}(\mathbf{\Theta})\right).
$
To eliminate it,
define the fixed anchor-response operator
\begin{equation}
\begin{aligned}
    \mathcal{G}_{\mathrm{a}}(\mathbf{X})
    \triangleq
    -\mathbf{B}_{g}^{\top}
    \mathcal{L}_{\mathrm{a}}^{-1}(\mathbf{X}),
    \qquad \mathbf{X}=\mathbf{X}^{\top}.
\end{aligned}
\label{mb_anchor_response_operator}
\end{equation}
It follows directly from \eqref{pgre_anchor_lift} that
\begin{equation}
    \mathcal{G}_{\mathrm{a}}\!\left(
    \mathbf{Q}_{\mathrm{a}}(\mathbf{\Theta})\right)
    =\mathbf{B}_{g}^{\top}
    \mathbf{P}_{\mathrm{a}}(\mathbf{\Theta}).
\label{mb_anchor_response_identity}
\end{equation}
We therefore define the anchored residual, without explicitly
retaining a policy-dependent value matrix, by
\begin{equation}
\begin{aligned}
    \mathbf{\Psi}_{\mathrm{a}}(\mathbf{\Theta})
    \triangleq{}&\mathbf{R}_{g}\mathbf{\Theta}
    +\mathcal{G}_{\mathrm{a}}\!\left(
    \mathbf{Q}_{\mathrm{a}}(\mathbf{\Theta})\right).
\end{aligned}
\label{pgre_anchor_residual}
\end{equation}
The equation
$
    \mathbf{\Psi}_{\mathrm{a}}(\mathbf{\Theta})=\mathbf{0}
$
is called the \emph{anchored PGRE}.
The correction in \eqref{pgre_anchor_weight} satisfies
\begin{equation}
\begin{aligned}
    -\mathbf{Q}_{\mathrm{a}}(\mathbf{\Theta})
    ={}&\mathbf{\Theta}^{\top}\mathbf{R}_{g}\mathbf{\Theta}
    -\mathbf{Q}
    -\mathbf{\Theta}^{\top}\mathbf{R}_{g}
      \mathbf{\Theta}_{\mathrm{a}}-\mathbf{\Theta}_{\mathrm{a}}^{\top}
      \mathbf{R}_{g}\mathbf{\Theta}.
\end{aligned}
\label{pgre_anchor_expand}
\end{equation}
The two cross terms compensate for replacing $\mathbf{A}$ with the
fixed stable matrix $\mathbf{A}_{\mathrm{a}}$.

\begin{proposition}
\label{prop:anchored_pgre_equivalence}
Under Assumption~\ref{ass:anchor_stable}, the solutions of the
anchored PGRE are in one-to-one correspondence with the symmetric
solutions of the GARE through
\begin{equation}
\left\{
\begin{aligned}
\mathbf{P}
&=-\mathcal{L}_{\mathrm{a}}^{-1}\!\left(
\mathbf{Q}_{\mathrm{a}}(\mathbf{\Theta})\right),\\
\mathbf{\Theta}
&=-\mathbf{R}_{g}^{-1}
\mathbf{B}_{g}^{\top}\mathbf{P}.
\end{aligned}
\right.
\label{pgre_anchor_maps}
\end{equation}
If Assumption~\ref{ass:nonresonance_A} also holds, the solution sets
of the anchored PGRE and the PGRE in Definition~\ref{def:PGRE}
coincide.
\end{proposition}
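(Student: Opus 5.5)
The plan mirrors the proof of Theorem~\ref{thm:PGRE_GARE_equivalence}, with the open-loop Sylvester operator $\mathcal L_{\mathbf A}$ replaced by the anchored Lyapunov operator $\mathcal L_{\mathrm a}$. Under Assumption~\ref{ass:anchor_stable}, $\mathcal L_{\mathrm a}$ is invertible, so $\mathbf P_{\mathrm a}(\mathbf\Theta)$ is well defined. Since $\mathbf Q_{\mathrm a}(\mathbf\Theta)$ is symmetric, the same transposition-plus-uniqueness argument gives $\mathbf P_{\mathrm a}(\mathbf\Theta)=\mathbf P_{\mathrm a}(\mathbf\Theta)^\top$.

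The key step is an algebraic identity for any symmetric $\mathbf P$ and any $\mathbf\Theta$:
\[
\mathcal L_{\mathrm a}(\mathbf P)+\mathbf Q_{\mathrm a}(\mathbf\Theta)
=\mathbf A^\top\mathbf P+\mathbf P\mathbf A+\mathbf Q-\mathbf\Theta^\top\mathbf R_g\mathbf\Theta
+(\mathbf B_g^\top\mathbf P+\mathbf R_g\mathbf\Theta)^\top\mathbf\Theta_{\mathrm a}
+\mathbf\Theta_{\mathrm a}^\top(\mathbf B_g^\top\mathbf P+\mathbf R_g\mathbf\Theta).
\]
It follows from expanding $\mathbf A_{\mathrm a}=\mathbf A+\mathbf B_g\mathbf\Theta_{\mathrm a}$ and using \eqref{pgre_anchor_expand}. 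When the stationarity condition \eqref{pgre_actorcritic} holds, both cross terms vanish. Hence $\mathcal L_{\mathrm a}(\mathbf P)+\mathbf Q_{\mathrm a}(\mathbf\Theta)=0$ is equivalent to \eqref{pgre_fixed_sylvester}, and by Lemma~\ref{lem:game_sylvester_identity} this is equivalent to the GARE \eqref{pgre_gare_compact}.

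For the forward direction, let $\mathbf\Psi_{\mathrm a}(\mathbf\Theta)=0$ and set $\mathbf P=\mathbf P_{\mathrm a}(\mathbf\Theta)$. By \eqref{mb_anchor_response_identity}, the residual equals $\mathbf R_g\mathbf\Theta+\mathbf B_g^\top\mathbf P$, so \eqref{pgre_actorcritic} holds. Then \eqref{pgre_anchor_lift} combined with the identity yields the GARE, and $\mathbf\Theta=-\mathbf R_g^{-1}\mathbf B_g^\top\mathbf P$.

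For the converse, let $\mathbf P\in\mathcal P_{\mathrm G}$ and $\mathbf\Theta=-\mathbf R_g^{-1}\mathbf B_g^\top\mathbf P$. Stationarity holds by construction, and the GARE gives \eqref{pgre_fixed_sylvester} by Lemma~\ref{lem:game_sylvester_identity}. The identity then shows that $\mathbf P$ solves \eqref{pgre_anchor_lift}. Uniqueness of its solution, which follows from invertibility of $\mathcal L_{\mathrm a}$, gives $\mathbf P=\mathbf P_{\mathrm a}(\mathbf\Theta)$, and therefore $\mathbf\Psi_{\mathrm a}(\mathbf\Theta)=0$. The two compositions of the maps \eqref{pgre_anchor_maps} are identities for the same reasons as in Theorem~\ref{thm:PGRE_GARE_equivalence}.

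For the final claim, assume Assumption~\ref{ass:nonresonance_A} also holds. Both the anchored PGRE solution set and $\mathcal T_{\mathrm G}$ then equal the image $\{-\mathbf R_g^{-1}\mathbf B_g^\top\mathbf P:\mathbf P\in\mathcal P_{\mathrm G}\}$, by the result just shown and by Theorem~\ref{thm:PGRE_GARE_equivalence} respectively, so they coincide. One can additionally check that at any common zero $\mathbf P_{\mathrm a}(\mathbf\Theta)=\mathbf P_{\mathbf A}(\mathbf\Theta)$, since both coincide with the unique GARE solution associated with $\mathbf\Theta$.

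The main obstacle is the cross-term bookkeeping in the identity above. The signs in \eqref{pgre_anchor_expand} must match exactly so that the anchor terms combine into $(\mathbf B_g^\top\mathbf P+\mathbf R_g\mathbf\Theta)^\top\mathbf\Theta_{\mathrm a}$ plus its transpose. I would verify this first by writing $\mathbf P\mathbf B_g\mathbf\Theta_{\mathrm a}=-\mathbf\Theta^\top\mathbf R_g\mathbf\Theta_{\mathrm a}$ under stationarity.
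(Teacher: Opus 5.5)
Your proposal is correct and follows essentially the same route as the paper. The paper also sets $\mathbf P=\mathbf P_{\mathrm a}(\mathbf\Theta)$ and reads off stationarity from \eqref{mb_anchor_response_identity}. It then substitutes \eqref{pgre_anchor_expand} into \eqref{pgre_anchor_lift} to reach \eqref{pgre_fixed_sylvester}, invokes Lemma~\ref{lem:game_sylvester_identity}, reverses the argument with uniqueness for the converse, and takes the final claim from Theorem~\ref{thm:PGRE_GARE_equivalence}. Your explicit identity for $\mathcal L_{\mathrm a}(\mathbf P)+\mathbf Q_{\mathrm a}(\mathbf\Theta)$ is correct given the symmetry of $\mathbf P$ and $\mathbf R_g$, and it spells out the cross-term cancellation that the paper leaves implicit.
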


\begin{proof}
Let $\mathbf{\Psi}_{\mathrm{a}}(\mathbf{\Theta})=\mathbf{0}$ and
set
$\mathbf{P}=-\mathcal{L}_{\mathrm{a}}^{-1}
(\mathbf{Q}_{\mathrm{a}}(\mathbf{\Theta}))$. Equations
\eqref{mb_anchor_response_identity} and
\eqref{pgre_anchor_residual} give
$\mathbf{R}_{g}\mathbf{\Theta}
+\mathbf{B}_{g}^{\top}\mathbf{P}=\mathbf{0}$.
Substituting this identity and
\eqref{pgre_anchor_expand} into \eqref{pgre_anchor_lift} yields
\eqref{pgre_fixed_sylvester}. Lemma~\ref{lem:game_sylvester_identity}
therefore implies that $\mathbf{P}$ satisfies the GARE. Conversely,
let $\mathbf{P}$ be a symmetric GARE solution and set
$\mathbf{\Theta}=-\mathbf{R}_{g}^{-1}
\mathbf{B}_{g}^{\top}\mathbf{P}$. Reversing the preceding identities
shows that $\mathbf{P}$ satisfies \eqref{pgre_anchor_lift}. The
uniqueness of its solution gives
$\mathbf{P}=\mathbf{P}_{\mathrm{a}}(\mathbf{\Theta})$, and hence
$\mathbf{\Psi}_{\mathrm{a}}(\mathbf{\Theta})=\mathbf{0}$.
The final statement follows from
Theorem~\ref{thm:PGRE_GARE_equivalence}.
\end{proof}

For each joint policy $\mathbf\Theta$, define its closed-loop matrix by
$
    \mathbf{A}_{\mathbf{\Theta}}
    \triangleq\mathbf{A}+\mathbf{B}_{g}\mathbf{\Theta}.
$
The stabilizing policy set is
\begin{equation}
    \mathcal{S}_{g}
    \triangleq\left\{\mathbf{\Theta}\in\mathbb{R}^{s\times n}:
    \mathbf{A}_{\mathbf{\Theta}}\text{ is Hurwitz}\right\}.
\label{mb_stabilizing_set}
\end{equation}

The following residual identity links the anchored lifting to the
closed-loop policy-evaluation equation.
\begin{corollary}
\label{cor:anchored_pgre_identity}
Under Assumption~\ref{ass:anchor_stable}, every
$\mathbf{\Theta}\in\mathbb{R}^{s\times n}$ satisfies
\begin{equation}
\begin{aligned}
    &\mathbf{A}_{\mathbf{\Theta}}^{\top}
     \mathbf{P}_{\mathrm{a}}(\mathbf{\Theta})
    +\mathbf{P}_{\mathrm{a}}(\mathbf{\Theta})
     \mathbf{A}_{\mathbf{\Theta}}
    +\mathbf{Q}
    +\mathbf{\Theta}^{\top}\mathbf{R}_{g}\mathbf{\Theta}
    \\
    &\quad=
    \mathbf{\Delta}_{\mathrm{a}}(\mathbf{\Theta})^{\top}
    \mathbf{\Psi}_{\mathrm{a}}(\mathbf{\Theta})
    +\mathbf{\Psi}_{\mathrm{a}}(\mathbf{\Theta})^{\top}
    \mathbf{\Delta}_{\mathrm{a}}(\mathbf{\Theta}).
\end{aligned}
\label{mb_anchored_residual_identity}
\end{equation}
Consequently, at every zero of the anchored PGRE,
$\mathbf{P}_{\mathrm{a}}(\mathbf{\Theta})$ satisfies the closed-loop
policy-evaluation equation, while
$\mathbf{R}_{g}\mathbf{\Theta}
+\mathbf{B}_{g}^{\top}\mathbf{P}_{\mathrm{a}}(\mathbf{\Theta})
=\mathbf{0}$ supplies the stationarity condition.
\end{corollary}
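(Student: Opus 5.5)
The plan is a direct algebraic verification. Fix an arbitrary $\mathbf\Theta\in\mathbb R^{s\times n}$ and abbreviate $\mathbf P\triangleq\mathbf P_{\mathrm a}(\mathbf\Theta)$, $\mathbf\Delta\triangleq\mathbf\Delta_{\mathrm a}(\mathbf\Theta)$ and $\mathbf\Psi\triangleq\mathbf\Psi_{\mathrm a}(\mathbf\Theta)$. Under Assumption~\ref{ass:anchor_stable}, $\mathbf P$ is well defined and symmetric. The first step removes the anchor-response operator from the residual. By \eqref{mb_anchor_response_identity} and \eqref{pgre_anchor_residual},
$$
\mathbf\Psi=\mathbf R_g\mathbf\Theta+\mathbf B_g^\top\mathbf P,
$$
so $\mathbf\Psi$ is the ordinary stationarity residual evaluated at the anchored lifting. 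Because $\mathbf R_g$ and $\mathbf P$ are symmetric, its transpose is $\mathbf\Psi^\top=\mathbf\Theta^\top\mathbf R_g+\mathbf P\mathbf B_g$.

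Next, write the closed-loop matrix as a perturbation of the anchor, $\mathbf A_{\mathbf\Theta}=\mathbf A_{\mathrm a}+\mathbf B_g\mathbf\Delta$. Then
$$
\mathbf A_{\mathbf\Theta}^\top\mathbf P+\mathbf P\mathbf A_{\mathbf\Theta}
=\mathbf A_{\mathrm a}^\top\mathbf P+\mathbf P\mathbf A_{\mathrm a}
+\mathbf\Delta^\top\mathbf B_g^\top\mathbf P+\mathbf P\mathbf B_g\mathbf\Delta .
$$
The defining Lyapunov equation \eqref{pgre_anchor_lift} replaces the first two terms by $-\mathbf Q_{\mathrm a}(\mathbf\Theta)$. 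For this quantity I use the expansion \eqref{pgre_anchor_expand}; if needed, it can be checked directly by expanding $\mathbf\Delta^\top\mathbf R_g\mathbf\Delta$. Adding $\mathbf Q+\mathbf\Theta^\top\mathbf R_g\mathbf\Theta$ then gives
$$
-\mathbf Q_{\mathrm a}(\mathbf\Theta)+\mathbf Q+\mathbf\Theta^\top\mathbf R_g\mathbf\Theta
=2\mathbf\Theta^\top\mathbf R_g\mathbf\Theta-\mathbf\Theta^\top\mathbf R_g\mathbf\Theta_{\mathrm a}-\mathbf\Theta_{\mathrm a}^\top\mathbf R_g\mathbf\Theta
=\mathbf\Theta^\top\mathbf R_g\mathbf\Delta+\mathbf\Delta^\top\mathbf R_g\mathbf\Theta .
$$

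The key step is to regroup everything by $\mathbf\Delta$. Combining the two displays, the left-hand side of \eqref{mb_anchored_residual_identity} becomes
$$
\mathbf\Delta^\top\bigl(\mathbf R_g\mathbf\Theta+\mathbf B_g^\top\mathbf P\bigr)+\bigl(\mathbf\Theta^\top\mathbf R_g+\mathbf P\mathbf B_g\bigr)\mathbf\Delta
=\mathbf\Delta^\top\mathbf\Psi+\mathbf\Psi^\top\mathbf\Delta,
$$
which is exactly the claimed identity. The only delicate points are bookkeeping. One must use the symmetry of $\mathbf R_g$ and $\mathbf P$ to identify the second bracket with $\mathbf\Psi^\top$. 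One must also keep track of the indefinite $\mathbf R_g$; no definiteness is needed, only symmetry.

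For the consequence, set $\mathbf\Psi=\mathbf 0$. The right-hand side of \eqref{mb_anchored_residual_identity} vanishes, so $\mathbf P_{\mathrm a}(\mathbf\Theta)$ satisfies the closed-loop policy-evaluation (Lyapunov) equation
$$
\mathbf A_{\mathbf\Theta}^\top\mathbf P+\mathbf P\mathbf A_{\mathbf\Theta}+\mathbf Q+\mathbf\Theta^\top\mathbf R_g\mathbf\Theta=\mathbf 0 .
$$
At the same time, $\mathbf\Psi=\mathbf R_g\mathbf\Theta+\mathbf B_g^\top\mathbf P_{\mathrm a}(\mathbf\Theta)=\mathbf 0$ is precisely the stationarity condition \eqref{pgre_actorcritic}. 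I expect no real obstacle: the whole argument is the single regrouping above.
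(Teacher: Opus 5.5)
Your proof is correct and follows the paper's argument: both write $\mathbf A_{\mathbf\Theta}=\mathbf A_{\mathrm a}+\mathbf B_g\mathbf\Delta_{\mathrm a}(\mathbf\Theta)$, substitute the anchored Lyapunov equation \eqref{pgre_anchor_lift} together with $\mathbf B_g^\top\mathbf P_{\mathrm a}(\mathbf\Theta)=\mathbf\Psi_{\mathrm a}(\mathbf\Theta)-\mathbf R_g\mathbf\Theta$, regroup by $\mathbf\Delta_{\mathrm a}(\mathbf\Theta)$, and obtain the consequence by setting $\mathbf\Psi_{\mathrm a}(\mathbf\Theta)=\mathbf 0$. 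You make explicit the bookkeeping the paper leaves implicit, namely the expansion of $-\mathbf Q_{\mathrm a}(\mathbf\Theta)$ and the use of the symmetry of $\mathbf R_g$ and $\mathbf P_{\mathrm a}(\mathbf\Theta)$ to identify $\mathbf\Psi_{\mathrm a}(\mathbf\Theta)^\top$.
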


\begin{proof}
Let $\mathbf{\Delta}_{\mathrm{a}}
=\mathbf{\Delta}_{\mathrm{a}}(\mathbf{\Theta})$ and
$\mathbf{P}_{\mathrm{a}}
=\mathbf{P}_{\mathrm{a}}(\mathbf{\Theta})$. Since
$\mathbf{A}_{\mathbf{\Theta}}
=\mathbf{A}_{\mathrm{a}}+\mathbf{B}_{g}\mathbf{\Delta}_{\mathrm{a}}$,
expand the left-hand side of
\eqref{mb_anchored_residual_identity}. Substitution of
\eqref{pgre_anchor_lift} and
$\mathbf{B}_{g}^{\top}\mathbf{P}_{\mathrm{a}}
=\mathbf{\Psi}_{\mathrm{a}}(\mathbf{\Theta})
-\mathbf{R}_{g}\mathbf{\Theta}$ leaves exactly the right-hand side
of \eqref{mb_anchored_residual_identity}. The final statement follows
by setting $\mathbf{\Psi}_{\mathrm{a}}(\mathbf{\Theta})=\mathbf{0}$.
\end{proof}

For $\mathbf{\Theta},\mathbf{E}\in\mathbb{R}^{s\times n}$, define
the symmetric cross term
\begin{equation}
\begin{aligned}
    \mathbf{S}_{\mathrm{a}}(\mathbf{\Theta},\mathbf{E})
    \triangleq{}&\mathbf{E}^{\top}\mathbf{R}_{g}
    \mathbf{\Delta}_{\mathrm{a}}(\mathbf{\Theta})+\mathbf{\Delta}_{\mathrm{a}}(\mathbf{\Theta})^{\top}
    \mathbf{R}_{g}\mathbf{E}.
\end{aligned}
\label{mb_anchor_cross_term}
\end{equation}

\begin{lemma}
\label{lem:pgre_derivative}
Under Assumption~\ref{ass:anchor_stable},
$\mathbf{\Psi}_{\mathrm{a}}$ is Fr\'echet differentiable, and
\begin{equation}
\begin{aligned}
    D\mathbf{\Psi}_{\mathrm{a}}(\mathbf{\Theta})[\mathbf{E}]
    ={}&\mathbf{R}_{g}\mathbf{E}
    -\mathcal{G}_{\mathrm{a}}\!\left(
    \mathbf{S}_{\mathrm{a}}(\mathbf{\Theta},\mathbf{E})\right).
\end{aligned}
\label{mb_DPsi}
\end{equation}
\end{lemma}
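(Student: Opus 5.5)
The plan is to use the explicit structure of $\mathbf{\Psi}_{\mathrm{a}}$ in \eqref{pgre_anchor_residual}: it is a linear term plus a fixed linear operator applied to a quadratic matrix polynomial. First, observe that $\mathcal{L}_{\mathrm{a}}$ is a fixed linear operator on $\mathbb{S}^{n}$ (indeed on $\mathbb{R}^{n\times n}$), invertible by Assumption~\ref{ass:anchor_stable}. On a finite-dimensional space its inverse is linear and hence bounded. Consequently $\mathcal{G}_{\mathrm{a}}=-\mathbf{B}_{g}^{\top}\mathcal{L}_{\mathrm{a}}^{-1}(\cdot)$ in \eqref{mb_anchor_response_operator} is a bounded linear map from $\mathbb{S}^{n}$ to $\mathbb{R}^{s\times n}$. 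Since $\mathbf{Q}_{\mathrm{a}}(\mathbf{\Theta})$ in \eqref{pgre_anchor_weight} is symmetric for every $\mathbf{\Theta}$, because $\mathbf{R}_{g}$ is symmetric, the composition $\mathcal{G}_{\mathrm{a}}(\mathbf{Q}_{\mathrm{a}}(\mathbf{\Theta}))$ is well defined. Thus $\mathbf{\Psi}_{\mathrm{a}}$ is a polynomial map of degree at most two in the entries of $\mathbf{\Theta}$, and Fréchet differentiability follows immediately.

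To obtain the derivative, I will expand directly rather than invoke a chain rule abstractly. Fix $\mathbf{\Theta}$ and a direction $\mathbf{E}$. Since $\mathbf{\Delta}_{\mathrm{a}}(\mathbf{\Theta}+\mathbf{E})=\mathbf{\Delta}_{\mathrm{a}}(\mathbf{\Theta})+\mathbf{E}$, substituting into \eqref{pgre_anchor_weight} gives
\begin{equation*}
\mathbf{Q}_{\mathrm{a}}(\mathbf{\Theta}+\mathbf{E})
=\mathbf{Q}_{\mathrm{a}}(\mathbf{\Theta})
-\mathbf{S}_{\mathrm{a}}(\mathbf{\Theta},\mathbf{E})
-\mathbf{E}^{\top}\mathbf{R}_{g}\mathbf{E},
\end{equation*}
with $\mathbf{S}_{\mathrm{a}}$ the cross term \eqref{mb_anchor_cross_term}. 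Applying the linear map $\mathcal{G}_{\mathrm{a}}$ and adding $\mathbf{R}_{g}(\mathbf{\Theta}+\mathbf{E})$ yields
\begin{equation*}
\mathbf{\Psi}_{\mathrm{a}}(\mathbf{\Theta}+\mathbf{E})-\mathbf{\Psi}_{\mathrm{a}}(\mathbf{\Theta})
=\mathbf{R}_{g}\mathbf{E}-\mathcal{G}_{\mathrm{a}}\bigl(\mathbf{S}_{\mathrm{a}}(\mathbf{\Theta},\mathbf{E})\bigr)
-\mathcal{G}_{\mathrm{a}}\bigl(\mathbf{E}^{\top}\mathbf{R}_{g}\mathbf{E}\bigr).
\end{equation*}
The first two terms are linear in $\mathbf{E}$, since $\mathbf{S}_{\mathrm{a}}$ is linear in $\mathbf{E}$. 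The remainder satisfies $\|\mathcal{G}_{\mathrm{a}}(\mathbf{E}^{\top}\mathbf{R}_{g}\mathbf{E})\|\le\|\mathcal{G}_{\mathrm{a}}\|\,\|\mathbf{R}_{g}\|\,\|\mathbf{E}\|^{2}=o(\|\mathbf{E}\|)$. By the definition of the Fréchet derivative this identifies \eqref{mb_DPsi}. The derivative is also continuous in $\mathbf{\Theta}$, because it is affine in $\mathbf{\Theta}$.

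The only point requiring care, and the one I would check first, is the bookkeeping that the domain of $\mathcal{G}_{\mathrm{a}}$ is restricted to symmetric matrices. Both $\mathbf{S}_{\mathrm{a}}(\mathbf{\Theta},\mathbf{E})$ and $\mathbf{E}^{\top}\mathbf{R}_{g}\mathbf{E}$ are symmetric, so every application of $\mathcal{G}_{\mathrm{a}}$ above is legitimate. A second point is confirming the sign of the cross term when expanding $-(\mathbf{\Delta}_{\mathrm{a}}+\mathbf{E})^{\top}\mathbf{R}_{g}(\mathbf{\Delta}_{\mathrm{a}}+\mathbf{E})$: it is exactly $-\mathbf{S}_{\mathrm{a}}(\mathbf{\Theta},\mathbf{E})$. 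Beyond these two checks, the argument is a routine expansion.
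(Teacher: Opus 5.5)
Your proposal is correct and matches the paper's proof. Both rest on the exact expansion $\mathbf{Q}_{\mathrm{a}}(\mathbf{\Theta}+\mathbf{E})=\mathbf{Q}_{\mathrm{a}}(\mathbf{\Theta})-\mathbf{S}_{\mathrm{a}}(\mathbf{\Theta},\mathbf{E})-\mathbf{E}^{\top}\mathbf{R}_{g}\mathbf{E}$, the linearity of $\mathcal{G}_{\mathrm{a}}$, and treating the quadratic remainder as $o(\|\mathbf{E}\|)$. Your explicit bound via $\|\mathcal{G}_{\mathrm{a}}\|\,\|\mathbf{R}_{g}\|\,\|\mathbf{E}\|^{2}$ and your symmetry check on the arguments of $\mathcal{G}_{\mathrm{a}}$ make precise what the paper leaves implicit.
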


\begin{proof}
The definition of $\mathbf{Q}_{\mathrm{a}}$ gives the exact identity
\begin{equation*}
\begin{aligned}
    \mathbf{Q}_{\mathrm{a}}(\mathbf{\Theta}+\mathbf{E})
    ={}&\mathbf{Q}_{\mathrm{a}}(\mathbf{\Theta})
    -\mathbf{S}_{\mathrm{a}}(\mathbf{\Theta},\mathbf{E})
    -\mathbf{E}^{\top}\mathbf{R}_{g}\mathbf{E}.
\end{aligned}
\end{equation*}
Because $\mathcal{G}_{\mathrm{a}}$ is linear,
\eqref{pgre_anchor_residual} then yields
\begin{equation*}
\begin{aligned}
    &\mathbf{\Psi}_{\mathrm{a}}(\mathbf{\Theta}+\mathbf{E})
    -\mathbf{\Psi}_{\mathrm{a}}(\mathbf{\Theta})\\
    &\quad=\mathbf{R}_{g}\mathbf{E}
    -\mathcal{G}_{\mathrm{a}}\!\left(
    \mathbf{S}_{\mathrm{a}}(\mathbf{\Theta},\mathbf{E})\right)
    -\mathcal{G}_{\mathrm{a}}\!\left(
    \mathbf{E}^{\top}\mathbf{R}_{g}\mathbf{E}\right).
\end{aligned}
\end{equation*}
The first two terms on the right-hand side are linear in
$\mathbf{E}$, whereas the last term is of order
$\mathcal{O}(\|\mathbf{E}\|^{2})$. This proves
\eqref{mb_DPsi}.
\end{proof}

To derive an explicit actor-space representation, introduce
$
    \mathbf{M}_{\mathrm{a}}
    \triangleq\mathbf{A}_{\mathrm{a}}^{\top}
      \oplus\mathbf{A}_{\mathrm{a}}^{\top}.
$
Let $\mathbf{U}_{s,n}\in\mathbb{R}^{sn\times sn}$ denote the commutation
matrix characterized by
$
    \operatorname{vec}(\mathbf{E}^{\top})
    =\mathbf{U}_{s,n}\operatorname{vec}(\mathbf{E}).
$
The operator $\mathcal{G}_{\mathrm{a}}$ admits the matrix representation
$
    \mathbf{G}_{\mathrm{a}}
    \triangleq
    -\left(\mathbf{I}_{n}\otimes\mathbf{B}_{g}^{\top}\right)
    \mathbf{M}_{\mathrm{a}}^{-1}
    \in\mathbb{R}^{sn\times n^{2}},
$
such that
$
    \operatorname{vec}\!\left(
    \mathcal{G}_{\mathrm{a}}(\mathbf{X})\right)
    =\mathbf{G}_{\mathrm{a}}\operatorname{vec}(\mathbf{X}).
$
To represent $\mathbf S_{\mathrm a}(\mathbf\Theta,\mathbf E)$ in
vector form, define
\begin{equation}
\begin{aligned}
    \mathbf{H}_{\mathrm{a}}(\mathbf{\Theta})
    \triangleq{}&\mathbf{I}_{n}\otimes\left[
    \mathbf{\Delta}_{\mathrm{a}}(\mathbf{\Theta})^{\top}
    \mathbf{R}_{g}\right]\\
    &+\left[
    \mathbf{\Delta}_{\mathrm{a}}(\mathbf{\Theta})^{\top}
    \mathbf{R}_{g}\otimes\mathbf{I}_{n}\right]
    \mathbf{U}_{s,n}.
\end{aligned}
\label{mb_H_theta}
\end{equation}
Then, we have
$
    \operatorname{vec}\!\left(
    \mathbf{S}_{\mathrm{a}}(\mathbf{\Theta},\mathbf{E})\right)
    =\mathbf{H}_{\mathrm{a}}(\mathbf{\Theta})
    \operatorname{vec}(\mathbf{E}).
$
For $\boldsymbol{\theta}\triangleq\operatorname{vec}(\mathbf{\Theta})$,
the vectorized residual is
\begin{equation}
\begin{aligned}
    \mathbf{f}_{\mathrm{a}}(\mathbf{\Theta})
    \triangleq{}&\operatorname{vec}\!\left(
    \mathbf{\Psi}_{\mathrm{a}}(\mathbf{\Theta})\right)\\
    ={}&\left(\mathbf{I}_{n}\otimes\mathbf{R}_{g}\right)
    \boldsymbol{\theta}
    +\mathbf{G}_{\mathrm{a}}\operatorname{vec}\!\left(
    \mathbf{Q}_{\mathrm{a}}(\mathbf{\Theta})\right).
\end{aligned}
\label{mb_actor_residual_vec}
\end{equation}
Equation \eqref{mb_DPsi} gives
\begin{equation}
\begin{aligned}
    \operatorname{vec}\!\left(
    D\mathbf{\Psi}_{\mathrm{a}}(\mathbf{\Theta})[\mathbf{E}]
    \right)
    =\mathbf{J}_{\mathrm{a}}(\mathbf{\Theta})
    \operatorname{vec}(\mathbf{E}),
\end{aligned}
\label{mb_J_definition}
\end{equation}
where the Jacobian is
$
    \mathbf{J}_{\mathrm{a}}(\mathbf{\Theta})    \triangleq\mathbf{I}_{n}\otimes\mathbf{R}_{g}
    -\mathbf{G}_{\mathrm{a}}
    \mathbf{H}_{\mathrm{a}}(\mathbf{\Theta}).
$

\begin{lemma}
\label{thm:pgre_jacobian_nonsingular}
The actor-space Jacobian $\mathbf J_{\mathrm a}(\mathbf\Theta)$ is
nonsingular at every $\mathbf\Theta\in\mathcal S_g$.
\end{lemma}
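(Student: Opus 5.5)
The plan is to prove that $\mathbf J_{\mathrm a}(\mathbf\Theta)$ has trivial kernel. Because $\mathbf J_{\mathrm a}(\mathbf\Theta)$ is square of size $sn$, this is equivalent to nonsingularity. By \eqref{mb_J_definition} and Lemma~\ref{lem:pgre_derivative}, it suffices to show the following: if $\mathbf E\in\mathbb R^{s\times n}$ satisfies $D\mathbf\Psi_{\mathrm a}(\mathbf\Theta)[\mathbf E]=\mathbf 0$ with $\mathbf\Theta\in\mathcal S_g$, then $\mathbf E=\mathbf 0$. The idea is to work at the operator level rather than with Kronecker matrices. We lift the kernel equation to a homogeneous Lyapunov equation in the closed-loop matrix $\mathbf A_{\mathbf\Theta}$, in the same way that Corollary~\ref{cor:anchored_pgre_identity} moves from the anchor to the closed loop.

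First, introduce the auxiliary symmetric matrix
$\mathbf Y\triangleq\mathcal L_{\mathrm a}^{-1}\bigl(\mathbf S_{\mathrm a}(\mathbf\Theta,\mathbf E)\bigr)$.
It is well defined by Assumption~\ref{ass:anchor_stable}, and it is symmetric because $\mathbf S_{\mathrm a}$ is symmetric. It satisfies
$\mathbf A_{\mathrm a}^{\top}\mathbf Y+\mathbf Y\mathbf A_{\mathrm a}=\mathbf E^{\top}\mathbf R_g\mathbf\Delta_{\mathrm a}+\mathbf\Delta_{\mathrm a}^{\top}\mathbf R_g\mathbf E$,
where $\mathbf\Delta_{\mathrm a}=\mathbf\Delta_{\mathrm a}(\mathbf\Theta)$. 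By the definition \eqref{mb_anchor_response_operator} of $\mathcal G_{\mathrm a}$, the kernel condition reads
$\mathbf R_g\mathbf E+\mathbf B_g^{\top}\mathbf Y=\mathbf 0$.
This is a linearized stationarity condition for the pair $(\mathbf Y,\mathbf E)$.

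Second, substitute $\mathbf R_g\mathbf E=-\mathbf B_g^{\top}\mathbf Y$ into the right-hand side of the Lyapunov equation. Using the symmetry of $\mathbf R_g$ and $\mathbf Y$, this gives
$\mathbf E^{\top}\mathbf R_g\mathbf\Delta_{\mathrm a}=-\mathbf Y\mathbf B_g\mathbf\Delta_{\mathrm a}$
and
$\mathbf\Delta_{\mathrm a}^{\top}\mathbf R_g\mathbf E=-\mathbf\Delta_{\mathrm a}^{\top}\mathbf B_g^{\top}\mathbf Y$.
Moving these terms to the left and using $\mathbf A_{\mathbf\Theta}=\mathbf A_{\mathrm a}+\mathbf B_g\mathbf\Delta_{\mathrm a}$, we obtain
$\mathbf A_{\mathbf\Theta}^{\top}\mathbf Y+\mathbf Y\mathbf A_{\mathbf\Theta}=\mathbf 0$.
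Since $\mathbf\Theta\in\mathcal S_g$, the matrix $\mathbf A_{\mathbf\Theta}$ is Hurwitz. Hence $\operatorname{spec}(\mathbf A_{\mathbf\Theta})\cap\operatorname{spec}(-\mathbf A_{\mathbf\Theta})=\varnothing$, the Lyapunov operator of $\mathbf A_{\mathbf\Theta}$ is injective, and therefore $\mathbf Y=\mathbf 0$.

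Finally, $\mathbf R_g\mathbf E=-\mathbf B_g^{\top}\mathbf Y=\mathbf 0$. Because $\mathbf R_g=\operatorname{diag}(-\gamma^2\mathbf I_q,\mathbf R)$ is nonsingular (it is indefinite, but indefiniteness is irrelevant here), this forces $\mathbf E=\mathbf 0$.

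The only step needing care is the sign bookkeeping in the substitution: the minus sign in $\mathcal G_{\mathrm a}$ and the minus sign in \eqref{mb_DPsi} must combine into $+\mathbf B_g^{\top}\mathbf Y$. Only with that sign do the cross terms absorb exactly into the closed-loop matrix, rather than into $\mathbf A_{\mathrm a}-\mathbf B_g\mathbf\Delta_{\mathrm a}$. I expect this to be the main point to verify. Notably, no definiteness of $\mathbf R_g$ and no matching condition on $\mathbf B_g$ is used; closed-loop stability alone drives the argument.
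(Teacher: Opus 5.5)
Your proof is correct and follows the paper's own argument essentially step for step. The paper also sets $\mathbf{\Pi}=\mathcal{L}_{\mathrm{a}}^{-1}(\mathbf{S}_{\mathrm{a}}(\mathbf{\Theta},\mathbf{E}))$ and reads the kernel condition as $\mathbf{R}_{g}\mathbf{E}=-\mathbf{B}_{g}^{\top}\mathbf{\Pi}$; substituting into $\mathcal{L}_{\mathrm{a}}(\mathbf{\Pi})=\mathbf{S}_{\mathrm{a}}$ gives the homogeneous Lyapunov equation in $\mathbf{A}_{\mathbf{\Theta}}$, so Hurwitz stability yields $\mathbf{\Pi}=\mathbf{0}$ and nonsingularity of $\mathbf{R}_{g}$ then forces $\mathbf{E}=\mathbf{0}$, with your sign bookkeeping checking out exactly.
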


\begin{proof}
Let $\mathbf{E}$ lie in the kernel of
$D\mathbf{\Psi}_{\mathrm{a}}(\mathbf{\Theta})$, and define
$
    \mathbf{\Pi}
    \triangleq\mathcal{L}_{\mathrm{a}}^{-1}\!\left(
    \mathbf{S}_{\mathrm{a}}(\mathbf{\Theta},\mathbf{E})\right).
$
The matrix $\mathbf{\Pi}$ is symmetric because the argument of
$\mathcal{L}_{\mathrm{a}}^{-1}$ is symmetric. From
\eqref{mb_anchor_response_operator} and \eqref{mb_DPsi},
\begin{equation}
    \mathbf{R}_{g}\mathbf{E}
    =-\mathbf{B}_{g}^{\top}\mathbf{\Pi},
    \qquad
    \mathbf{E}^{\top}\mathbf{R}_{g}
    =-\mathbf{\Pi}\mathbf{B}_{g}.
\label{mb_kernel_relation}
\end{equation}
Using \eqref{mb_kernel_relation} in
$\mathcal{L}_{\mathrm{a}}(\mathbf{\Pi})
=\mathbf{S}_{\mathrm{a}}(\mathbf{\Theta},\mathbf{E})$ yields
\begin{equation}
    \mathbf{A}_{\mathbf{\Theta}}^{\top}\mathbf{\Pi}
    +\mathbf{\Pi}\mathbf{A}_{\mathbf{\Theta}}=\mathbf{0}.
\label{mb_homogeneous_closed_loop}
\end{equation}
Because $\mathbf{A}_{\mathbf{\Theta}}$ is Hurwitz,
\eqref{mb_homogeneous_closed_loop} implies
$\mathbf{\Pi}=\mathbf{0}$. Equation \eqref{mb_kernel_relation} and
the nonsingularity of $\mathbf{R}_{g}$ then give
$\mathbf{E}=\mathbf{0}$. The derivative is therefore injective and,
because its domain and codomain both have dimension $sn$, nonsingular.
\end{proof}

\begin{remark}
\label{rem:indefinite_regularity}
The proof requires only that $\mathbf R_g$ be symmetric and
nonsingular. The positive definiteness is unnecessary. Thus, every
stabilizing policy is a regular point of the anchored PGRE even though
$\mathbf R_g$ is indefinite.
\end{remark}

\subsection{Critic-Free Policy Iteration Design}
\label{subsec:critic_free_newton}

At iteration $i$, given the current policy $\mathbf\Theta_i$, the
Newton direction $\mathbf E_i\in\mathbb R^{s\times n}$ is determined
by
\begin{equation}
    D\mathbf{\Psi}_{\mathrm{a}}(\mathbf{\Theta}_{i})
    [\mathbf{E}_{i}]
    =-\mathbf{\Psi}_{\mathrm{a}}(\mathbf{\Theta}_{i}),
\label{mb_newton_operator}
\end{equation}
and the full-step update is
\begin{equation}
    \mathbf{\Theta}_{i+1}
    =\mathbf{\Theta}_{i}+\mathbf{E}_{i}.
\label{mb_full_newton_update}
\end{equation}
If $\mathbf{\Theta}_{i}\in\mathcal{S}_{g}$,
Lemma~\ref{thm:pgre_jacobian_nonsingular} guarantees a unique
Newton direction. In vector form, $
    \mathbf{J}_{\mathrm{a}}(\mathbf{\Theta}_{i})
    \operatorname{vec}(\mathbf{E}_{i})
    =-\mathbf{f}_{\mathrm{a}}(\mathbf{\Theta}_{i}).
$
Only the $sn$ entries of the policy increment are unknown. The fixed operator $\mathbf{G}_{\mathrm{a}}$ supplies
the required value sensitivity without constructing a
policy-dependent critic matrix.

The following theorem shows that the critic-free Newton step is
exactly equivalent to a simultaneous PI step.

\begin{theorem}
\label{thm:pgre_newton_equivalence}
Let $\mathbf{\Theta}_{i}\in\mathcal{S}_{g}$,
$\mathbf{E}_{i}$ satisfy \eqref{mb_newton_operator}, and define
$\mathbf{\Theta}_{i+1}$ by \eqref{mb_full_newton_update}. Introduce
the auxiliary matrix
\begin{equation}
    \widehat{\mathbf P}_{i+1}
    \triangleq\mathcal{L}_{\mathrm{a}}^{-1}\!\left(
    \mathbf{S}_{\mathrm{a}}(\mathbf{\Theta}_{i},\mathbf{E}_{i})
    -\mathbf{Q}_{\mathrm{a}}(\mathbf{\Theta}_{i})\right).
\label{mb_Pi_newton_lift}
\end{equation}
Then $\widehat{\mathbf P}_{i+1}
=\widehat{\mathbf P}_{i+1}^{\top}$ is the unique solution of
\begin{equation}
\begin{aligned}
    \mathbf{0}={}&
    \mathbf{A}_{\mathbf{\Theta}_{i}}^{\top}
    \widehat{\mathbf P}_{i+1}
    +\widehat{\mathbf P}_{i+1}
     \mathbf{A}_{\mathbf{\Theta}_{i}}
    +\mathbf{Q}
    +\mathbf{\Theta}_{i}^{\top}
     \mathbf{R}_{g}\mathbf{\Theta}_{i}.
\end{aligned}
\label{mb_compact_policy_evaluation}
\end{equation}
Moreover, the Newton update satisfies
\begin{equation}
    \mathbf{\Theta}_{i+1}
    =-\mathbf{R}_{g}^{-1}\mathbf{B}_{g}^{\top}
      \widehat{\mathbf P}_{i+1}.
\label{mb_compact_policy_improvement}
\end{equation}
Conversely, if $\widehat{\mathbf P}_{i+1}$ solves
\eqref{mb_compact_policy_evaluation} and
$\mathbf{\Theta}_{i+1}$ is defined by
\eqref{mb_compact_policy_improvement}, then
$\mathbf{E}_{i}=\mathbf{\Theta}_{i+1}-\mathbf{\Theta}_{i}$ is the
unique Newton direction in \eqref{mb_newton_operator}.
\end{theorem}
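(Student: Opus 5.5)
The plan is to unpack the Newton equation \eqref{mb_newton_operator} using Lemma~\ref{lem:pgre_derivative} and the definition \eqref{pgre_anchor_residual}, and to read off both the stationarity relation and a Lyapunov equation for $\widehat{\mathbf P}_{i+1}$. Write $\mathbf\Delta_i=\mathbf\Delta_{\mathrm a}(\mathbf\Theta_i)$, $\mathbf S_i=\mathbf S_{\mathrm a}(\mathbf\Theta_i,\mathbf E_i)$ and $\mathbf Q_i=\mathbf Q_{\mathrm a}(\mathbf\Theta_i)$.

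\emph{Stationarity.} By linearity of $\mathcal G_{\mathrm a}$, the Newton equation reads
\[
\mathbf R_g\mathbf E_i-\mathcal G_{\mathrm a}(\mathbf S_i)=-\mathbf R_g\mathbf\Theta_i-\mathcal G_{\mathrm a}(\mathbf Q_i),
\]
that is, $\mathbf R_g\mathbf\Theta_{i+1}=\mathcal G_{\mathrm a}(\mathbf S_i-\mathbf Q_i)$. Since $\mathcal G_{\mathrm a}(\mathbf X)=-\mathbf B_g^\top\mathcal L_{\mathrm a}^{-1}(\mathbf X)$, definition \eqref{mb_Pi_newton_lift} gives $\mathbf R_g\mathbf\Theta_{i+1}=-\mathbf B_g^\top\widehat{\mathbf P}_{i+1}$. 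This is \eqref{mb_compact_policy_improvement}, because $\mathbf R_g$ is nonsingular. Symmetry of $\widehat{\mathbf P}_{i+1}$ follows because $\mathbf S_i-\mathbf Q_i$ is symmetric and $\mathcal L_{\mathrm a}^{-1}$ preserves symmetry, by uniqueness of the Lyapunov solution.

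\emph{Policy evaluation.} By construction $\mathcal L_{\mathrm a}(\widehat{\mathbf P}_{i+1})=\mathbf S_i-\mathbf Q_i$. I will write $\mathbf A_{\mathbf\Theta_i}=\mathbf A_{\mathrm a}+\mathbf B_g\mathbf\Delta_i$, so that
\[
\mathbf A_{\mathbf\Theta_i}^\top\widehat{\mathbf P}_{i+1}+\widehat{\mathbf P}_{i+1}\mathbf A_{\mathbf\Theta_i}
=\mathbf S_i-\mathbf Q_i+\mathbf\Delta_i^\top\mathbf B_g^\top\widehat{\mathbf P}_{i+1}+\widehat{\mathbf P}_{i+1}\mathbf B_g\mathbf\Delta_i .
\]
Substituting $\mathbf B_g^\top\widehat{\mathbf P}_{i+1}=-\mathbf R_g\mathbf\Theta_{i+1}=-\mathbf R_g(\mathbf\Theta_i+\mathbf E_i)$, the cross terms become $-\mathbf\Delta_i^\top\mathbf R_g(\mathbf\Theta_i+\mathbf E_i)-(\mathbf\Theta_i+\mathbf E_i)^\top\mathbf R_g\mathbf\Delta_i$. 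The $\mathbf E_i$ parts cancel exactly against $\mathbf S_i$. What remains is
\[
-\mathbf Q_i-\mathbf\Delta_i^\top\mathbf R_g\mathbf\Theta_i-\mathbf\Theta_i^\top\mathbf R_g\mathbf\Delta_i .
\]
This step is the main obstacle, because the cancellation must close exactly. Using \eqref{pgre_anchor_weight} with $\mathbf\Delta_i=\mathbf\Theta_i-\mathbf\Theta_{\mathrm a}$, we have $-\mathbf Q_i=-\mathbf Q-\mathbf\Theta_{\mathrm a}^\top\mathbf R_g\mathbf\Theta_{\mathrm a}+\mathbf\Delta_i^\top\mathbf R_g\mathbf\Delta_i$. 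The remainder is then
\[
-\mathbf Q-\mathbf\Theta_{\mathrm a}^\top\mathbf R_g\mathbf\Theta_{\mathrm a}+\mathbf\Delta_i^\top\mathbf R_g\mathbf\Delta_i-\mathbf\Delta_i^\top\mathbf R_g\mathbf\Theta_i-\mathbf\Theta_i^\top\mathbf R_g\mathbf\Delta_i .
\]
Writing $\mathbf\Theta_{\mathrm a}=\mathbf\Theta_i-\mathbf\Delta_i$ in the second term and expanding, the $\mathbf\Delta_i$-terms cancel and the expression collapses to $-\mathbf Q-\mathbf\Theta_i^\top\mathbf R_g\mathbf\Theta_i$. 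This gives \eqref{mb_compact_policy_evaluation}. Uniqueness holds because $\mathbf A_{\mathbf\Theta_i}$ is Hurwitz for $\mathbf\Theta_i\in\mathcal S_g$.

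\emph{Converse.} Let $\widehat{\mathbf P}_{i+1}$ solve \eqref{mb_compact_policy_evaluation}, define $\mathbf\Theta_{i+1}$ by \eqref{mb_compact_policy_improvement}, and set $\mathbf E_i=\mathbf\Theta_{i+1}-\mathbf\Theta_i$. Running the algebra above backwards shows that $\mathcal L_{\mathrm a}(\widehat{\mathbf P}_{i+1})=\mathbf S_{\mathrm a}(\mathbf\Theta_i,\mathbf E_i)-\mathbf Q_i$. Applying $-\mathbf B_g^\top\mathcal L_{\mathrm a}^{-1}$ and using $\mathbf R_g\mathbf\Theta_{i+1}=-\mathbf B_g^\top\widehat{\mathbf P}_{i+1}$ then recovers \eqref{mb_newton_operator}. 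Uniqueness of this Newton direction follows from Lemma~\ref{thm:pgre_jacobian_nonsingular}.
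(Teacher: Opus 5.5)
Your proposal is correct and follows essentially the same route as the paper. Both unpack the Newton equation through $\mathcal G_{\mathrm a}$ to get $\mathbf R_g\mathbf\Theta_{i+1}=-\mathbf B_g^\top\widehat{\mathbf P}_{i+1}$, then rewrite $\mathcal L_{\mathrm a}(\widehat{\mathbf P}_{i+1})=\mathbf S_{\mathrm a}(\mathbf\Theta_i,\mathbf E_i)-\mathbf Q_{\mathrm a}(\mathbf\Theta_i)$ using $\mathbf A_{\mathbf\Theta_i}=\mathbf A_{\mathrm a}+\mathbf B_g\mathbf\Delta_i$, and obtain the converse by reversing these identities with uniqueness from Lemma~\ref{thm:pgre_jacobian_nonsingular}; your explicit expansion of the final cancellation simply spells out the step the paper compresses into one line.
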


\begin{proof}
Let
$\mathbf{S}_{i}
=\mathbf{S}_{\mathrm{a}}(\mathbf{\Theta}_{i},\mathbf{E}_{i})$
and
$\mathbf{Q}_{\mathrm{a},i}
=\mathbf{Q}_{\mathrm{a}}(\mathbf{\Theta}_{i})$.
Using \eqref{mb_anchor_response_operator}, the Newton equation is
equivalent to
\begin{equation*}
\begin{aligned}
    \mathbf{0}
    ={}&\mathbf{R}_{g}(\mathbf{\Theta}_{i}+\mathbf{E}_{i})
    +\mathcal{G}_{\mathrm{a}}
      (\mathbf{Q}_{\mathrm{a},i}-\mathbf{S}_{i})\\
    ={}&\mathbf{R}_{g}\mathbf{\Theta}_{i+1}
    +\mathbf{B}_{g}^{\top}\widehat{\mathbf P}_{i+1},
\end{aligned}
\end{equation*}
which proves \eqref{mb_compact_policy_improvement}. Moreover,
\eqref{mb_Pi_newton_lift} gives
\begin{equation}
    \mathcal{L}_{\mathrm{a}}(\widehat{\mathbf P}_{i+1})
    =\mathbf{S}_{i}-\mathbf{Q}_{\mathrm{a},i}.
\label{mb_sum_sylvester}
\end{equation}
Let
$\mathbf{\Delta}_{i}
=\mathbf{\Delta}_{\mathrm{a}}(\mathbf{\Theta}_{i})$.
Using
$\mathbf{A}_{\mathbf{\Theta}_{i}}
=\mathbf{A}_{\mathrm{a}}+\mathbf{B}_{g}\mathbf{\Delta}_{i}$,
\eqref{mb_compact_policy_improvement}, and
\eqref{mb_sum_sylvester}, we obtain
\begin{equation*}
\begin{aligned}
    &\mathbf{A}_{\mathbf{\Theta}_{i}}^{\top}
       \widehat{\mathbf P}_{i+1}
    +\widehat{\mathbf P}_{i+1}
       \mathbf{A}_{\mathbf{\Theta}_{i}}
    +\mathbf{Q}
    +\mathbf{\Theta}_{i}^{\top}\mathbf{R}_{g}\mathbf{\Theta}_{i}\\
    &\quad=\mathbf{S}_{i}-\mathbf{Q}_{\mathrm{a},i}
    +\mathbf{Q}
    +\mathbf{\Theta}_{i}^{\top}\mathbf{R}_{g}\mathbf{\Theta}_{i}\\
    &\qquad
    -\mathbf{\Delta}_{i}^{\top}\mathbf{R}_{g}
      \mathbf{\Theta}_{i+1}
    -\mathbf{\Theta}_{i+1}^{\top}\mathbf{R}_{g}
      \mathbf{\Delta}_{i}
    =\mathbf{0}.
\end{aligned}
\end{equation*}
The last equality follows from \eqref{pgre_anchor_weight},
$\mathbf{\Theta}_{i+1}=\mathbf{\Theta}_{i}+\mathbf{E}_{i}$,
and the definition of $\mathbf{S}_{i}$. This proves
\eqref{mb_compact_policy_evaluation}. Its symmetric solution is
unique because $\mathbf{A}_{\mathbf{\Theta}_{i}}$ is Hurwitz.

Conversely, suppose that \eqref{mb_compact_policy_evaluation} and
\eqref{mb_compact_policy_improvement} hold, and set
$\mathbf{E}_{i}=\mathbf{\Theta}_{i+1}-\mathbf{\Theta}_{i}$.
Reversing the preceding identities yields
\eqref{mb_sum_sylvester} and hence
\eqref{mb_Pi_newton_lift}. Substitution into
\eqref{mb_compact_policy_improvement} recovers
\eqref{mb_newton_operator}. Uniqueness follows from
Lemma~\ref{thm:pgre_jacobian_nonsingular}.
\end{proof}

\begin{corollary}
\label{cor:pgre_spua_sequence_equivalence}
Let $\mathbf{P}_{i}$ be generated by the offline
SPUA in \cite[Algorithm~4]{WuLuo2013}, with
$\mathbf{P}_{0}\in\mathbb{P}_0$ as specified in
\cite[Lemma~3]{WuLuo2013}. Set
\begin{equation}
    \mathbf{\Theta}_{\mathrm a}
    =\mathbf{\Theta}_0
    =-\mathbf{R}_g^{-1}\mathbf{B}_g^\top\mathbf{P}_{0}.
\label{mb_consistent_initialization}
\end{equation}
Then Algorithm~\ref{alg:critic_free_pgre} generates
the same policy sequence as the offline SPUA, namely,
\begin{equation}
    \mathbf{\Theta}_i
    =-\mathbf{R}_g^{-1}\mathbf{B}_g^\top\mathbf{P}_{i}.
\label{mb_spua_sequence_identity}
\end{equation}
Consequently, $\mathbf{\Theta}_i\to\mathbf{\Theta}^\ast$ and
$(\mathbf{L}_i,\mathbf{K}_i)\to
(\mathbf{L}^\ast,\mathbf{K}^\ast)$.
\end{corollary}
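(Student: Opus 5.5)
Corollary~\ref{cor:pgre_spua_sequence_equivalence} follows by induction on $i$, with Theorem~\ref{thm:pgre_newton_equivalence} supplying the inductive step. The first task is to recall the offline SPUA of \cite[Algorithm~4]{WuLuo2013} in the present notation. Given $\mathbf P_i$, it sets $\mathbf K_i=\mathbf R^{-1}\mathbf B_1^\top\mathbf P_i$ and $\mathbf L_i=\gamma^{-2}\mathbf B_2^\top\mathbf P_i$, that is, $\mathbf\Theta_i=-\mathbf R_g^{-1}\mathbf B_g^\top\mathbf P_i$. It then computes $\mathbf P_{i+1}$ from the Lyapunov equation
\[
\mathbf A_{\mathbf\Theta_i}^\top\mathbf P_{i+1}+\mathbf P_{i+1}\mathbf A_{\mathbf\Theta_i}+\mathbf Q+\mathbf\Theta_i^\top\mathbf R_g\mathbf\Theta_i=\mathbf 0 .
\]
I would verify this identification with the cited algorithm. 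The Newton step for the GARE at $\mathbf P_i$ reads
\[
(\mathbf A+\mathbf B_g\mathbf\Theta_i)^\top\mathbf P_{i+1}+\mathbf P_{i+1}(\mathbf A+\mathbf B_g\mathbf\Theta_i)+\mathbf Q+\mathbf P_i\mathbf B_g\mathbf R_g^{-1}\mathbf B_g^\top\mathbf P_i=\mathbf 0,
\]
and Lemma~\ref{lem:game_sylvester_identity}-type substitution gives $\mathbf P_i\mathbf B_g\mathbf R_g^{-1}\mathbf B_g^\top\mathbf P_i=\mathbf\Theta_i^\top\mathbf R_g\mathbf\Theta_i$, which recovers the Lyapunov form above. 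From \cite{WuLuo2013} I would also import two facts for $\mathbf P_0\in\mathbb P_0$: every iterate yields a Hurwitz $\mathbf A_{\mathbf\Theta_i}$, and $\mathbf P_i\to\mathbf P^\ast$.

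For the base case, \eqref{mb_consistent_initialization} gives $\mathbf\Theta_0=-\mathbf R_g^{-1}\mathbf B_g^\top\mathbf P_0$ directly. The anchor is stable because $\mathbf A_{\mathrm a}=\mathbf A_{\mathbf\Theta_0}$ is Hurwitz by the SPUA stability property at $i=0$, so Assumption~\ref{ass:anchor_stable} holds and $\mathbf\Psi_{\mathrm a}$ is well defined. For the inductive step, assume \eqref{mb_spua_sequence_identity} holds at $i$ and that $\mathbf\Theta_i\in\mathcal S_g$. Lemma~\ref{thm:pgre_jacobian_nonsingular} then gives a unique Newton direction $\mathbf E_i$. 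By Theorem~\ref{thm:pgre_newton_equivalence}, $\widehat{\mathbf P}_{i+1}$ is the unique solution of \eqref{mb_compact_policy_evaluation} and $\mathbf\Theta_{i+1}=-\mathbf R_g^{-1}\mathbf B_g^\top\widehat{\mathbf P}_{i+1}$. Equation \eqref{mb_compact_policy_evaluation} is exactly the SPUA evaluation equation with the same $\mathbf\Theta_i$, so uniqueness under the Hurwitz property forces $\widehat{\mathbf P}_{i+1}=\mathbf P_{i+1}$. This yields \eqref{mb_spua_sequence_identity} at $i+1$. Stability of $\mathbf\Theta_{i+1}$ is inherited from the SPUA stability property, which closes the induction.

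Convergence then follows from continuity of the linear map $\mathbf P\mapsto-\mathbf R_g^{-1}\mathbf B_g^\top\mathbf P$. We have $\mathbf\Theta_i\to-\mathbf R_g^{-1}\mathbf B_g^\top\mathbf P^\ast=\mathbf\Theta^\ast$, and reading off the blocks of $\mathbf\Theta=\operatorname{col}(\mathbf L,-\mathbf K)$ gives $(\mathbf L_i,\mathbf K_i)\to(\mathbf L^\ast,\mathbf K^\ast)$ by \eqref{5}--\eqref{6}.

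The main obstacle is the stability bookkeeping. Theorem~\ref{thm:pgre_newton_equivalence} requires $\mathbf\Theta_i\in\mathcal S_g$ at every step, and the paper proves nothing about this in policy space. The plan therefore relies entirely on the monotonicity and stability results of \cite{WuLuo2013}, namely $\mathbf P_0\succeq\mathbf P_{i}\succeq\mathbf P_{i+1}\succeq\mathbf P^\ast$ with each closed loop Hurwitz. I would check that these results are stated for exactly the closed-loop matrix $\mathbf A+\mathbf B_1(-\mathbf K_i)+\mathbf B_2\mathbf L_i$, and that the SPUA iterate $\mathbf P_{i+1}$ is defined by the same evaluation equation, so that uniqueness can be transferred.
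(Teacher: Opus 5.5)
Your proposal is correct and follows the paper's proof. It uses induction on $i$, identifies the SPUA Lyapunov equation and gain update with \eqref{mb_compact_policy_evaluation}--\eqref{mb_compact_policy_improvement} so that Theorem~\ref{thm:pgre_newton_equivalence} plus Lyapunov uniqueness gives the step, and imports convergence from \cite[Theorem~2]{WuLuo2013}; your explicit tracking of $\mathbf\Theta_i\in\mathcal S_g$ and of the Hurwitz anchor spells out what the paper leaves implicit.

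One caution: the GARE residual of $\mathbf P_{i+1}$ equals $-(\mathbf P_{i+1}-\mathbf P_i)\mathbf B_g\mathbf R_g^{-1}\mathbf B_g^\top(\mathbf P_{i+1}-\mathbf P_i)$, which is sign-indefinite because $\mathbf R_g$ is indefinite. So the ordering $\mathbf P_0\succeq\mathbf P_i\succeq\mathbf P_{i+1}\succeq\mathbf P^\ast$ does not hold in general and is not needed. Import only that the SPUA closed loops stay stabilizing; in fact Lemma~\ref{thm:pgre_jacobian_nonsingular} and Theorem~\ref{thm:pgre_newton_equivalence} only use that $\mathbf A_{\mathbf\Theta_i}^\top\mathbf X+\mathbf X\mathbf A_{\mathbf\Theta_i}=\mathbf 0$ forces $\mathbf X=\mathbf 0$.
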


\begin{proof}
Equation \eqref{mb_spua_sequence_identity} holds at $i=0$ by
\eqref{mb_consistent_initialization}. If it holds at iteration $i$,
the Lyapunov equation of the offline SPUA coincides with
\eqref{mb_compact_policy_evaluation}, and its policy update coincides
with \eqref{mb_compact_policy_improvement}. Theorem
\ref{thm:pgre_newton_equivalence} therefore proves the identity at
$i+1$. Induction establishes \eqref{mb_spua_sequence_identity} for
all $i$. The convergence conclusion follows from
\cite[Theorem~2]{WuLuo2013}.
\end{proof}

The model-based implementation is summarized in
Algorithm~\ref{alg:critic_free_pgre}.

\begin{algorithm}[!t]
\caption{Model-Based Critic-Free PI}
\label{alg:critic_free_pgre}
\renewcommand{\algorithmicrequire}{\textbf{Input:}}
\renewcommand{\algorithmicensure}{\textbf{Output:}}
\begin{algorithmic}[1]
\REQUIRE Stabilizing anchor $\mathbf{\Theta}_{\mathrm{a}}=\mathbf{\Theta}_{0}\in\mathcal{S}_{g}$; tolerance
$\varepsilon>0$.
\ENSURE Optimal gains $\mathbf L^{\ast}$ and $\mathbf K^{\ast}$.
\STATE Set $i\gets0$.
\WHILE{$\|\mathbf{\Theta}_{i+1}-\mathbf{\Theta}_{i}\|>\varepsilon$}
    \STATE Solve
    $\mathbf{J}_{\mathrm{a}}(\mathbf{\Theta}_{i})\mathbf{e}_{i}
    =-\mathbf{f}_{\mathrm{a}}(\mathbf{\Theta}_{i})$.
    \STATE Set $\mathbf{\Theta}_{i+1}\gets\mathbf{\Theta}_{i}
    +\operatorname{unvec}_{s\times n}(\mathbf{e}_{i})$.
    \STATE Set $i\gets i+1$.
\ENDWHILE
\STATE Recover $\mathbf{L}_{i}\gets
\mathbf{\Theta}_{i}(1\!:\!q,:)$.
\STATE Recover $\mathbf{K}_{i}\gets
-\mathbf{\Theta}_{i}(q+1\!:\!q+m,:)$.
\end{algorithmic}
\end{algorithm}

\section{Data-Driven Critic-Free RL Algorithm}
\label{sec:data_based_critic_free}

The model-based method in Section~\ref{sec:model_based_rl} requires
the system matrices to construct the anchor-response operator. We
remove this requirement using one batch of state, control, and
disturbance data. Projecting an integral system identity onto the
nullspace of the endpoint increments eliminates the value matrix and
yields an actor-only policy residual.

Let the stabilizing anchor
$\mathbf{\Theta}_{\mathrm a}
=\operatorname{col}(\mathbf L_{\mathrm a},-\mathbf K_{\mathrm a})$
satisfy Assumption~\ref{ass:anchor_stable}, and define
\begin{equation}
    \mathbf v_{\mathrm a}(t)
    \triangleq
    \operatorname{col}\!\left(
    \mathbf w(t)-\mathbf L_{\mathrm a}\mathbf x(t),
    \mathbf u(t)+\mathbf K_{\mathrm a}\mathbf x(t)\right).
\label{db_anchor_behavior}
\end{equation}
With this definition, the system dynamics become
\begin{equation}
    \dot{\mathbf x}(t)
    =\mathbf A_{\mathrm a}\mathbf x(t)
    +\mathbf B_g\mathbf v_{\mathrm a}(t).
\label{db_anchor_behavior_dynamics}
\end{equation}

Consider one trajectory over
$
    0=t_0<t_1<\cdots<t_N$,
$
    \mathcal I_j\triangleq[t_{j-1},t_j].
$
For each $\mathcal I_j$, define
\begin{subequations}
\label{db_raw_interval_data}
\begin{align}
    \mathbf X_j
    &\triangleq
    \int_{\mathcal I_j}\mathbf x(\tau)\mathbf x(\tau)^\top\,d\tau,
    \label{db_Xj}\\
    \mathbf S_{w,j}
    &\triangleq
    \int_{\mathcal I_j}\mathbf w(\tau)\mathbf x(\tau)^\top\,d\tau,
    \label{db_Swj}\\
    \mathbf S_{u,j}
    &\triangleq
    \int_{\mathcal I_j}\mathbf u(\tau)\mathbf x(\tau)^\top\,d\tau,
    \label{db_Suj}\\
    \mathbf D_j
    &\triangleq
    \mathbf x(t_j)\mathbf x(t_j)^\top
    -\mathbf x(t_{j-1})\mathbf x(t_{j-1})^\top.
    \label{db_Dj}
\end{align}
\end{subequations}
The corresponding anchor-deviation data are
\begin{equation}
\begin{aligned}
    \mathbf Z_{\mathrm a,j}
    &\triangleq
    \int_{\mathcal I_j}
    \mathbf v_{\mathrm a}(\tau)\mathbf x(\tau)^\top\,d\tau\\
    &=\operatorname{col}\!\left(
    \mathbf S_{w,j}-\mathbf L_{\mathrm a}\mathbf X_j,\,
    \mathbf S_{u,j}+\mathbf K_{\mathrm a}\mathbf X_j\right)
    \in\mathbb R^{s\times n}.
\end{aligned}
\label{db_Zaj}
\end{equation}

Let $d_n\triangleq n(n+1)/2$ and define
$
    \mathcal D_x
    \triangleq
    \left[
    \operatorname{vech}(\mathbf D_1)\ \cdots\
    \operatorname{vech}(\mathbf D_N)
    \right]
    \in\mathbb R^{d_n\times N}.
$
Let
$\mathbf N_D=[\alpha_{j\ell}]\in\mathbb R^{N\times\nu}$,
where
$\nu\triangleq N-\operatorname{rank}(\mathcal D_x)$,
have orthonormal columns spanning $\ker(\mathcal D_x)$. Thus,
\begin{equation}
    \sum_{j=1}^N\alpha_{j\ell}\mathbf D_j=\mathbf 0,
    \qquad \ell=1,\ldots,\nu.
\label{db_endpoint_cancellation}
\end{equation}
Using the same coefficients, stack the projected behavior data as
\begin{equation}
    \mathbf Z_{\mathrm a}
    \triangleq
    \begin{bmatrix}
      \operatorname{vec}(\sum_{j=1}^{N}\alpha_{j1}
      \mathbf{Z}_{\mathrm{a},j})^\top\\
      \vdots\\
      \operatorname{vec}(\sum_{j=1}^{N}\alpha_{j\nu}
      \mathbf{Z}_{\mathrm{a},j})^\top
    \end{bmatrix}
    \in\mathbb R^{\nu\times sn}.
\label{db_projected_behavior_matrix}
\end{equation}

\begin{assumption}
\label{ass:data_rank}
The data satisfy
\begin{equation}
    \operatorname{rank}(\mathbf Z_{\mathrm a})=sn.
\label{db_rank_condition}
\end{equation}
\end{assumption}

Assumption~\ref{ass:data_rank} requires $\nu\geq sn$. Since
$\operatorname{rank}(\mathcal D_x)\leq d_n$, the sample-count
condition $N\geq d_n+sn$ ensures this dimension requirement but not
\eqref{db_rank_condition}. Full column rank also requires excitation
in both components of $\mathbf v_{\mathrm a}(t)$.

For a candidate policy
$\mathbf\Theta\in\mathbb R^{s\times n}$, let
$\mathbf P=\mathbf P_{\mathrm a}(\mathbf\Theta)$ and
$\mathbf\Psi=\mathbf\Psi_{\mathrm a}(\mathbf\Theta)$. Along
\eqref{db_anchor_behavior_dynamics},
\eqref{pgre_anchor_lift} and \eqref{pgre_anchor_residual} give
\begin{equation}
\begin{aligned}
    \frac{d}{dt}
    \left(\mathbf x^\top\mathbf P\mathbf x\right)
    ={}&
    -\mathbf x^\top\mathbf Q_{\mathrm a}(\mathbf\Theta)\mathbf x+2\mathbf v_{\mathrm a}^\top
    \left[\mathbf\Psi-\mathbf R_g\mathbf\Theta\right]\mathbf x.
\end{aligned}
\label{db_value_derivative}
\end{equation}
Integrating over $\mathcal I_j$ yields
\begin{equation}
\begin{aligned}
    &\operatorname{tr}\!\left(
    \mathbf Q_{\mathrm a}(\mathbf\Theta)\mathbf X_j\right)
    +2\operatorname{tr}\!\left(
    \mathbf Z_{\mathrm a,j}^\top\mathbf R_g\mathbf\Theta\right)\\
    &\quad=
    -\operatorname{tr}(\mathbf P\mathbf D_j)
    +2\operatorname{tr}\!\left(
    \mathbf Z_{\mathrm a,j}^\top\mathbf\Psi\right).
\end{aligned}
\label{db_interval_identity}
\end{equation}
Define the data residual
$\mathbf\Phi_{\mathrm a}:\mathbb R^{s\times n}\to\mathbb R^\nu$
componentwise by
\begin{equation}
\begin{aligned}
    \bigl[\mathbf\Phi_{\mathrm a}(\mathbf\Theta)\bigr]_\ell
    \triangleq{}&
    \operatorname{tr}(
    \mathbf Q_{\mathrm a}(\mathbf\Theta)
    \sum_{j=1}^{N}\alpha_{j\ell}\mathbf X_j)\\
    &+2\operatorname{tr}(
    (\sum_{j=1}^{N}\alpha_{j\ell}
    \mathbf Z_{\mathrm a,j})^\top
    \mathbf R_g\mathbf\Theta).
\end{aligned}
\label{db_data_residual}
\end{equation}

\begin{lemma}
\label{thm:data_residual_equivalence}
Under Assumption~\ref{ass:anchor_stable},
\begin{equation}
    \mathbf{\Phi}_{\mathrm a}(\mathbf{\Theta})
    =
    2\mathbf Z_{\mathrm a}
    \operatorname{vec}\!\left(
    \mathbf{\Psi}_{\mathrm a}(\mathbf{\Theta})\right)
\label{dd_residual_relation}
\end{equation}
for every
$\mathbf{\Theta}\in\mathbb R^{s\times n}$.
If Assumption~\ref{ass:data_rank} also holds, then
$\mathbf{\Phi}_{\mathrm a}$ and
$\mathbf{\Psi}_{\mathrm a}$ have the same zero set.
Consequently, the zeros of the data residual are in one-to-one
correspondence with the symmetric GARE solutions through
Proposition~\ref{prop:anchored_pgre_equivalence}.
\end{lemma}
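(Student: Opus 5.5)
The argument uses the interval identity \eqref{db_interval_identity}, combined across intervals with the endpoint-cancelling coefficients, and then a rank argument for the zero sets. First I will confirm \eqref{db_value_derivative}. Along \eqref{db_anchor_behavior_dynamics} we have $\frac{d}{dt}(\mathbf x^\top\mathbf P\mathbf x)=\mathbf x^\top(\mathbf A_{\mathrm a}^\top\mathbf P+\mathbf P\mathbf A_{\mathrm a})\mathbf x+2\mathbf v_{\mathrm a}^\top\mathbf B_g^\top\mathbf P\mathbf x$. The first term equals $-\mathbf x^\top\mathbf Q_{\mathrm a}(\mathbf\Theta)\mathbf x$ by \eqref{pgre_anchor_lift}. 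For the second term, \eqref{mb_anchor_response_identity} and \eqref{pgre_anchor_residual} give $\mathbf B_g^\top\mathbf P=\mathbf\Psi-\mathbf R_g\mathbf\Theta$. Assumption~\ref{ass:anchor_stable} is used only here, to guarantee that $\mathbf P_{\mathrm a}(\mathbf\Theta)$ exists and is unique for every $\mathbf\Theta$.

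Next I integrate over $\mathcal I_j$, using $\int\mathbf x^\top\mathbf M\mathbf x=\operatorname{tr}(\mathbf M\mathbf X_j)$ and $\int\mathbf v_{\mathrm a}^\top\mathbf N\mathbf x=\operatorname{tr}(\mathbf Z_{\mathrm a,j}^\top\mathbf N)$. This produces \eqref{db_interval_identity}. I then multiply by $\alpha_{j\ell}$ and sum over $j$. By linearity and \eqref{db_endpoint_cancellation}, the term $\operatorname{tr}(\mathbf P\sum_j\alpha_{j\ell}\mathbf D_j)$ vanishes. The left side becomes exactly $[\mathbf\Phi_{\mathrm a}(\mathbf\Theta)]_\ell$, and the right side is $2\operatorname{tr}\bigl((\sum_j\alpha_{j\ell}\mathbf Z_{\mathrm a,j})^\top\mathbf\Psi\bigr)$. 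The identity $\operatorname{tr}(\mathbf Y^\top\mathbf\Psi)=\operatorname{vec}(\mathbf Y)^\top\operatorname{vec}(\mathbf\Psi)$ identifies this with the $\ell$-th row of $2\mathbf Z_{\mathrm a}\operatorname{vec}(\mathbf\Psi)$, which proves \eqref{dd_residual_relation}.

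For the zero-set claim, if $\mathbf\Psi_{\mathrm a}(\mathbf\Theta)=\mathbf 0$ then $\mathbf\Phi_{\mathrm a}(\mathbf\Theta)=\mathbf 0$ directly from \eqref{dd_residual_relation}. Conversely, Assumption~\ref{ass:data_rank} says that $\mathbf Z_{\mathrm a}\in\mathbb R^{\nu\times sn}$ has full column rank, so it is injective. Hence $\mathbf\Phi_{\mathrm a}(\mathbf\Theta)=\mathbf 0$ forces $\operatorname{vec}(\mathbf\Psi_{\mathrm a}(\mathbf\Theta))=\mathbf 0$. The one-to-one correspondence with symmetric GARE solutions then follows by composing with Proposition~\ref{prop:anchored_pgre_equivalence}.

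The only delicate point is bookkeeping. The vec/trace pairing must match the row ordering in \eqref{db_projected_behavior_matrix}, and the endpoint cancellation must be legitimate even though $\mathbf P$ depends on $\mathbf\Theta$. It is, because $\mathbf P$ is constant in time and the coefficients $\alpha_{j\ell}$ depend only on the data. One caveat: $\operatorname{tr}(\mathbf P\mathbf D_j)=\operatorname{svec}(\mathbf P)^\top\operatorname{vech}(\mathbf D_j)$ by the definition of $\operatorname{svec}$, so $\ker(\mathcal D_x)$ cancels the $\mathbf D_j$ combinations exactly; I use \eqref{db_endpoint_cancellation} directly, which already states this.
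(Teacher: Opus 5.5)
Your proposal is correct and follows the paper's proof. You weight \eqref{db_interval_identity} by $\alpha_{j\ell}$ and use \eqref{db_endpoint_cancellation} to remove the $\operatorname{tr}(\mathbf P\mathbf D_j)$ term, then identify each row with $2\mathbf Z_{\mathrm a}\operatorname{vec}(\mathbf\Psi_{\mathrm a}(\mathbf\Theta))$ through the trace--vec pairing, and finish with injectivity of $\mathbf Z_{\mathrm a}$ under Assumption~\ref{ass:data_rank} and Proposition~\ref{prop:anchored_pgre_equivalence}. The only difference is that you re-derive \eqref{db_value_derivative} and \eqref{db_interval_identity} inside the proof, whereas the paper establishes them just before the lemma.
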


\begin{proof}
Multiply \eqref{db_interval_identity} by $\alpha_{j\ell}$ and sum
over $j$. Equation \eqref{db_endpoint_cancellation} cancels the
endpoint term and gives
\begin{equation*}
    \bigl[\mathbf\Phi_{\mathrm a}(\mathbf\Theta)\bigr]_\ell
    =
    2\operatorname{tr}((\sum_{j=1}^{N}\alpha_{j\ell}
      \mathbf{Z}_{\mathrm{a},j})^{\top}
    \mathbf\Psi_{\mathrm a}(\mathbf\Theta)).
\end{equation*}
Stacking these identities and using the trace--vectorization identity
yields \eqref{dd_residual_relation}.
Suppose now that
$\mathbf\Phi_{\mathrm a}(\mathbf\Theta)=\mathbf0$.
Under Assumption~\ref{ass:data_rank},
$\mathbf Z_{\mathrm a}$ has full column rank. Therefore,
\eqref{dd_residual_relation} implies
$\mathbf\Psi_{\mathrm a}(\mathbf\Theta)=\mathbf0$.
Conversely,
$\mathbf\Psi_{\mathrm a}(\mathbf\Theta)=\mathbf0$
immediately gives
$\mathbf\Phi_{\mathrm a}(\mathbf\Theta)=\mathbf0$
from \eqref{dd_residual_relation}.
Thus, $\mathbf\Phi_{\mathrm a}$ and
$\mathbf\Psi_{\mathrm a}$ have the same zero set.
The final claim follows from
Proposition~\ref{prop:anchored_pgre_equivalence}.
\end{proof}

\begin{remark}
The orthonormal basis $\mathbf N_D$ of $\ker(\mathcal D_x)$ is not
unique. If
$\widetilde{\mathbf N}_D=\mathbf N_D\mathbf U$
for any orthogonal $\mathbf U$, the projected quantities are changed only
by the left orthogonal transformation. Consequently, the
rank condition in Assumption~\ref{ass:data_rank}, the zero set of the
projected residual, and the uniquely recovered actor are independent of
the particular nullspace basis. Thus, actor informativity is a property
of the projected data subspace rather than of a specific numerical basis.
\end{remark}

Let
$\mathbf\Delta_i\triangleq
\mathbf\Theta_i-\mathbf\Theta_{\mathrm a}$.
For $\mathbf E\in\mathbb R^{s\times n}$,
$
    D\mathbf Q_{\mathrm a}(\mathbf\Theta_i)[\mathbf E]
    =
    -\mathbf E^\top\mathbf R_g\mathbf\Delta_i
    -\mathbf\Delta_i^\top\mathbf R_g\mathbf E.
$
Because $\sum_{j=1}^{N}\alpha_{j\ell}\mathbf X_j$ is symmetric,
\begin{equation}
\begin{aligned}
    D\bigl[\mathbf\Phi_{\mathrm a}(\mathbf\Theta_i)\bigr]_\ell[\mathbf E]
    =
    2\operatorname{tr}(
    [
    \sum_{j=1}^{N}\alpha_{j\ell}\mathbf Z_{\mathrm a,j}
    \!\!-\mathbf\Delta_i
    \!\!\sum_{j=1}^{N}\alpha_{j\ell}\mathbf X_j
    ]^\top
    \mathbf R_g\mathbf E).
\end{aligned}
\label{db_DPhi_component}
\end{equation}
Define
$\mathbf\Gamma_i\in\mathbb R^{\nu\times sn}$ by
\begin{equation}
    \bigl[\mathbf\Gamma_i\bigr]_{\ell,:}
    \triangleq
    2\operatorname{vec}(
    \mathbf R_g[
    \sum_{j=1}^{N}\alpha_{j\ell}\mathbf Z_{\mathrm a,j}
    -\mathbf\Delta_i
    \sum_{j=1}^{N}\alpha_{j\ell}\mathbf X_j
    ])^\top.
\label{db_Gamma}
\end{equation}
Then
$
    D\mathbf\Phi_{\mathrm a}(\mathbf\Theta_i)[\mathbf E]
    =\mathbf\Gamma_i\operatorname{vec}(\mathbf E).
$
The actor-only Newton direction
$\mathbf e_i=\operatorname{vec}(\mathbf\Theta_{i+1}-\mathbf\Theta_i)$
satisfies
\begin{equation}
    \mathbf\Gamma_i\mathbf e_i
    =-\mathbf\Phi_{\mathrm a}(\mathbf\Theta_i).
\label{db_actor_newton}
\end{equation}
This system is consistent even when $\nu>sn$.
The next policy solves
\begin{equation}
    \mathbf\Gamma_i\operatorname{vec}(\mathbf\Theta_{i+1})
    =\mathbf b_i,
\label{db_direct_theta}
\end{equation}
where
$
    \bigl[\mathbf b_i\bigr]_\ell
    \triangleq
    -\operatorname{tr}(
    [\mathbf Q+\mathbf\Theta_i^\top\mathbf R_g\mathbf\Theta_i]
    \sum_{j=1}^{N}\alpha_{j\ell}\mathbf X_j).
$
Equation \eqref{db_direct_theta} follows by substituting
$\mathbf e_i=\operatorname{vec}(\mathbf\Theta_{i+1}-\mathbf\Theta_i)$
into \eqref{db_actor_newton} and using the symmetry of
$\mathbf R_g$ and
$\sum_{j=1}^{N}\alpha_{j\ell}\mathbf X_j$.
For implementation, write
$\mathbf\Theta_i=\operatorname{col}(\mathbf L_i,-\mathbf K_i)$ and
define
\begin{subequations}
\label{db_iteration_data}
\begin{align}
    \mathbf M_{w,i,\ell}
    &\triangleq
    \sum_{j=1}^{N}\alpha_{j\ell}\mathbf S_{w,j}
    -\mathbf L_i
    \sum_{j=1}^{N}\alpha_{j\ell}\mathbf X_j,
    \label{db_Mw}\\
    \mathbf M_{u,i,\ell}
    &\triangleq
    \sum_{j=1}^{N}\alpha_{j\ell}\mathbf S_{u,j}
    +\mathbf K_i
    \sum_{j=1}^{N}\alpha_{j\ell}\mathbf X_j,
    \label{db_Mu}\\
    \mathbf Q_i
    &\triangleq
    \mathbf Q+\mathbf K_i^\top\mathbf R\mathbf K_i
    -\gamma^2\mathbf L_i^\top\mathbf L_i.
    \label{db_Qi}
\end{align}
\end{subequations}
Then \eqref{db_direct_theta} is equivalent to
\begin{equation}
\begin{aligned}
    &2\gamma^2\operatorname{tr}\!\left(
    \mathbf M_{w,i,\ell}^\top\mathbf L_{i+1}\right)
    +2\operatorname{tr}\!\left(
    \mathbf M_{u,i,\ell}^\top\mathbf R\mathbf K_{i+1}\right)\\
    &\qquad=
    \operatorname{tr}(
    \mathbf Q_i
    \sum_{j=1}^{N}\alpha_{j\ell}\mathbf X_j),
    \qquad \ell=1,\ldots,\nu.
\end{aligned}
\label{db_gain_trace_equation}
\end{equation}
Define
$
    \boldsymbol\eta_{i+1}
    \triangleq
    \operatorname{col}\!\left(
    \operatorname{vec}(\mathbf L_{i+1}),
    \operatorname{vec}(\mathbf K_{i+1})\right)
    \in\mathbb R^{sn},
$
and construct
$\mathbf\Lambda_i\in\mathbb R^{\nu\times sn}$ and
$\mathbf y_i\in\mathbb R^\nu$ by
\begin{subequations}
\label{db_Lambda_y}
\begin{align}
    \bigl[\mathbf\Lambda_i\bigr]_{\ell,:}
    &\triangleq
    2
    \begin{bmatrix}
      \gamma^2\operatorname{vec}(\mathbf M_{w,i,\ell})^\top&
      \operatorname{vec}(\mathbf R\mathbf M_{u,i,\ell})^\top
    \end{bmatrix},
    \label{db_Lambda}\\
    \bigl[\mathbf y_i\bigr]_\ell
    &\triangleq
    \operatorname{tr}(
    \mathbf Q_i
    \sum_{j=1}^{N}\alpha_{j\ell}\mathbf X_j).
    \label{db_yi}
\end{align}
\end{subequations}
The gain update is
\begin{equation}
    \mathbf\Lambda_i\boldsymbol\eta_{i+1}=\mathbf y_i.
\label{db_gain_regression}
\end{equation}

\begin{lemma}
\label{thm:data_update_equivalence}
Suppose that Assumptions~\ref{ass:anchor_stable} and
\ref{ass:data_rank} hold and
$\mathbf\Theta_i\in\mathcal S_g$. Then
$\mathbf\Gamma_i$ and $\mathbf\Lambda_i$ have full column rank.
Equations \eqref{db_actor_newton}, \eqref{db_direct_theta}, and
\eqref{db_gain_regression} have unique, mutually equivalent
solutions and produce the model-based Newton policy
$\mathbf\Theta_{i+1}$.
\end{lemma}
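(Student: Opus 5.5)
The plan is to relate the data Jacobian $\mathbf\Gamma_i$ to the model Jacobian $\mathbf J_{\mathrm a}(\mathbf\Theta_i)$ through the projected behavior matrix $\mathbf Z_{\mathrm a}$, and then transfer full rank from Lemma~\ref{thm:pgre_jacobian_nonsingular} and Assumption~\ref{ass:data_rank}. Differentiating the identity \eqref{dd_residual_relation} of Lemma~\ref{thm:data_residual_equivalence} in the direction $\mathbf E$ gives, since $\mathbf Z_{\mathrm a}$ does not depend on $\mathbf\Theta$,
\begin{equation*}
    D\mathbf\Phi_{\mathrm a}(\mathbf\Theta_i)[\mathbf E]
    =2\mathbf Z_{\mathrm a}\operatorname{vec}\!\left(
    D\mathbf\Psi_{\mathrm a}(\mathbf\Theta_i)[\mathbf E]\right),
\end{equation*}
and therefore $\mathbf\Gamma_i=2\mathbf Z_{\mathrm a}\mathbf J_{\mathrm a}(\mathbf\Theta_i)$. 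Since $\mathbf\Theta_i\in\mathcal S_g$, the factor $\mathbf J_{\mathrm a}(\mathbf\Theta_i)$ is a nonsingular $sn\times sn$ matrix by Lemma~\ref{thm:pgre_jacobian_nonsingular}, and $\mathbf Z_{\mathrm a}$ has full column rank $sn$ by Assumption~\ref{ass:data_rank}. The product therefore has full column rank.

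Next I will show that the three linear systems are consistent and uniquely solvable, and that their solutions agree. Let $\mathbf e^\star$ be the unique model-based Newton direction, $\mathbf J_{\mathrm a}(\mathbf\Theta_i)\mathbf e^\star=-\mathbf f_{\mathrm a}(\mathbf\Theta_i)$. Multiplying by $2\mathbf Z_{\mathrm a}$ and using \eqref{dd_residual_relation} gives $\mathbf\Gamma_i\mathbf e^\star=-\mathbf\Phi_{\mathrm a}(\mathbf\Theta_i)$, so \eqref{db_actor_newton} is consistent even when $\nu>sn$. Full column rank makes its solution unique, so it equals $\mathbf e^\star$, and by Theorem~\ref{thm:pgre_newton_equivalence} it yields the model-based $\mathbf\Theta_{i+1}$. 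The passage from \eqref{db_actor_newton} to \eqref{db_direct_theta} is the affine substitution $\mathbf e_i=\operatorname{vec}(\mathbf\Theta_{i+1})-\boldsymbol\theta_i$. This requires checking that $\mathbf b_i=\mathbf\Gamma_i\boldsymbol\theta_i-\mathbf\Phi_{\mathrm a}(\mathbf\Theta_i)$, which is a routine trace computation using \eqref{pgre_anchor_expand} and the symmetry of $\mathbf R_g$ and $\sum_j\alpha_{j\ell}\mathbf X_j$. The coefficient matrix is unchanged, so uniqueness carries over.

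For \eqref{db_gain_regression}, the key observation is that $\mathbf\Delta_i$ and $\mathbf\Theta_i$ differ by the anchor. Hence $\sum_j\alpha_{j\ell}\mathbf Z_{\mathrm a,j}-\mathbf\Delta_i\sum_j\alpha_{j\ell}\mathbf X_j=\operatorname{col}(\mathbf M_{w,i,\ell},\mathbf M_{u,i,\ell})$, because the anchor terms in \eqref{db_Zaj} cancel against those in $\mathbf\Delta_i$. Writing $\mathbf R_g\mathbf\Theta_{i+1}=\operatorname{col}(-\gamma^2\mathbf L_{i+1},-\mathbf R\mathbf K_{i+1})$, each row of \eqref{db_direct_theta} becomes, after multiplying by $-1$, exactly \eqref{db_gain_trace_equation}; note also that $\mathbf Q_i=\mathbf Q+\mathbf\Theta_i^\top\mathbf R_g\mathbf\Theta_i$. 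Consequently $\mathbf\Lambda_i=-\mathbf\Gamma_i\mathbf T$, where $\mathbf T$ is the invertible signed row permutation sending $\boldsymbol\eta_{i+1}$ to $\operatorname{vec}(\mathbf\Theta_{i+1})$: vec of the stacked $\operatorname{col}(\mathbf L,-\mathbf K)$ interleaves columns, so $\mathbf T$ is a permutation combined with sign flips. Therefore $\mathbf\Lambda_i$ also has full column rank, and $\mathbf y_i=-\mathbf b_i$.

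The main obstacle is the bookkeeping in this last step and in the identity for $\mathbf b_i$: matching the signs and vec orderings between $\operatorname{vec}(\mathbf\Theta)$ and $\boldsymbol\eta$, and confirming that the anchor contributions cancel. The rank transfer itself is immediate once $\mathbf\Gamma_i=2\mathbf Z_{\mathrm a}\mathbf J_{\mathrm a}(\mathbf\Theta_i)$ is in hand. If the direct differentiation of \eqref{dd_residual_relation} feels too quick, I would instead verify this factorization entrywise, comparing \eqref{db_DPhi_component} with $2\operatorname{tr}\bigl((\sum_j\alpha_{j\ell}\mathbf Z_{\mathrm a,j})^\top D\mathbf\Psi_{\mathrm a}(\mathbf\Theta_i)[\mathbf E]\bigr)$ via \eqref{mb_DPsi}.
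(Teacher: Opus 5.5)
Your proposal is correct and follows essentially the same route as the paper. Both arguments use the factorization $\mathbf\Gamma_i=2\mathbf Z_{\mathrm a}\mathbf J_{\mathrm a}(\mathbf\Theta_i)$ obtained by differentiating \eqref{dd_residual_relation}, transfer rank from Lemma~\ref{thm:pgre_jacobian_nonsingular} and Assumption~\ref{ass:data_rank}, identify \eqref{db_actor_newton} with the model-based Newton equation, and pass to \eqref{db_direct_theta} and \eqref{db_gain_regression} by the affine substitution and a signed-permutation change of coordinates. Your explicit checks that $\mathbf b_i=\mathbf\Gamma_i\boldsymbol\theta_i-\mathbf\Phi_{\mathrm a}(\mathbf\Theta_i)$, that the anchor terms cancel to give $\operatorname{col}(\mathbf M_{w,i,\ell},\mathbf M_{u,i,\ell})$, and that $\mathbf\Lambda_i=-\mathbf\Gamma_i\mathbf T$ with $\mathbf y_i=-\mathbf b_i$ fill in bookkeeping that the paper leaves implicit.
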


\begin{proof}
Differentiating \eqref{dd_residual_relation} gives
$
    D\mathbf\Phi_{\mathrm a}(\mathbf\Theta_i)
    =
    2\mathbf Z_{\mathrm a}
    D\mathbf\Psi_{\mathrm a}(\mathbf\Theta_i)
$
after vectorization. Lemma~\ref{thm:pgre_jacobian_nonsingular}
and Assumption~\ref{ass:data_rank} therefore imply
$\operatorname{rank}(\mathbf\Gamma_i)=sn$. Moreover,
\eqref{db_actor_newton} reduces to
$
    D\mathbf\Psi_{\mathrm a}(\mathbf\Theta_i)[\mathbf E_i]
    =-\mathbf\Psi_{\mathrm a}(\mathbf\Theta_i),
$
so it produces the model-based Newton update. Equation
\eqref{db_direct_theta} is an algebraic rearrangement of
\eqref{db_actor_newton}. Finally,
$\operatorname{vec}(\mathbf\Theta_{i+1})$ and
$\boldsymbol\eta_{i+1}$ are related by a fixed signed-permutation
matrix. Thus, \eqref{db_gain_regression} is a bijective change of
coordinates of \eqref{db_direct_theta}, which gives
$\operatorname{rank}(\mathbf\Lambda_i)=sn$ and proves the claim.
\end{proof}

\begin{algorithm}[H]
\caption{Data-Driven Critic-Free PI}
\label{alg:data_critic_free_pgre}
\renewcommand{\algorithmicrequire}{\textbf{Input:}}
\renewcommand{\algorithmicensure}{\textbf{Output:}}
\begin{algorithmic}[1]
\REQUIRE Stabilizing anchor
$(\mathbf L_{\mathrm a},\mathbf K_{\mathrm a})$;
$\{\mathbf x(t),\mathbf u(t),\mathbf w(t)\}_{t=0}^{t_N}$;
tolerance $\varepsilon>0$.
\ENSURE Optimal gains $\mathbf L^{\ast}$ and $\mathbf K^{\ast}$.
\STATE Compute the interval data in \eqref{db_raw_interval_data}.
\STATE Form $\mathcal D_x$ and compute an orthonormal basis
$\mathbf N_D$ of $\ker(\mathcal D_x)$.
\STATE Set $i\gets0$ and
$\mathbf\Theta_0\gets\operatorname{col}(\mathbf L_0,-\mathbf K_0)$.
\WHILE{$\|\mathbf\Theta_{i+1}-\mathbf\Theta_{i}\|>\varepsilon$}
    \STATE Compute $\boldsymbol\eta_{i+1}$ from
    \eqref{db_gain_regression}.
    \STATE Recover $\mathbf L_{i+1}$ and $\mathbf K_{i+1}$, and set
    $\mathbf\Theta_{i+1}\gets
    \operatorname{col}(\mathbf L_{i+1},-\mathbf K_{i+1})$.
    \STATE Set $i\gets i+1$.
\ENDWHILE
\end{algorithmic}
\end{algorithm}

\section{Data Informativity Analysis}
\label{sec:data_informativity}

This section characterizes when the collected data uniquely determine
the policy update without identifying the value matrix. 

Integrating \eqref{mb_compact_policy_evaluation} over
$\mathcal I_j$ and using \eqref{mb_compact_policy_improvement} gives
\begin{equation}
\begin{aligned}
    &2\gamma^2\operatorname{tr}\!\left[
      (\mathbf S_{w,j}-\mathbf L_i\mathbf X_j)^{\top}
      \mathbf L_{i+1}\right]\\
    &\qquad
      +2\operatorname{tr}\!\left[
      (\mathbf S_{u,j}+\mathbf K_i\mathbf X_j)^{\top}
      \mathbf R\mathbf K_{i+1}\right]\\
    &\qquad=
      \operatorname{tr}(\mathbf Q_i\mathbf X_j)
      +\operatorname{tr}(\widehat{\mathbf P}_{i+1}\mathbf D_j).
\end{aligned}
\label{di_unprojected_identity}
\end{equation}

Define
$
    \widehat{\mathbf p}_{i+1}
    \triangleq
    \operatorname{svec}(\widehat{\mathbf P}_{i+1})
    \in\mathbb R^{d_n},
$
and
$
    \mathbf C
    \triangleq
    \mathcal D_x^{\top}
    \in\mathbb R^{N\times d_n}.
$
Define the unprojected actor regressor
$\mathbf H_i\in\mathbb R^{N\times sn}$ and the raw response
$\mathbf y_i^{\mathrm r}\in\mathbb R^N$ by
\begin{subequations}
\label{di_unprojected_design}
\begin{align}
    [\mathbf H_i]_{j,:}
    \triangleq
    &2[
      \gamma^2
      \operatorname{vec}\!\left(
      \mathbf S_{w,j}-\mathbf L_i\mathbf X_j\right)^{\top}\notag\\
      &
      \operatorname{vec}\!\left[
      \mathbf R(\mathbf S_{u,j}+\mathbf K_i\mathbf X_j)
      \right]^{\top}
],
    \label{di_Hi}\\
    [\mathbf y_i^{\mathrm r}]_j
    &\triangleq
    \operatorname{tr}(\mathbf Q_i\mathbf X_j).
    \label{di_raw_y}
\end{align}
\end{subequations}
Then \eqref{di_unprojected_identity} is equivalent to the
actor--critic regression
\begin{equation}
    \mathbf H_i\boldsymbol\eta_{i+1}
    -\mathbf C\widehat{\mathbf p}_{i+1}
    =
    \mathbf y_i^{\mathrm r},
\label{di_unprojected_regression}
\end{equation}
where $\widehat{\mathbf p}_{i+1}$ is a nuisance parameter.
Since
$\ker(\mathbf C^{\top})=\ker(\mathcal D_x)$,
the definitions of $\mathbf N_D$, $\mathbf\Lambda_i$, and
$\mathbf y_i$ give
\begin{equation}
    \mathbf N_D^{\top}\mathbf C=\mathbf0,
    \qquad
    \mathbf\Lambda_i=\mathbf N_D^{\top}\mathbf H_i,
    \qquad
    \mathbf y_i=\mathbf N_D^{\top}\mathbf y_i^{\mathrm r}.
\label{di_projection_relations}
\end{equation}
Thus, left multiplication by $\mathbf N_D^{\top}$ eliminates the
critic term from \eqref{di_unprojected_regression} and yields the
actor-only regression \eqref{db_gain_regression}.

\begin{theorem}
\label{thm:di_lossless_elimination}
At iteration $i$, eliminating the nuisance critic parameter from
\eqref{di_unprojected_regression} by left multiplication with
$\mathbf N_D^{\top}$ is lossless with respect to the actor variable:
\begin{equation}
\begin{aligned}
    &\left\{
      \boldsymbol\eta\in\mathbb R^{sn}:\,
      \mathbf H_i\boldsymbol\eta-\mathbf C\mathbf p
      =\mathbf y_i^{\mathrm r}
      \text{ for some }\mathbf p\in\mathbb R^{d_n}
      \right\}\\
    &\qquad=
      \left\{
      \boldsymbol\eta\in\mathbb R^{sn}:\,
      \mathbf\Lambda_i\boldsymbol\eta=\mathbf y_i
      \right\}.
\end{aligned}
\label{di_actor_set_equivalence}
\end{equation}
Consequently, the policy update $\boldsymbol\eta_{i+1}$ is uniquely
determined by the collected data if and only if
\begin{equation}
    \operatorname{rank}(\mathbf\Lambda_i)=sn.
\label{di_actor_rank_condition}
\end{equation}
Under \eqref{di_actor_rank_condition}, the critic parameter
$\widehat{\mathbf p}_{i+1}$ is uniquely determined from
\eqref{di_unprojected_regression} if and only if
\begin{equation}
    \operatorname{rank}(\mathcal D_x)=d_n.
\label{di_critic_rank_condition}
\end{equation}
Moreover,
\begin{equation}
  \operatorname{rank}\!
   \begin{bmatrix}
     \mathbf C&\mathbf H_i
    \end{bmatrix}
    =
    \operatorname{rank}(\mathcal D_x)
    +\operatorname{rank}(\mathbf\Lambda_i).
\label{di_rank_decomposition}
\end{equation}
\end{theorem}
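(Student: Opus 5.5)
The plan is to treat this as a linear-algebra statement about the pair $(\mathbf C,\mathbf H_i)$. The key tool is the orthogonal decomposition $\mathbb R^N=\operatorname{im}(\mathbf C)\oplus\ker(\mathbf C^\top)$, where $\ker(\mathbf C^\top)=\ker(\mathcal D_x)=\operatorname{im}(\mathbf N_D)$. Since $\mathbf N_D$ has orthonormal columns, $\mathbf\Pi\triangleq\mathbf N_D\mathbf N_D^\top$ is the orthogonal projector onto $\ker(\mathbf C^\top)$. Consequently, a vector $\mathbf r\in\mathbb R^N$ lies in $\operatorname{im}(\mathbf C)$ if and only if $\mathbf N_D^\top\mathbf r=\mathbf 0$. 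The relations \eqref{di_projection_relations} will be used throughout.

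\emph{Set equality \eqref{di_actor_set_equivalence}.} Take $\boldsymbol\eta$ in the left set. Then $\mathbf H_i\boldsymbol\eta-\mathbf y_i^{\mathrm r}=\mathbf C\mathbf p$ for some $\mathbf p$. Applying $\mathbf N_D^\top$ and using \eqref{di_projection_relations} gives $\mathbf\Lambda_i\boldsymbol\eta=\mathbf y_i$. Conversely, if $\mathbf\Lambda_i\boldsymbol\eta=\mathbf y_i$, then $\mathbf r\triangleq\mathbf H_i\boldsymbol\eta-\mathbf y_i^{\mathrm r}$ satisfies $\mathbf N_D^\top\mathbf r=\mathbf 0$. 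Hence $\mathbf r\in\ker(\mathbf C^\top)^\perp=\operatorname{im}(\mathbf C)$, so a suitable $\mathbf p$ exists.

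\emph{Uniqueness of $\boldsymbol\eta_{i+1}$.} The right set is nonempty, because $(\boldsymbol\eta_{i+1},\widehat{\mathbf p}_{i+1})$ solves \eqref{di_unprojected_regression} by construction. A nonempty affine solution set is a singleton exactly when $\ker(\mathbf\Lambda_i)=\{\mathbf0\}$, that is, when \eqref{di_actor_rank_condition} holds.

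\emph{Uniqueness of the critic parameter.} Assume \eqref{di_actor_rank_condition}, so the actor part is pinned to $\boldsymbol\eta_{i+1}$. Then $\mathbf p$ solves $\mathbf C\mathbf p=\mathbf H_i\boldsymbol\eta_{i+1}-\mathbf y_i^{\mathrm r}$. This system is consistent, and its solution is unique if and only if $\ker\mathbf C=\{\mathbf0\}$, i.e. $\operatorname{rank}(\mathbf C)=\operatorname{rank}(\mathcal D_x)=d_n$. The one point to check is that the svec coordinates are genuine coordinates on $\mathbb S^n$, so that $\mathbf p\in\mathbb R^{d_n}$ is free. This holds because $\operatorname{svec}$ is a bijection onto $\mathbb R^{d_n}$ by its trace-duality definition.

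\emph{Rank identity \eqref{di_rank_decomposition}.} This is the step needing the most care, though it is still standard. Write $\mathbf H_i=(\mathbf I-\mathbf\Pi)\mathbf H_i+\mathbf\Pi\mathbf H_i$. Since $\operatorname{im}(\mathbf I-\mathbf\Pi)=\operatorname{im}(\mathbf C)$, column operations give
\[
\operatorname{im}[\mathbf C\ \mathbf H_i]=\operatorname{im}(\mathbf C)\oplus\operatorname{im}(\mathbf\Pi\mathbf H_i),
\]
and the sum is direct because the two subspaces are orthogonal. Hence
\[
\operatorname{rank}[\mathbf C\ \mathbf H_i]=\operatorname{rank}(\mathbf C)+\operatorname{rank}(\mathbf N_D\mathbf N_D^\top\mathbf H_i).
\]
Finally, $\mathbf N_D$ has full column rank (orthonormal columns), so $\operatorname{rank}(\mathbf N_D\mathbf\Lambda_i)=\operatorname{rank}(\mathbf\Lambda_i)$. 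Together with $\operatorname{rank}(\mathbf C)=\operatorname{rank}(\mathcal D_x)$, this gives \eqref{di_rank_decomposition}.
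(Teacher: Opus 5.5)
Your proposal is correct and follows the paper's proof essentially step by step. The set equality via $\operatorname{im}(\mathbf C)=\ker(\mathbf N_D^\top)$, nonemptiness from the exact pair $(\boldsymbol\eta_{i+1},\widehat{\mathbf p}_{i+1})$, and critic uniqueness via $\ker(\mathbf C)=\{\mathbf 0\}$ all match. The one small difference is in the rank identity: the paper left-multiplies $[\mathbf C\ \mathbf H_i]$ by the orthogonal matrix $[\mathbf Q_D\ \mathbf N_D]^\top$ to get a block upper-triangular form with full-row-rank block $\mathbf Q_D^\top\mathbf C$, whereas you split the column space orthogonally using $\mathbf N_D\mathbf N_D^\top$; both rest on the same decomposition $\mathbb R^N=\operatorname{im}(\mathbf C)\oplus\ker(\mathbf C^\top)$.
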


\begin{proof}
Because the columns of $\mathbf N_D$ form an orthonormal basis of
$\ker(\mathbf C^\top)$,
$
    \operatorname{im}(\mathbf C)
    =\bigl(\ker(\mathbf C^\top)\bigr)^\perp
    =\ker(\mathbf N_D^\top).
$
Let $\boldsymbol\eta\in\mathbb R^{sn}$. If
$\mathbf H_i\boldsymbol\eta-\mathbf C\mathbf p
=\mathbf y_i^{\mathrm r}$ for some
$\mathbf p\in\mathbb R^{d_n}$, then
$\mathbf H_i\boldsymbol\eta-\mathbf y_i^{\mathrm r}
\in\operatorname{im}(\mathbf C)=\ker(\mathbf N_D^\top)$.
Equation~\eqref{di_projection_relations} therefore gives
$\mathbf\Lambda_i\boldsymbol\eta=\mathbf y_i$.
Conversely, if $\mathbf\Lambda_i\boldsymbol\eta=\mathbf y_i$,
then \eqref{di_projection_relations} gives
$
    \mathbf N_D^\top
    (\mathbf H_i\boldsymbol\eta-\mathbf y_i^{\mathrm r})
    =\mathbf0.
$
Thus,
$\mathbf H_i\boldsymbol\eta-\mathbf y_i^{\mathrm r}
\in\ker(\mathbf N_D^\top)=\operatorname{im}(\mathbf C)$,
so there exists $\mathbf p\in\mathbb R^{d_n}$ satisfying
$\mathbf H_i\boldsymbol\eta-\mathbf C\mathbf p
=\mathbf y_i^{\mathrm r}$. This proves
\eqref{di_actor_set_equivalence}.
Exactness of the data ensures that
$(\boldsymbol\eta_{i+1},\widehat{\mathbf p}_{i+1})$
satisfies \eqref{di_unprojected_regression}. Thus, the common actor
solution set in \eqref{di_actor_set_equivalence} is nonempty. It is a
singleton if and only if $\ker(\mathbf\Lambda_i)=\{\mathbf0\}$,
which is equivalent to \eqref{di_actor_rank_condition}.
Under \eqref{di_actor_rank_condition}, the compatible critic
parameters are the solutions of
$
    \mathbf C\mathbf p
    =\mathbf H_i\boldsymbol\eta_{i+1}-\mathbf y_i^{\mathrm r}.
$
This system is consistent and has a unique solution if and only if
$\ker(\mathbf C)=\{\mathbf0\}$, or 
$\operatorname{rank}(\mathbf C)=d_n$. Since
$\mathbf C=\mathcal D_x^\top$, this condition is equivalent to
\eqref{di_critic_rank_condition}.
Finally, let
$r_D\triangleq\operatorname{rank}(\mathbf C)$ and let
$\mathbf Q_D\in\mathbb R^{N\times r_D}$ have orthonormal columns
spanning $\operatorname{im}(\mathbf C)$. Since
$\operatorname{im}(\mathbf C)
=(\ker(\mathbf C^\top))^\perp$,
the matrix $[\mathbf Q_D\ \mathbf N_D]$ is orthogonal. Therefore,
\begin{equation*}
\begin{aligned}
    \operatorname{rank}
    \begin{bmatrix}
        \mathbf C&\mathbf H_i
    \end{bmatrix}
    =
    \operatorname{rank}
    \begin{bmatrix}
        \mathbf Q_D^\top\mathbf C&
        \mathbf Q_D^\top\mathbf H_i\\
        \mathbf0&
        \mathbf\Lambda_i
    \end{bmatrix}.
\end{aligned}
\end{equation*}
Because $\mathbf Q_D^\top\mathbf C$ has full row rank $r_D$, no
nonzero linear combination of the upper rows can vanish in the first
$d_n$ columns. Its row space therefore intersects that of
$\begin{bmatrix}\mathbf0&\mathbf\Lambda_i\end{bmatrix}$ only at the
origin. Thus,
$
    \operatorname{rank}
    \begin{bmatrix}
        \mathbf C&\mathbf H_i
    \end{bmatrix}
    =
    r_D+\operatorname{rank}(\mathbf\Lambda_i).
$
Since
$r_D=\operatorname{rank}(\mathbf C)
=\operatorname{rank}(\mathcal D_x)$,
\eqref{di_rank_decomposition} follows.
\end{proof}

\begin{remark}
The proof of Theorem~3 relies only on the relation
$\operatorname{im}(\mathbf C)=\ker(\mathbf N_D^{\top})$. Hence, the
endpoint projection can be interpreted as eliminating the nuisance critic
parameter by passing to the quotient space modulo
$\operatorname{im}(\mathbf C)$. More generally, any left annihilator
having kernel $\operatorname{im}(\mathbf C)$ produces the same feasible
actor set. The orthonormal choice $\mathbf N_D$ is particularly convenient
numerically. This interpretation explains why critic elimination is
lossless with respect to the policy variables.
\end{remark}

\subsection{Identifiability and Data Requirements}

Theorem~\ref{thm:di_lossless_elimination} separates the information
required for critic identification from that required for policy
identification. In particular, joint actor--critic identifiability,
\begin{equation}
    \operatorname{rank}\!
    \begin{bmatrix}
        \mathbf C&\mathbf H_i
    \end{bmatrix}
    =
    d_n+sn,
\label{di_joint_rank_condition}
\end{equation}
is equivalent to
$   \operatorname{rank}(\mathcal D_x)=d_n$ and
$
    \operatorname{rank}(\mathbf\Lambda_i)=sn.
$
The first condition ensures critic uniqueness once the policy update
is fixed, whereas the second is necessary and sufficient for unique
policy identification. Thus, the data may identify the policy update
even when they do not identify the critic.

Because $\mathbf\Lambda_i$ has
$N-\operatorname{rank}(\mathcal D_x)$ rows, a necessary dimensional
condition for \eqref{di_actor_rank_condition} is
\begin{equation}
    N
    \geq
    \operatorname{rank}(\mathcal D_x)+sn.
\label{di_sample_requirement}
\end{equation}
Condition \eqref{di_sample_requirement} is only necessary. Full
column rank of $\mathbf\Lambda_i$ also requires adequate excitation.
The stronger bound $N\geq d_n+sn$ is conservative whenever
$\operatorname{rank}(\mathcal D_x)<d_n$ and still does not guarantee
actor identifiability.
\begin{table}[t!]
\centering
\caption{Data Informativity and Identifiability Conditions}
\label{tab:informativity}
\footnotesize
\begin{tabular}{@{}p{0.47\columnwidth}p{0.45\columnwidth}@{}}
\hline
\textbf{Condition}
& \textbf{Uniquely Identified Quantity} \\
\hline

$\operatorname{rank}(\mathbf\Lambda_i)=sn$
&
Policy gains $(\mathbf L_{i+1},\mathbf K_{i+1})$
\\[1mm]

$\operatorname{rank}(\mathbf Z_{\mathrm a})=sn$
&
Policy gains $(\mathbf L_{i+1},\mathbf K_{i+1})$ at all stabilizing iterates
\\[1mm]

$\operatorname{rank}(\mathcal D_x)=d_n$
&
Value matrix $\widehat{\mathbf P}_{i+1}$, given a unique policy update
\\[1mm]

$\operatorname{rank}[\,\mathbf C\;\;\mathbf H_i\,]=d_n+sn$
&
Both $(\mathbf L_{i+1},\mathbf K_{i+1})$ and
$\widehat{\mathbf P}_{i+1}$
\\

\hline
\end{tabular}
\end{table}
\subsection{Iteration-Invariant Actor Informativity}

The next result shows that actor informativity depends only on the
fixed data batch, not on the stabilizing policy iterate.

\begin{corollary}
\label{cor:di_rank_equivalence}
For every $\mathbf\Theta_i\in\mathcal S_g$,
\begin{equation}
    \operatorname{rank}(\mathbf\Lambda_i)
    =
    \operatorname{rank}(\mathbf\Gamma_i)
    =
    \operatorname{rank}(\mathbf Z_{\mathrm a}).
\label{di_rank_equivalence}
\end{equation}
Consequently, Assumption~\ref{ass:data_rank} is equivalent to the
necessary and sufficient actor-informativity condition
\eqref{di_actor_rank_condition} at every stabilizing iterate.
\end{corollary}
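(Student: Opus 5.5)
The plan is to prove the two equalities in \eqref{di_rank_equivalence} separately, then combine them with Theorem~\ref{thm:di_lossless_elimination}.

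\emph{First equality, $\operatorname{rank}(\mathbf\Lambda_i)=\operatorname{rank}(\mathbf\Gamma_i)$.} This should be purely algebraic and independent of stability. By \eqref{db_Gamma} and \eqref{db_iteration_data}, row $\ell$ of $\mathbf\Gamma_i$ is $2\operatorname{vec}(\mathbf R_g\mathbf M_{i,\ell})^\top$, where $\mathbf M_{i,\ell}=\operatorname{col}(\mathbf M_{w,i,\ell},\mathbf M_{u,i,\ell})$. The identification rests on two observations:
\begin{itemize}
\item $\sum_j\alpha_{j\ell}\mathbf Z_{\mathrm a,j}-\mathbf\Delta_i\sum_j\alpha_{j\ell}\mathbf X_j$ equals $\operatorname{col}(\mathbf M_{w,i,\ell},\mathbf M_{u,i,\ell})$, because the anchor terms cancel against $\mathbf\Theta_{\mathrm a}$ inside $\mathbf\Delta_i$;
\item $\mathbf R_g=\operatorname{diag}(-\gamma^2\mathbf I_q,\mathbf R)$.
\end{itemize}
Together these show that the rows of $\mathbf\Gamma_i$ and $\mathbf\Lambda_i$ differ only by a fixed column transformation. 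Concretely, there is a nonsingular matrix $\mathbf T$ (the signed permutation between $\operatorname{vec}(\mathbf\Theta)$ and $\boldsymbol\eta$ from the proof of Lemma~\ref{thm:data_update_equivalence}) with $\mathbf\Gamma_i=\mathbf\Lambda_i\mathbf T$. I expect the sign bookkeeping to work out: the $-\gamma^2$ meets the $\mathbf L$-block and the $-\mathbf K$ sign meets the $\mathbf R$-block. Right multiplication by a nonsingular matrix preserves rank, which gives the first equality.

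\emph{Second equality, $\operatorname{rank}(\mathbf\Gamma_i)=\operatorname{rank}(\mathbf Z_{\mathrm a})$.} I differentiate \eqref{dd_residual_relation} from Lemma~\ref{thm:data_residual_equivalence}, which holds for every $\mathbf\Theta$ without any rank assumption. In vectorized form this gives
\[
\mathbf\Gamma_i=2\mathbf Z_{\mathrm a}\mathbf J_{\mathrm a}(\mathbf\Theta_i),
\]
using \eqref{mb_J_definition}. Since $\mathbf\Theta_i\in\mathcal S_g$, Lemma~\ref{thm:pgre_jacobian_nonsingular} makes $\mathbf J_{\mathrm a}(\mathbf\Theta_i)$ a nonsingular $sn\times sn$ matrix. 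Hence $\ker\mathbf\Gamma_i=\mathbf J_{\mathrm a}(\mathbf\Theta_i)^{-1}\ker\mathbf Z_{\mathrm a}$, and the ranks coincide.

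The main subtlety is justifying this differentiation. Identity \eqref{dd_residual_relation} holds identically in $\mathbf\Theta$, and both sides are polynomial in $\mathbf\Theta$ (here $\mathbf Z_{\mathrm a}$ is fixed data). Fréchet derivatives of the two sides therefore agree, and the left side's derivative is $\mathbf\Gamma_i$ by \eqref{db_DPhi_component}. No rank hypothesis enters this step, which is what makes the rank statement unconditional.

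\emph{Consequence.} The invariance of $\operatorname{rank}(\mathbf Z_{\mathrm a})$ across iterates then shows that $\operatorname{rank}(\mathbf Z_{\mathrm a})=sn$ holds if and only if $\operatorname{rank}(\mathbf\Lambda_i)=sn$ for every $\mathbf\Theta_i\in\mathcal S_g$. By Theorem~\ref{thm:di_lossless_elimination}, the latter is exactly the necessary and sufficient condition \eqref{di_actor_rank_condition} for unique recovery of the policy update.
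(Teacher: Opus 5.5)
Your proposal is correct and follows the paper's proof essentially step for step: you differentiate \eqref{dd_residual_relation} to get $\mathbf\Gamma_i=2\mathbf Z_{\mathrm a}\mathbf J_{\mathrm a}(\mathbf\Theta_i)$, invoke Lemma~\ref{thm:pgre_jacobian_nonsingular} for the nonsingularity of $\mathbf J_{\mathrm a}(\mathbf\Theta_i)$, relate $\mathbf\Lambda_i$ to $\mathbf\Gamma_i$ through the fixed signed permutation, and read off the consequence from Theorem~\ref{thm:di_lossless_elimination}. Your sign bookkeeping does close: writing $\boldsymbol\eta=\mathbf S\operatorname{vec}(\mathbf\Theta)$ with $\mathbf S$ that signed permutation, both blocks flip sign and one gets $\mathbf\Gamma_i=-\mathbf\Lambda_i\mathbf S$, which does not affect the rank.
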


\begin{proof}
Differentiating \eqref{dd_residual_relation} and using
\eqref{mb_J_definition} yields
\begin{equation}
    \mathbf\Gamma_i
    =
    2\mathbf Z_{\mathrm a}
    \mathbf J_{\mathrm a}(\mathbf\Theta_i).
\label{di_jacobian_factorization}
\end{equation}
Lemma~\ref{thm:pgre_jacobian_nonsingular} implies that
$\mathbf J_{\mathrm a}(\mathbf\Theta_i)$ is nonsingular. Therefore,
$\operatorname{rank}(\mathbf\Gamma_i)
=\operatorname{rank}(\mathbf Z_{\mathrm a})$. The coordinates
$\operatorname{vec}(\mathbf\Theta_{i+1})$ and
$\boldsymbol\eta_{i+1}$ differ by a fixed signed permutation, so
$\operatorname{rank}(\mathbf\Lambda_i)
=\operatorname{rank}(\mathbf\Gamma_i)$. This proves
\eqref{di_rank_equivalence} and the stated equivalence.
\end{proof}

Thus, actor informativity can be verified once from the fixed data
batch and remains valid throughout all stabilizing iterations,
without requiring critic identifiability. Table~\ref{tab:informativity} summarizes the resulting rank
conditions and the quantities they uniquely determine.

\section{Illustrative Example}
\label{sec:illustrative_example}
We use a power system frequency-regulation example to examine model-based equivalence, data-driven policy recovery, actor–critic identifiability, and computational scalability.

Consider the following power systems
frequency-regulation model:
\begin{subequations}
\label{ex_power_dynamics}
\begin{align}
\Delta\dot{f}(t)
={}&-\frac{1}{T_{\mathrm{p}}}\Delta f(t)
+\frac{k_{\mathrm{p}}}{T_{\mathrm{p}}}\Delta p_{\mathrm{m}}(t)
+\frac{k_{\mathrm{p}}}{T_{\mathrm{p}}}w(t),
\label{ex_frequency_dynamics}\\
\Delta\dot{p}_{\mathrm{m}}(t)
={}&-\frac{1}{T_{\mathrm{t}}}\Delta p_{\mathrm{m}}(t)
+\frac{1}{T_{\mathrm{t}}}\Delta p_{\mathrm{v}}(t),
\label{ex_turbine_dynamics}\\
\Delta\dot{p}_{\mathrm{v}}(t)
={}&-\frac{1}{\rho T_{\mathrm{g}}}\Delta f(t)
-\frac{1}{T_{\mathrm{g}}}\Delta p_{\mathrm{v}}(t)
+\frac{1}{T_{\mathrm{g}}}u(t),
\label{ex_governor_dynamics}
\end{align}
\end{subequations}
where $\Delta f(t)$ denotes the frequency deviation,
$\Delta p_{\mathrm{m}}(t)$ is the turbine mechanical-power deviation,
and $\Delta p_{\mathrm{v}}(t)$ represents the governor-state
deviation. The positive constants $k_{\mathrm{p}}$ and
$T_{\mathrm{p}}$ are the power-system gain and time constant,
respectively, while $T_{\mathrm{t}}$, $T_{\mathrm{g}}$, and $\rho$
denote the turbine time constant, governor time constant, and droop
coefficient.
Define the state vector as
$
x(t)
\triangleq
\operatorname{col}\!\left(
\Delta f(t),\Delta p_{\mathrm{m}}(t),
\Delta p_{\mathrm{v}}(t)\right)
\in\mathbb{R}^{3}.
$
Then, \eqref{ex_power_dynamics} takes the form of \eqref{1}, with
$n=3$ and $m=q=1$.
The system parameters are chosen as in \cite{wang2017intelligent}.
The state- and control-weighting matrices
and the prescribed disturbance-attenuation level are selected as
$
\textbf{Q}=I_{3}$,
$\textbf{R}=1$, and
$\gamma=5$.
The initial state is specified as
$
x_{0}
=
\begin{bmatrix}
0.1&-0.2&0.2
\end{bmatrix}^{\top}.
$ 
The optimal control gains are
\begin{eqnarray*}
\mathbf{K}^{\ast} \!\!&=&\!\!\left[
\begin{array}{ccc}
\!\!-0.5174 & \!\!0.3363 \!\!&0.4237
\end{array}
\right] \text{,}\\
\mathbf{L}^{\ast}&=&\left[
\begin{array}{ccc}
0.0282 & 0.0029 & -0.0010
\end{array}%
\right].
\end{eqnarray*}
\textbf{Validation studies}: 
We first evaluate the model-based critic-free PI method in
Algorithm~1. The anchor and initial joint policies are
chosen as
$
\mathbf{\Theta}_{\mathrm{a}}=\mathbf{\Theta}_{0}
=\operatorname{col}(\textbf{L}_{0},-\textbf{K}_{0})=\textbf{0}. 
$
After 4 iterations, the following solution is obtained
\begin{eqnarray*}
\mathbf{\Theta}_{\mathrm{4}}&=&\left[
\begin{array}{ccc}
0.0282 & 0.0029 & -0.0010\\
0.5174 & -0.3363 & -0.4237
\end{array}%
\right].
\end{eqnarray*}%
To verify the step-by-step equivalence between Algorithm 1 and SPUA, both methods are initialized with the same stabilizing policy $\mathbf{\Theta}_{0}$. As shown in Fig. 1, the corresponding entries of $\textbf{K}_{i}$ and $\textbf{L}_{i}$ generated by the two algorithms exhibit identical trajectories at every iteration and converge to the same optimal gains. 

Under the same initialization condition, we then evaluate the data-driven
critic-free PI method in Algorithm~2.
Fig.~2 illustrates the endpoint-projection mechanism.
Fig.~2(a) shows representative intervals
$\mathcal I_j=[t_{j-1},t_j]$, whose endpoint states are used to construct
the endpoint increments $\mathbf D_j$.
Fig.~2(b) shows the coefficients $\alpha_{j1}$ of a representative
nullspace direction satisfying
$
\sum_{j=1}^{N}\alpha_{j1}\mathbf D_j=\mathbf 0
$.
Fig.~2(c) displays the corresponding partial sums of the weighted
endpoint increments, which vanish after the complete combination is formed.
Fig.~2(d) numerically verifies this cancellation using
$\mathbf P$ only for illustration.

\begin{figure}[h]
\centering
\subfigure[Algorithm 1]{
		\includegraphics[width=4cm]{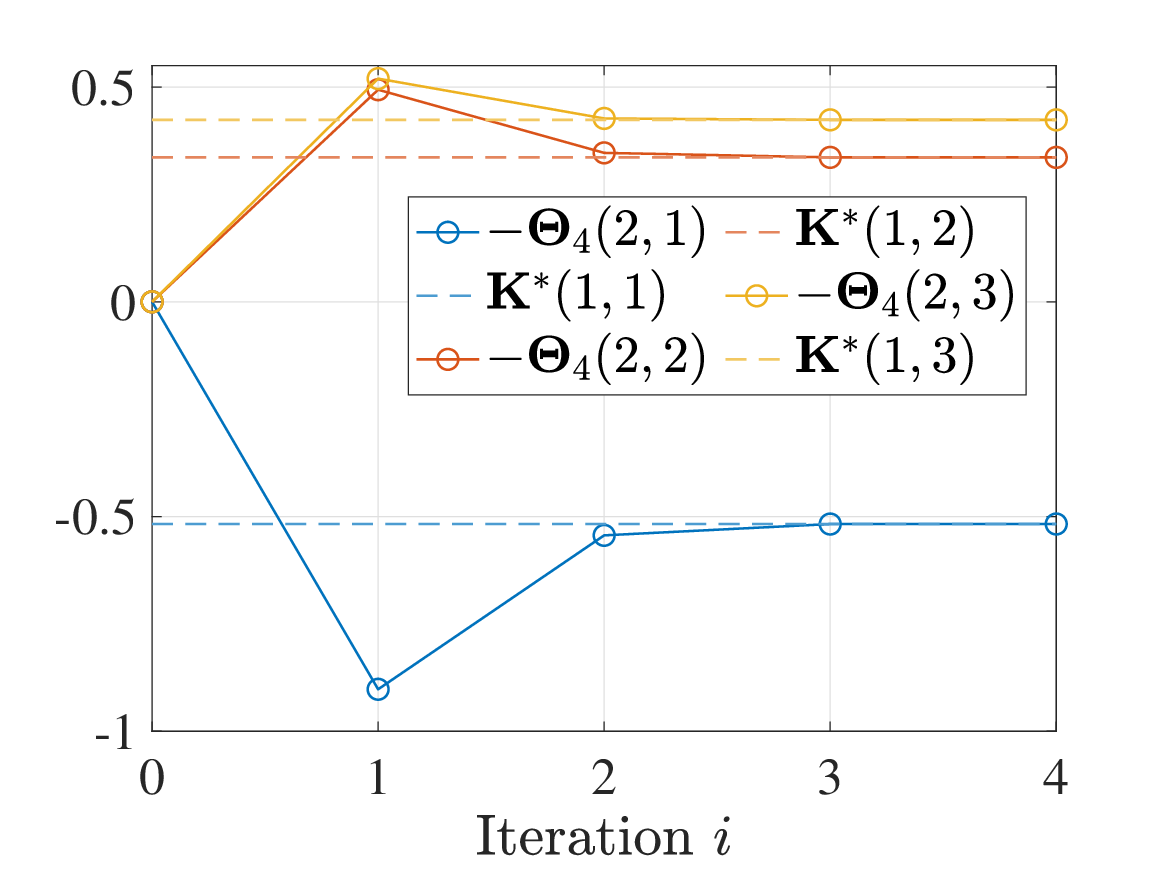}
			} 
\subfigure[Algorithm 1]{
		\includegraphics[width=4cm]{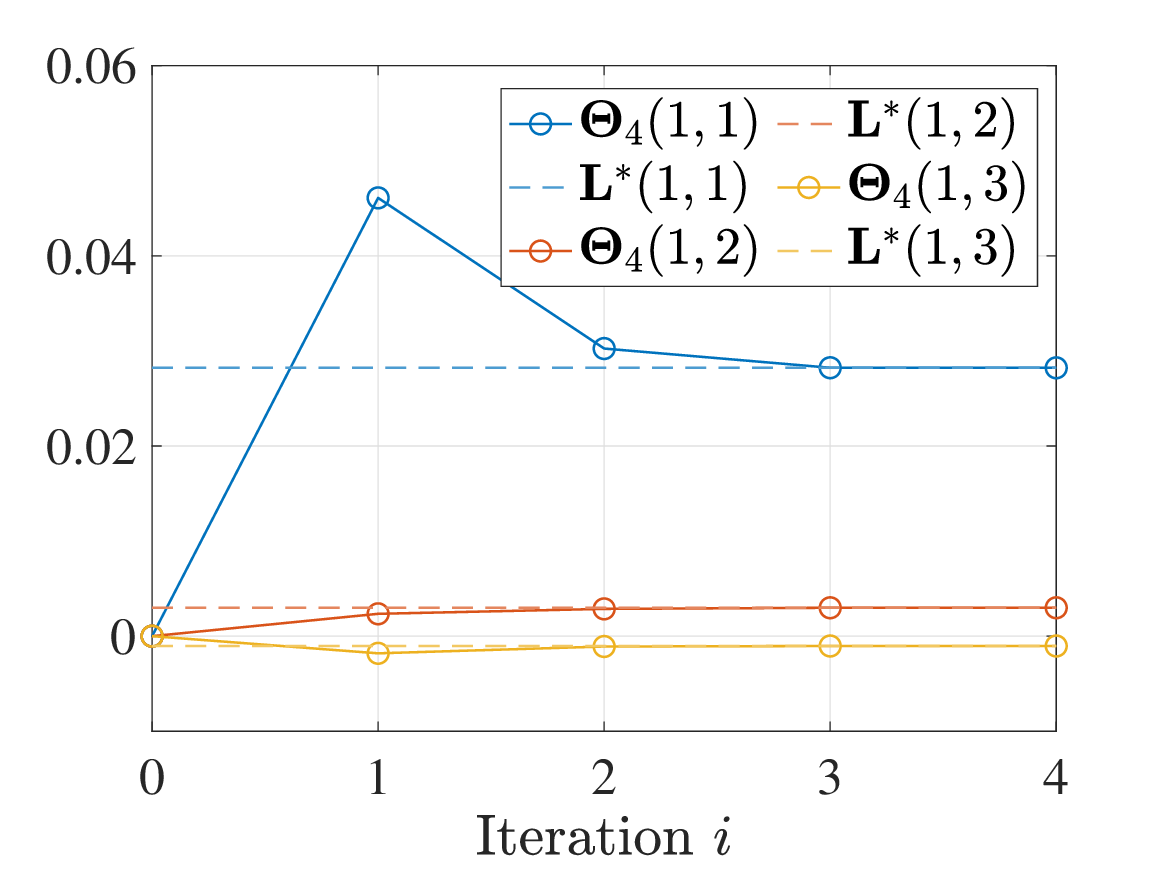}
	} 
\subfigure[SPUA]{
		\includegraphics[width=4cm]{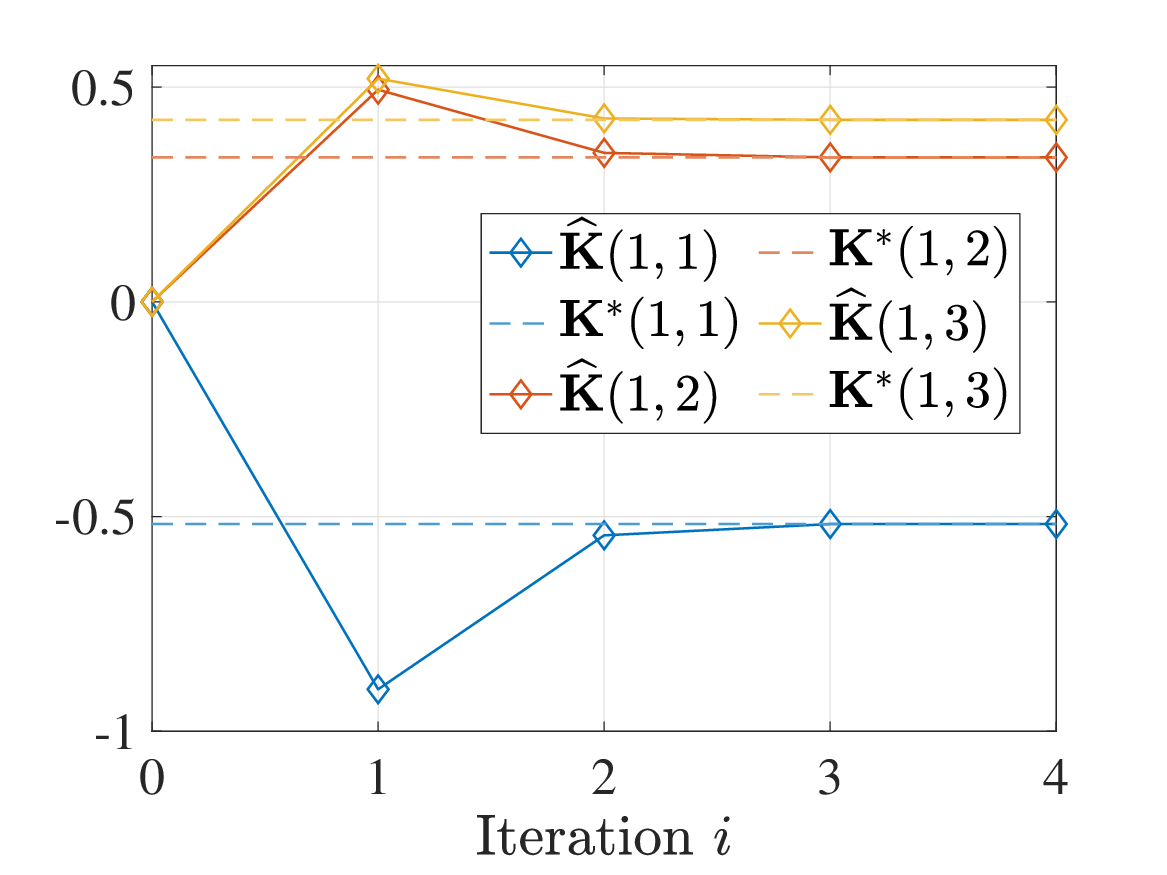}
	} 
\subfigure[SPUA]{
		\includegraphics[width=4cm]{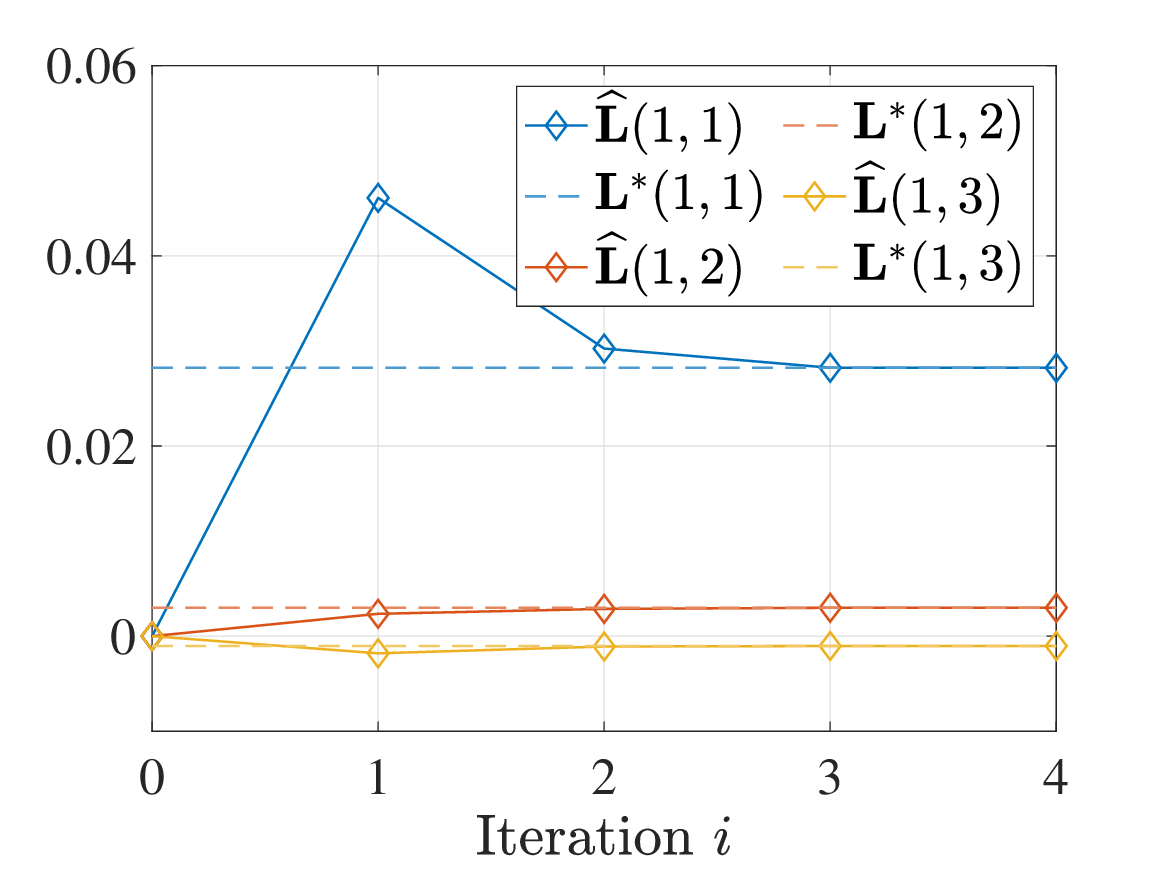}
	}
\caption{Convergence comparison between Algorithm 1 and SPUA: (a) control gain $\textbf{K}_{i}$ generated by Algorithm 1; (b) disturbance gain $\textbf{L}_{i}$ generated by Algorithm 1; (c) control gain $\widehat{\textbf{K}}_{i}$ generated by SPUA; and (d) disturbance gain $\widehat{\textbf{L}}_{i}$ generated by SPUA. The dashed lines denote the optimal gains $\mathbf{K}^{\ast}$ and $\mathbf{L}^{\ast}$.}
\label{fig1}
\end{figure}

\begin{figure}[th!]
\centering
\subfigure[]{
		\includegraphics[width=4cm]{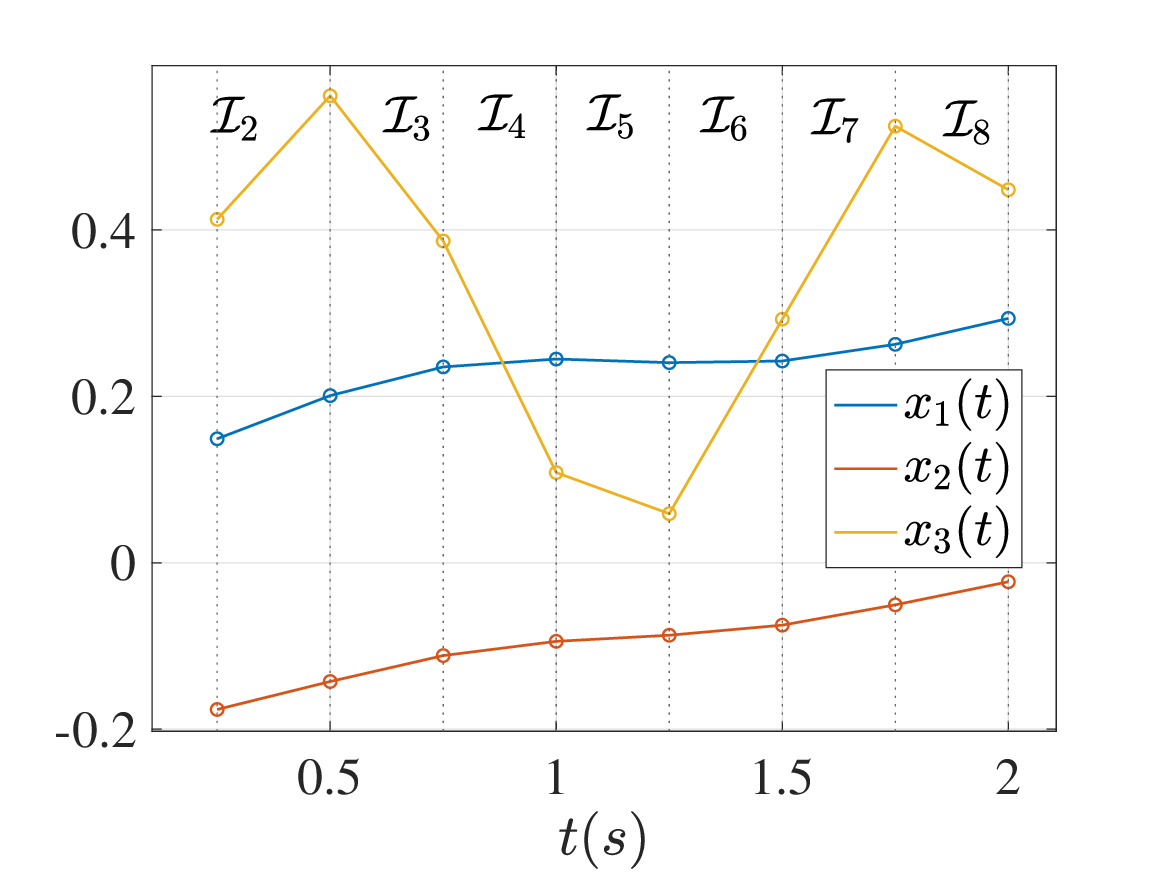}
			} 
\subfigure[]{
		\includegraphics[width=4cm]{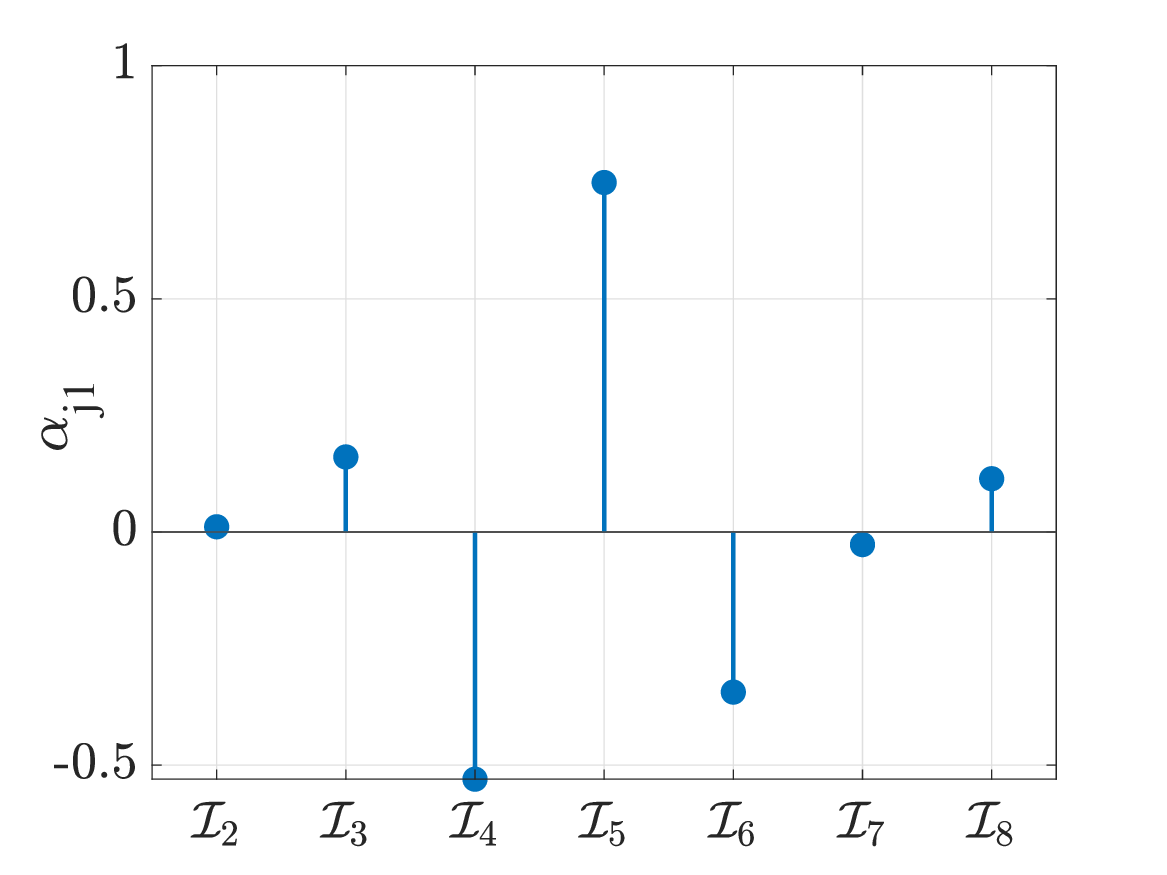}
	} 
\subfigure[]{
		\includegraphics[width=4cm]{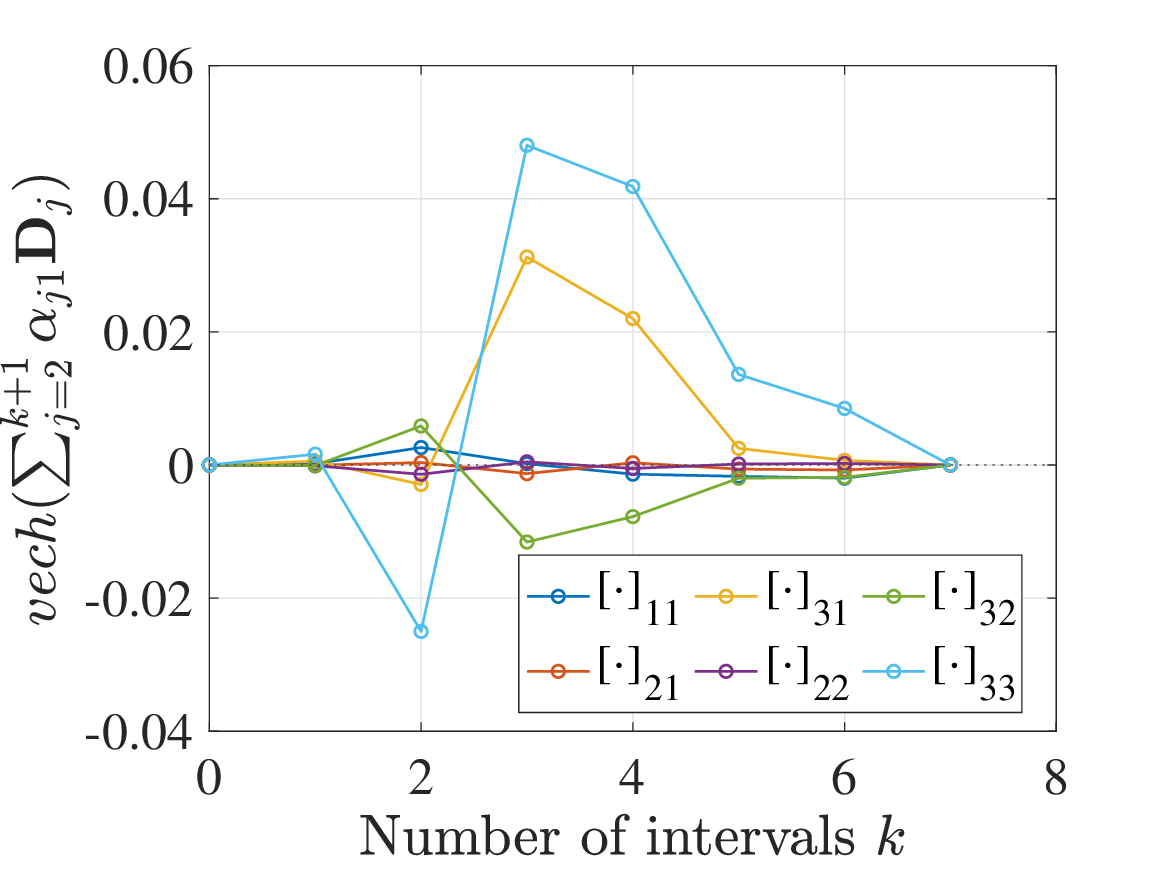}
	} 
\subfigure[]{
		\includegraphics[width=4cm]{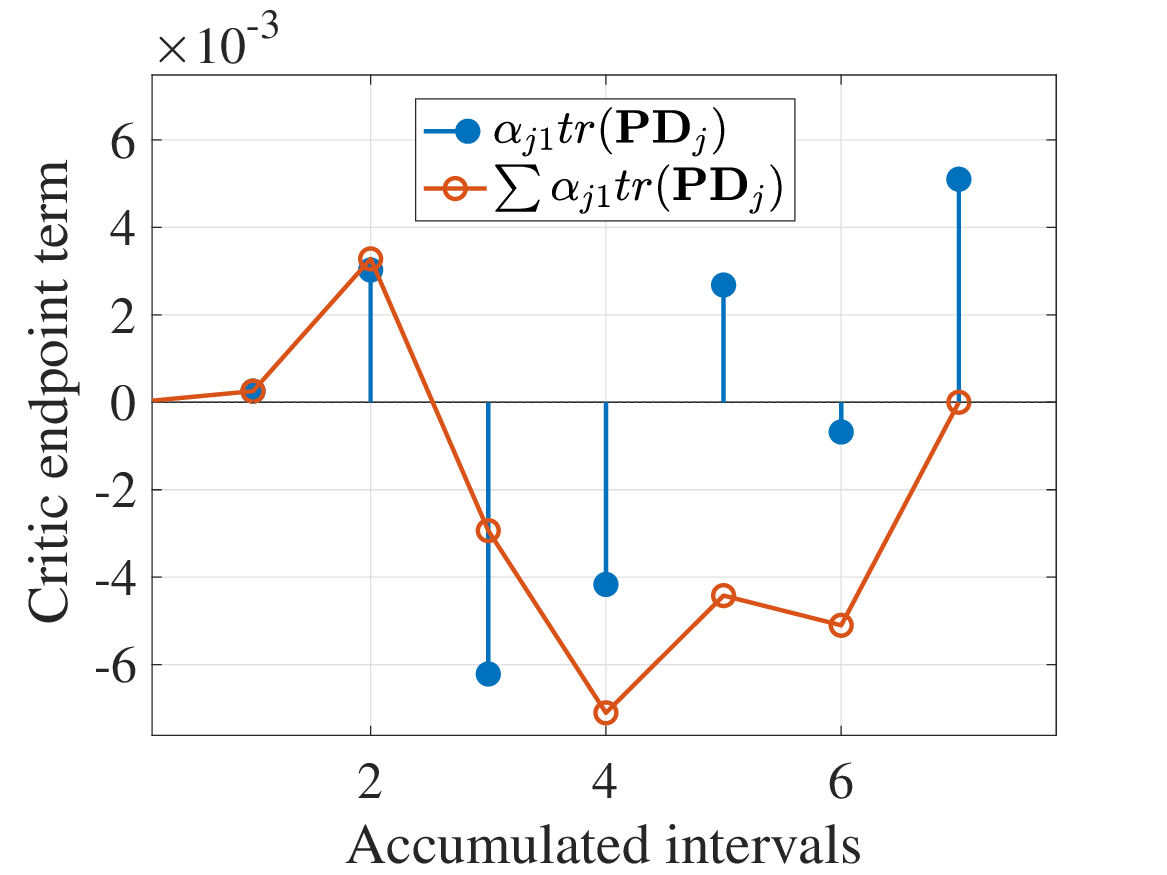}
	}
\caption{Endpoint projection for critic elimination in Algorithm 2: (a) state trajectories over representative sampling intervals $\mathcal I_j$ used to construct $\mathbf D_j$; (b) coefficients $\alpha_{j1}$ of a nullspace direction satisfying $
\sum_{j=1}^{N}\alpha_{j1}\mathbf D_j=\mathbf 0
$; (c) partial sums of the weighted endpoint increments $
\sum_{j=1}^{N}\alpha_{j1}\mathbf D_j
$; and (d) corresponding critic endpoint terms and their accumulated sum, which vanishes after all intervals are combined.}
\label{fig1}
\end{figure}

\begin{figure}[t]
\centering
\includegraphics[width=0.9\linewidth]{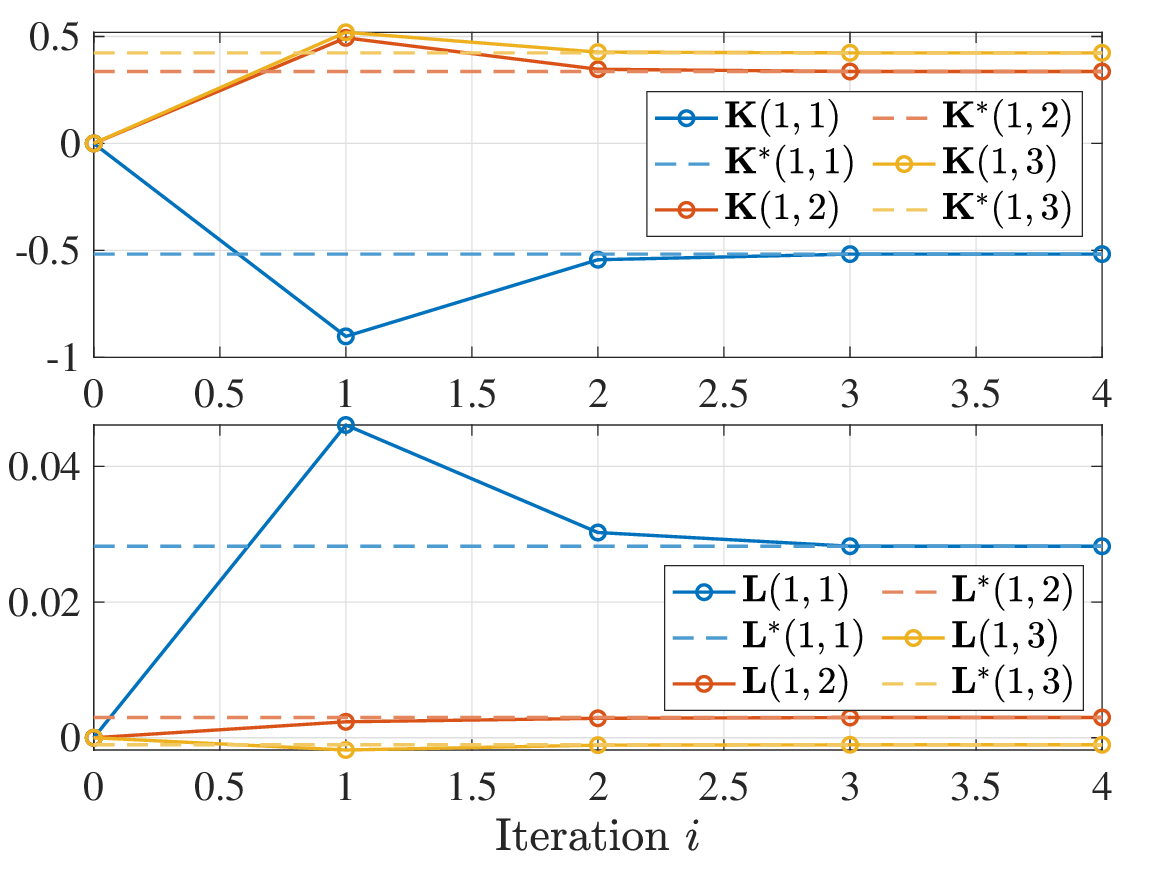}
\caption{Convergence of control gains $\textbf{K}_{i}$ and $\textbf{L}_{i}$ generated by Algorithm 2}
\end{figure}
\begin{figure}[t]
\centering
\includegraphics[width=0.92\linewidth]{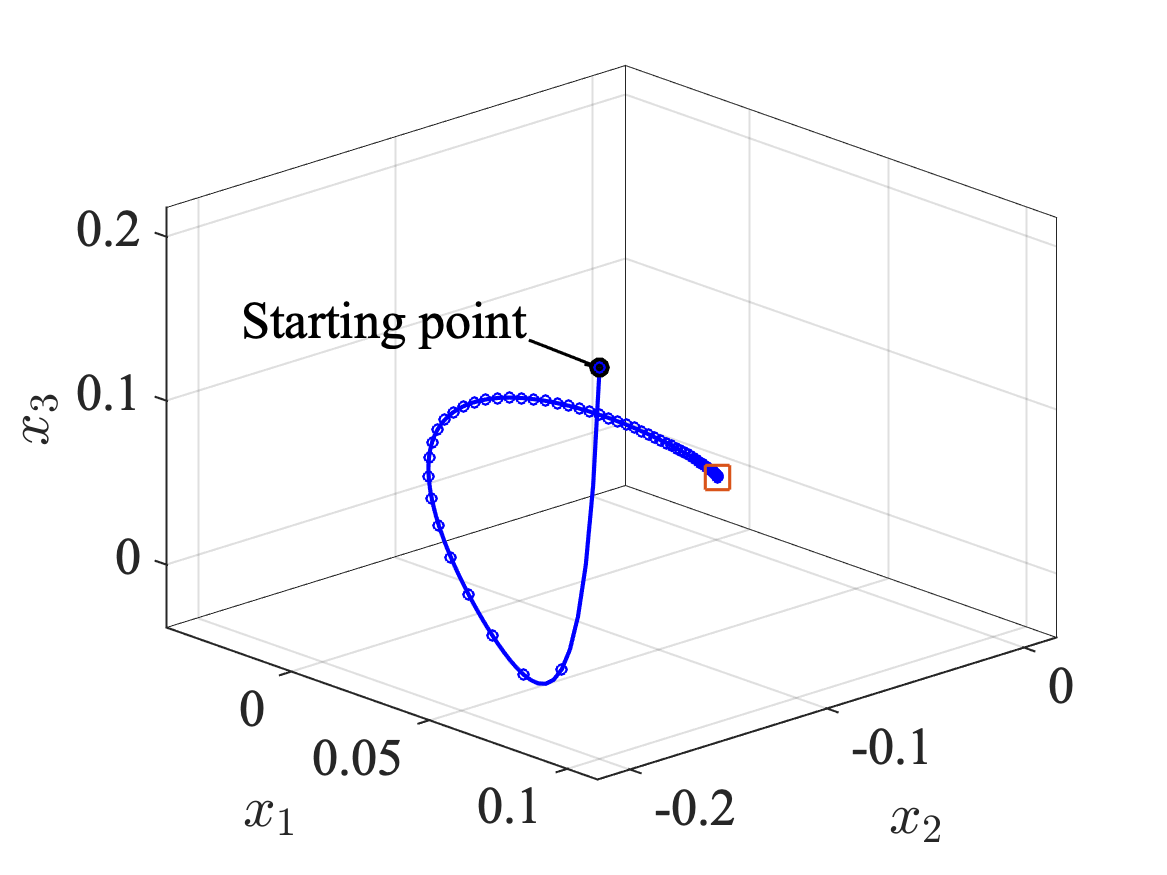}
\caption{System state trajectory under the control gains $\textbf{K}_{4}$ and $\textbf{L}_{4}$ learned by Algorithm 2. The trajectory starts from $
x_{0}
=
\begin{bmatrix}
0.1&-0.2&0.2
\end{bmatrix}^{\top},
$ and converges to the origin.}
\end{figure}

Fig. 3 shows the convergence of the policy gains generated by Algorithm 2. The control gains $\textbf{K}_{i}$ and $\textbf{L}_{i}$ converge rapidly to $\mathbf{K}^{\ast}$ and $\mathbf{L}^{\ast}$, respectively, with convergence obtained after four iterations. This result verifies that the proposed data-driven critic-free update successfully recovers the optimal saddle-point policy without explicitly estimating the value matrix. Fig. 4 depicts the state trajectory corresponding to $\textbf{K}_{4}$ and $\textbf{L}_{4}$. Starting from the prescribed initial condition, all state components converge toward the equilibrium.

\textbf{Comparative studies:}
To illustrate the distinct roles of actor informativity and critic
identifiability characterized in
Section~\ref{sec:data_informativity}, we construct three data batches with
different rank profiles while keeping the plant, game parameters, initial
condition, and anchor policy unchanged. Here,
$s=m+q=2$, $sn=6$, and $d_n=n(n+1)/2=6$.

\noindent \textbf{Case I} \emph{(actor- and critic-informative):}
A persistently excited off-policy batch is collected over
$t\in[0,8]~\mathrm{s}$ with sampling interval $T=0.25~\mathrm{s}$,
giving $N=32$ data intervals. The control and disturbance channels are
excited by
\begin{equation*}
\begin{aligned}
\mathbf u_{\mathrm b}(t)
={}&0.8\sin(0.7t)+0.6\sin(1.9t)\\
   &+0.4\cos(3.1t)+0.25\sin(4.7t),\\
\mathbf w_{\mathrm b}(t)
={}&0.7\cos(0.5t)+0.5\sin(1.3t)\\
   &+0.35\cos(2.7t)+0.25\sin(4.1t).
\end{aligned}
\end{equation*}
The resulting batch satisfies
$\operatorname{rank}(\mathbf Z_{\mathrm a})
=\operatorname{rank}(\mathbf\Lambda_0)
=\operatorname{rank}(\mathcal D_x)=6$, and is therefore informative for
both the policy update and the critic.

\noindent \textbf{Case II} \emph{(actor-informative but critic-nonidentifying):}
The batch consists of one transition interval from $\mathbf x_0$ to the
origin, followed by twelve plant-consistent zero-endpoint loops. For each
loop interval $\mathcal I_j=[t_{j-1},t_j]$,
$\mathbf x(t_{j-1})=\mathbf x(t_j)=\mathbf0$ and thus
$\mathbf D_j=\mathbf0$. Only the initial transition contributes to the
endpoint matrix, giving
$\operatorname{rank}(\mathcal D_x)=1<d_n$. The intra-interval state and
input variations remain sufficiently rich to yield
$\operatorname{rank}(\mathbf Z_{\mathrm a})
=\operatorname{rank}(\mathbf\Lambda_0)=6$. Hence, the policy update is
uniquely identifiable, whereas the critic is not.

\noindent \textbf{Case III} \emph{(actor-uninformative):}
Using the same horizon and sampling interval as in Case~I, the probing
signals are removed by setting
$\mathbf u_{\mathrm b}(t)=\mathbf w_{\mathrm b}(t)=\mathbf0$. Since
$\mathbf\Theta_{\mathrm a}=\mathbf0$, this gives
$\mathbf v_{\mathrm a}(t)=\mathbf0$ and
$\mathbf Z_{\mathrm a}=\mathbf0$. Consequently,
$\operatorname{rank}(\mathbf Z_{\mathrm a})
=\operatorname{rank}(\mathbf\Lambda_0)=0$, while the autonomous state
evolution gives $\operatorname{rank}(\mathcal D_x)=6$. Thus, full-rank
endpoint information does not make the policy update unique.

\begin{figure}[t]
\centering
\includegraphics[width=\columnwidth]{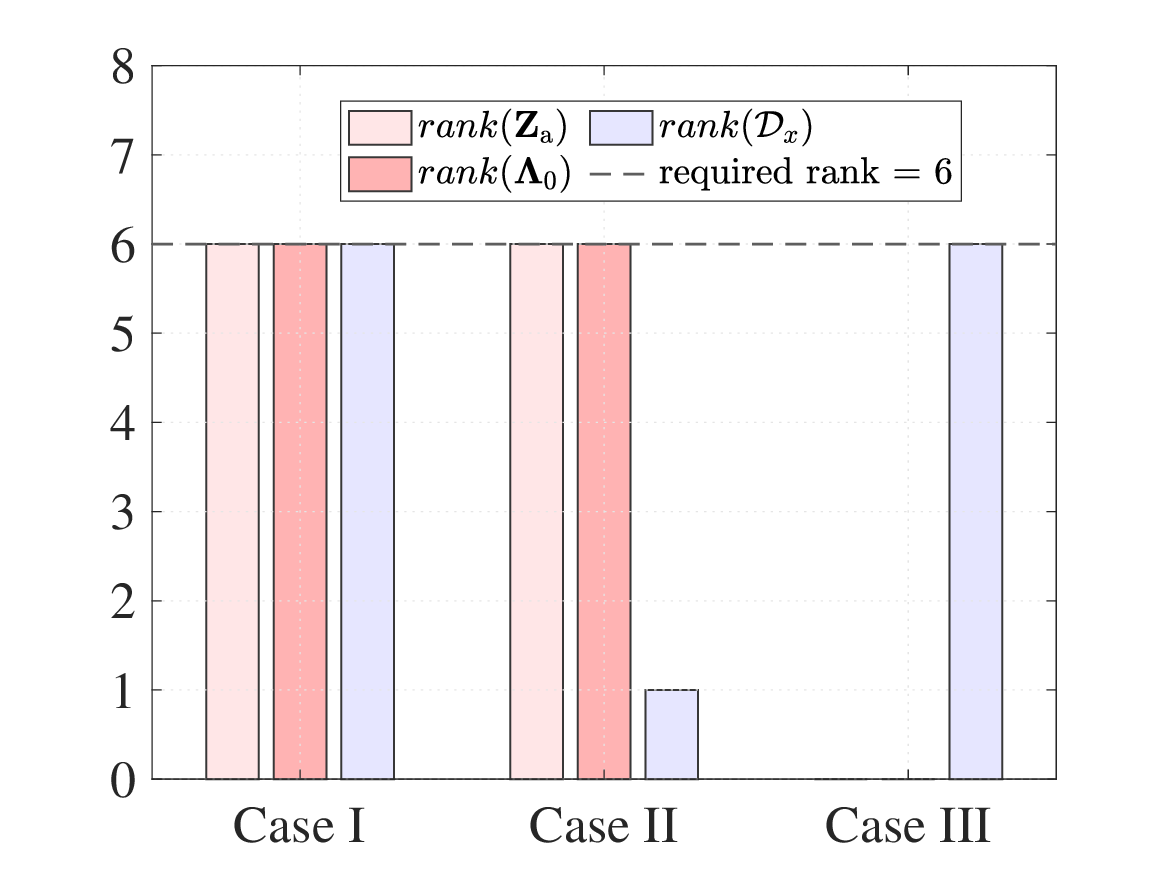}
\caption{Numerical ranks under the three data-informativity regimes. The
dashed line denotes the full-rank threshold $sn=d_n=6$.}
\label{fig:three_case_rank}
\end{figure}
\begin{table}[t]
\centering
\caption{Identifiability and policy-recovery errors for the three data regimes}
\label{tab:three_case_results}
\scriptsize
\setlength{\tabcolsep}{3pt}
\renewcommand{\arraystretch}{1.15}

\begin{tabular}{@{}ccccc@{}}
\toprule
& \multicolumn{2}{c}{Identifiability}
& \multicolumn{2}{c}{Gain error}\\
\cmidrule(lr){2-3}
\cmidrule(l){4-5}
Case
& Policy
& Critic
& $\|\mathbf K-\mathbf K^{\ast}\|$
& $\|\mathbf L-\mathbf L^{\ast}\|$\\
\midrule
I
& Identifiable
& Identifiable
& $4.0572\times10^{-10}$
& $5.3504\times10^{-11}$\\
II
& Identifiable
& Nonidentifiable
& $4.0411\times10^{-12}$
& $2.1995\times10^{-13}$\\
III
& Nonunique
& --
& --
& --\\
\bottomrule
\end{tabular}
\end{table}
Fig.~\ref{fig:three_case_rank} summarizes the three rank profiles, and
Table~\ref{tab:three_case_results} reports the associated identifiability
properties and policy-recovery errors. Cases~I and~II both satisfy the
actor rank condition. Algorithm~2 therefore yields a unique policy update
and recovers $(\mathbf K^{\ast},\mathbf L^{\ast})$ to numerical precision
in both cases, although the critic is nonidentifiable in Case~II. This
comparison shows that critic identifiability is not required for unique
policy recovery.

Case~III provides the complementary result. Its endpoint matrix has full
rank, but its actor regression is rank deficient and the policy update is
nonunique. Cases~II and~III together show that policy identifiability is
governed by actor informativity rather than critic identifiability. These
results corroborate the analysis in
Section~\ref{sec:data_informativity} and support the data-driven
critic-free formulation in Section~\ref{sec:data_based_critic_free}.

\textbf{Scalability studies:}
To assess the computational benefit of critic elimination, we consider a
family of systems with increasing state dimensions
$
n\in\{5,10,20,30,40,50\},
$
with $m=q=1$ and hence $s=m+q=2$.
The system matrix $\mathbf A$ is chosen from a stable diagonal family
whose eigenvalues are uniformly distributed over $[-2,-0.5]$.
The input and disturbance matrices $\mathbf B_1$ and $\mathbf B_2$ are
generated as normalized dense vectors. We set
$\mathbf Q=\mathbf I_n$, $\mathbf R=1$, and $\gamma=5$, and use
$\mathbf\Theta_{\mathrm a}=\mathbf 0$ as the stabilizing anchor policy.

For each $n$, the critic-free and joint actor--critic formulations use the
same data batch. The number of data intervals is chosen as
$
N=d_n+2sn,$
$
d_n=\frac{n(n+1)}{2}.
$
The generated batches satisfy
$
\operatorname{rank}(\mathcal D_x)=d_n,$
$
\operatorname{rank}(\mathbf\Lambda_i)=sn,
$
so that both formulations admit uniquely identifiable parameter vectors.
Moreover, the dimension of the nullspace of $\mathcal D_x$ is
\[
\nu
=N-\operatorname{rank}(\mathcal D_x)
=N-d_n
=2sn.
\]
The data-generation time is excluded because it is common to both
formulations. The endpoint-nullspace projection required by the
critic-free formulation is counted as a one-time preprocessing cost, and
the policy-update regressions in both formulations are solved using the
same least-squares solver.

\begin{table}[t]
\centering
\caption{Numbers of unknowns and regression-matrix dimensions for the
critic-free (CF) and joint actor--critic (A--C) formulations.}
\label{tab:computational_dimensions}
\footnotesize
\setlength{\tabcolsep}{12pt}
\renewcommand{\arraystretch}{1.15}
\begin{tabular}{@{}ccccc@{}}
\toprule
$n$
& \shortstack{Actor-only\\unknowns}
& \shortstack{Joint A-C\\unknowns}
& \shortstack{CF matrix\\size}
& \shortstack{Joint A-C\\matrix size} \\
\midrule
 5  & \textbf{10}  & 25   & $\textbf{20}\times \textbf{10}$   & $35\times 25$     \\
10  & \textbf{20}  & 75   & $\textbf{40}\times \textbf{20}$   & $95\times 75$     \\
20  & \textbf{40}  & 250  & $\textbf{80}\times \textbf{40}$   & $290\times 250$   \\
30  & \textbf{60}  & 525  & $\textbf{120}\times \textbf{60}$  & $585\times 525$   \\
40  & \textbf{80}  & 900  & $\textbf{160}\times \textbf{80}$  & $980\times 900$   \\
50  & \textbf{100} & 1375 & $\textbf{200}\times \textbf{100}$ & $1475\times 1375$ \\
\bottomrule
\end{tabular}
\end{table}

Table~\ref{tab:computational_dimensions} and
Fig.~\ref{fig:scalability}(a) show the dimensional advantage of the
critic-free formulation. The actor-only regression contains
$
sn=2n
$
unknowns, whereas the joint actor--critic regression contains
$
sn+d_n
=
2n+\frac{n(n+1)}{2}.
$
Thus, for fixed input and disturbance dimensions, the number of
critic-free unknowns grows linearly with $n$, while the joint
actor--critic parameter dimension grows quadratically. At $n=50$, the
two formulations involve $100$ and $1375$ unknowns, respectively.

The reduction in unknowns also leads to substantially smaller regression
matrices. Since $\nu=2sn$, the projected critic-free regression has
dimensions
$
\mathbf\Lambda_i\in\mathbb R^{2sn\times sn},
$
whereas the joint actor--critic coefficient matrix has dimensions
$
\begin{bmatrix}
\mathbf H_i & -\mathbf C
\end{bmatrix}
\in
\mathbb R^{(d_n+2sn)\times(d_n+sn)}.
$
For $n=50$, these dimensions are $200\times100$ and
$1475\times1375$, respectively. As shown in
Fig.~\ref{fig:scalability}(c), the corresponding regression working-memory
requirements are approximately $0.15$~MB and $15.5$~MB, amounting to a
reduction of about two orders of magnitude.

The same dimensional advantage is reflected in computation time.
Fig.~\ref{fig:scalability}(b) shows that, apart from minor nonmonotonic
variations at small $n$, the joint actor--critic update becomes
increasingly expensive as the state dimension grows, whereas the
critic-free update scales much more moderately. Consequently, the
per-update speedup in Fig.~\ref{fig:scalability}(d) increases from a
few-fold in low-dimensional systems to several tens at $n=50$.

\begin{figure}[t]
\centering
\subfigure[Number of unknowns]{
    \includegraphics[width=4cm]{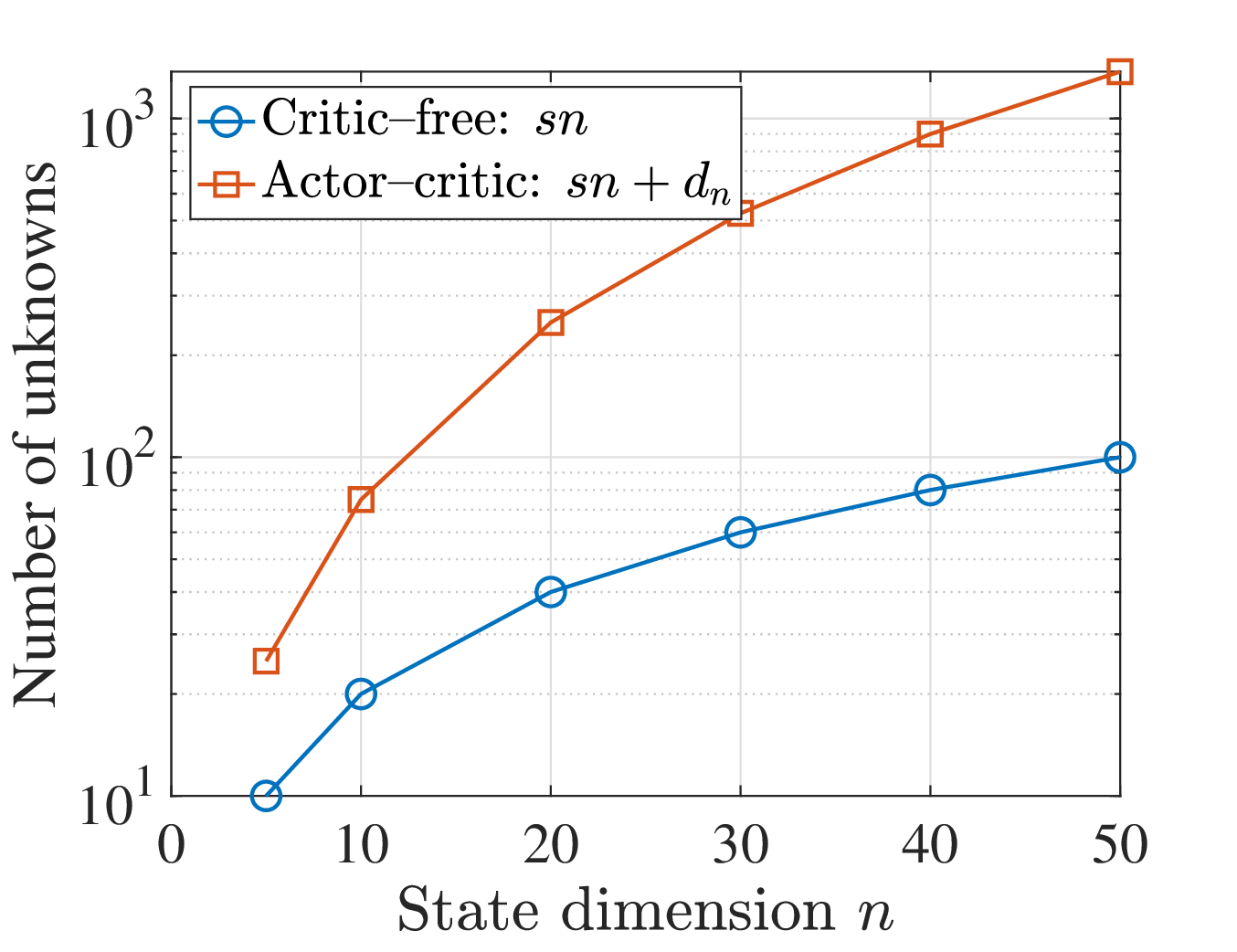}
}
\subfigure[Median CPU time]{
    \includegraphics[width=4cm]{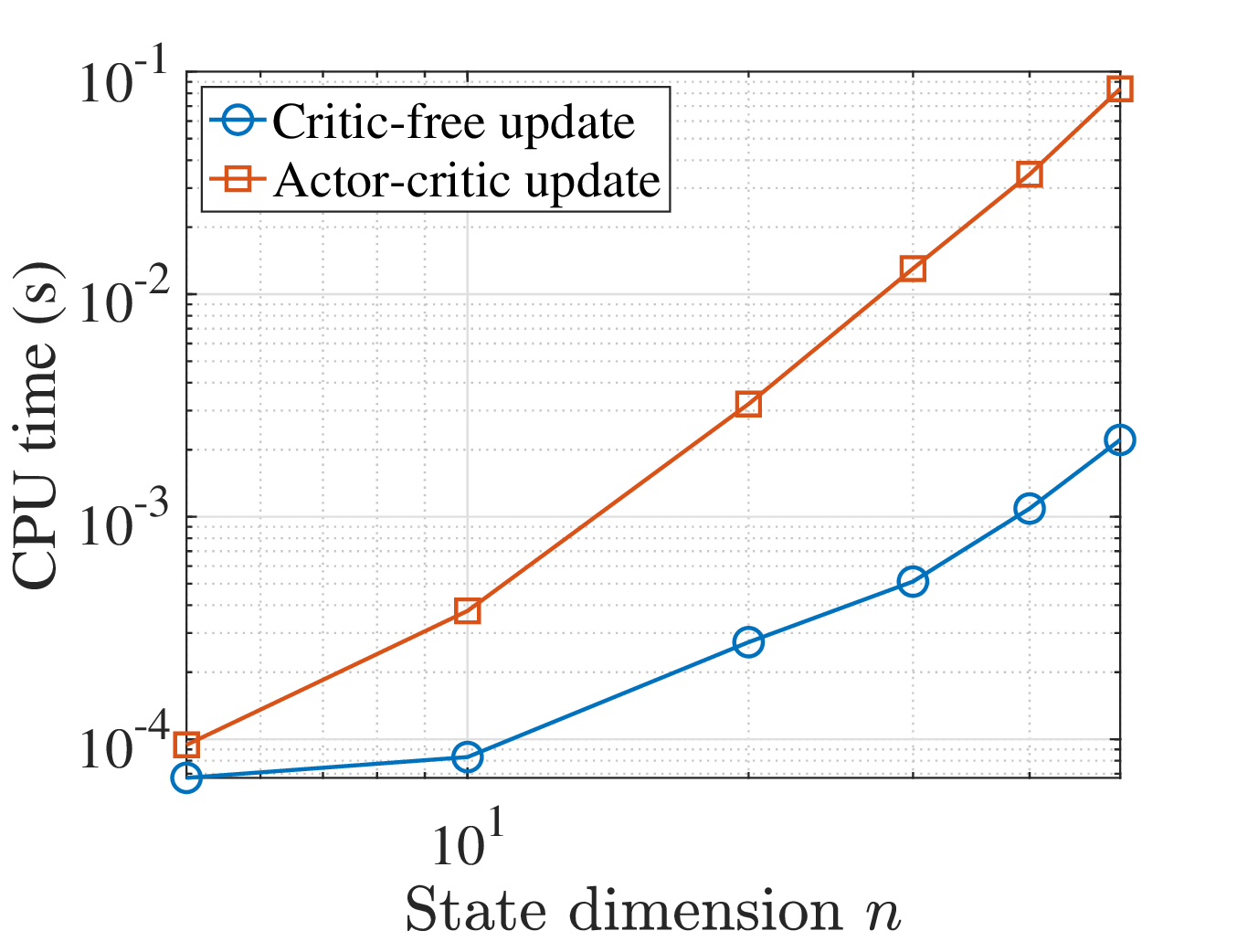}
}
\subfigure[Regression working memory]{
    \includegraphics[width=4cm]{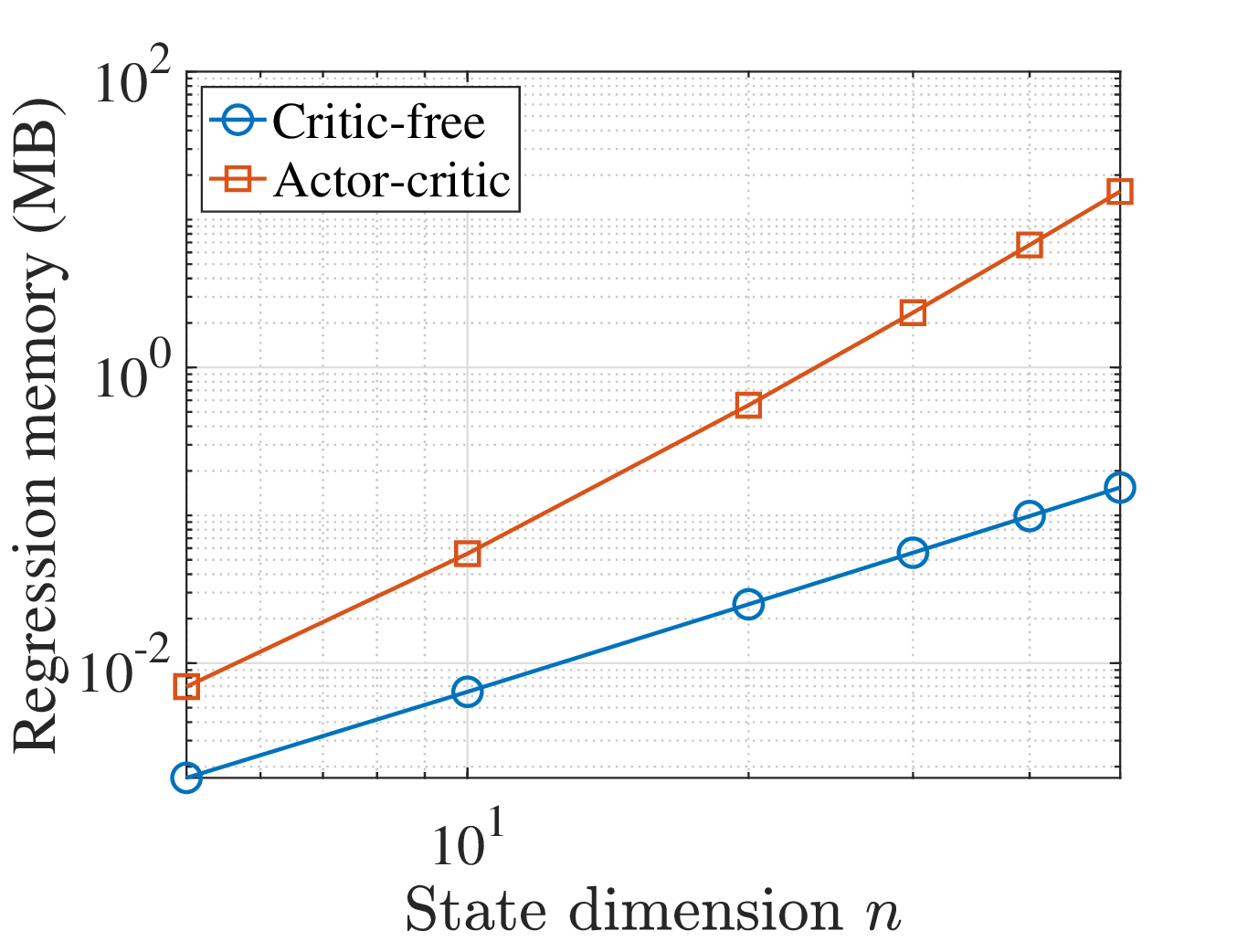}
}
\subfigure[Per-update speedup]{
    \includegraphics[width=4cm]{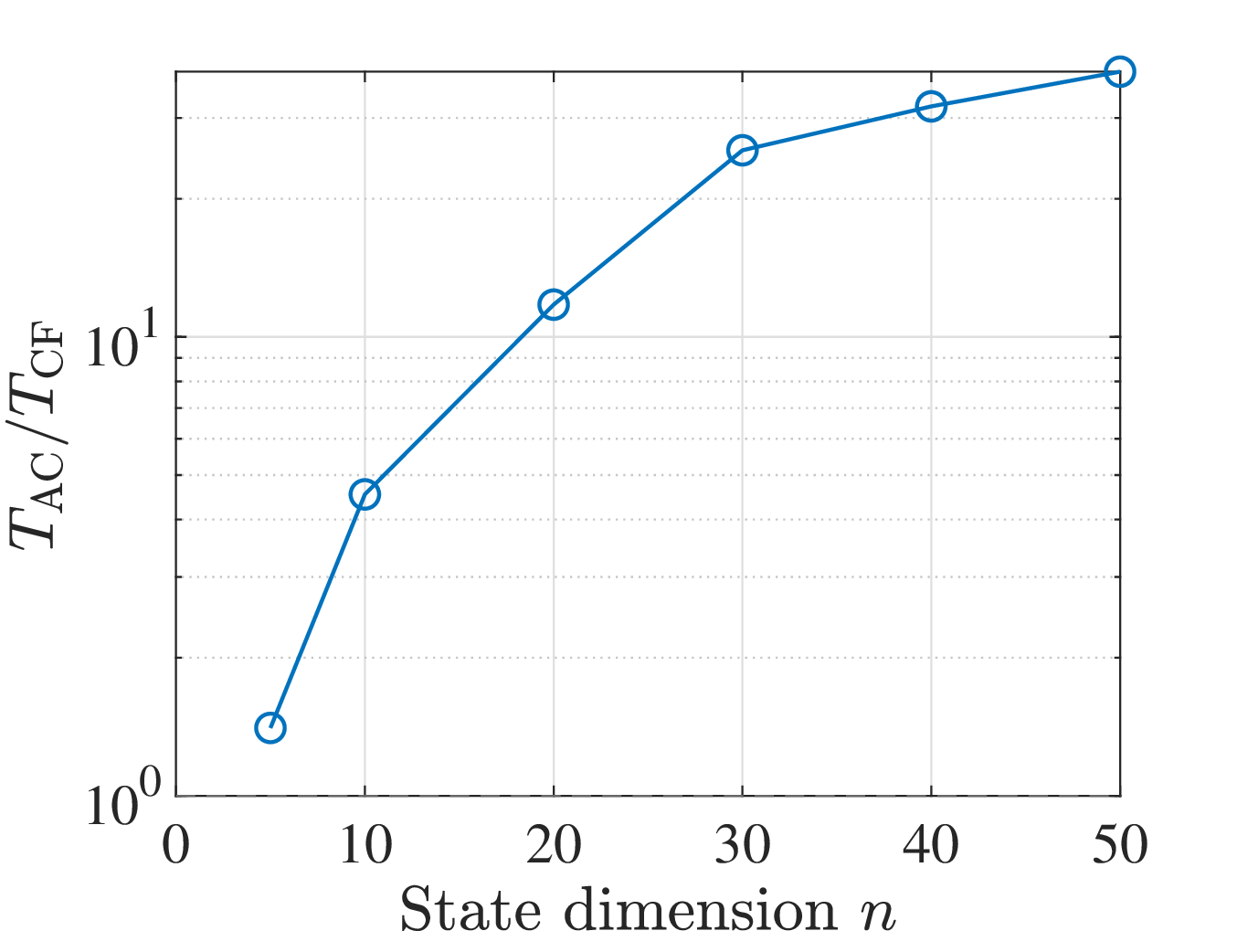}
}
\caption{Scalability comparison between the proposed critic-free and
joint actor--critic formulations: (a) number of unknown variables;
(b) median computational time of the recurring policy updates;
(c) regression working memory; and (d) per-update computational speedup.}
\label{fig:scalability}
\end{figure}

While Fig.~\ref{fig:scalability}(c) concerns the working memory of a
single regression, Fig.~\ref{fig:retained_memory} compares the sufficient
statistics that must be retained across policy iterations. After the
endpoint-nullspace projection, the critic-free formulation only retains
$
\left\{
\overline{\mathbf X}_{\ell},
\overline{\mathbf S}_{w,\ell},
\overline{\mathbf S}_{u,\ell}
\right\}_{\ell=1}^{\nu},
$
where the overbar denotes the corresponding nullspace-projected statistic.
In contrast, the joint actor--critic formulation retains the raw interval
statistics together with the endpoint data $\mathcal D_x$ required for
critic estimation. Under the scaling $N=d_n+2sn$, the storage requirements
of the projected and raw representations grow approximately as
$\mathcal O(n^3)$ and $\mathcal O(n^4)$, respectively. At $n=50$, they
require approximately $4$~MB and $44$~MB, corresponding to an
approximately $11$-fold reduction.

\begin{figure}[t]
\centering
\includegraphics[width=\columnwidth]{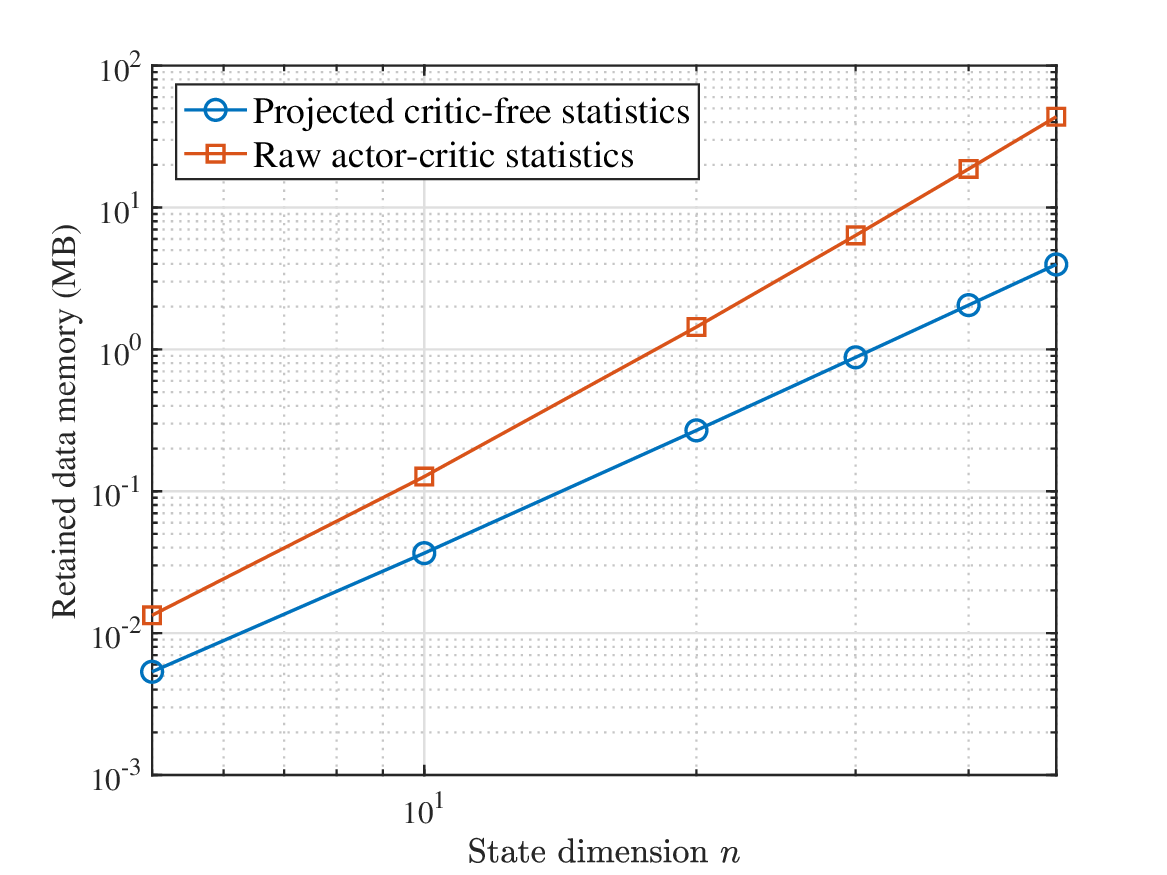}
\caption{Retained sufficient-statistics memory after preprocessing.
The proposed formulation retains only the projected actor-only statistics
required by subsequent policy updates, whereas the joint actor--critic
formulation retains the raw interval and endpoint statistics required for
joint actor and critic estimation.}
\label{fig:retained_memory}
\end{figure}
\begin{figure}[th!]
\centering
\includegraphics[width=\columnwidth]{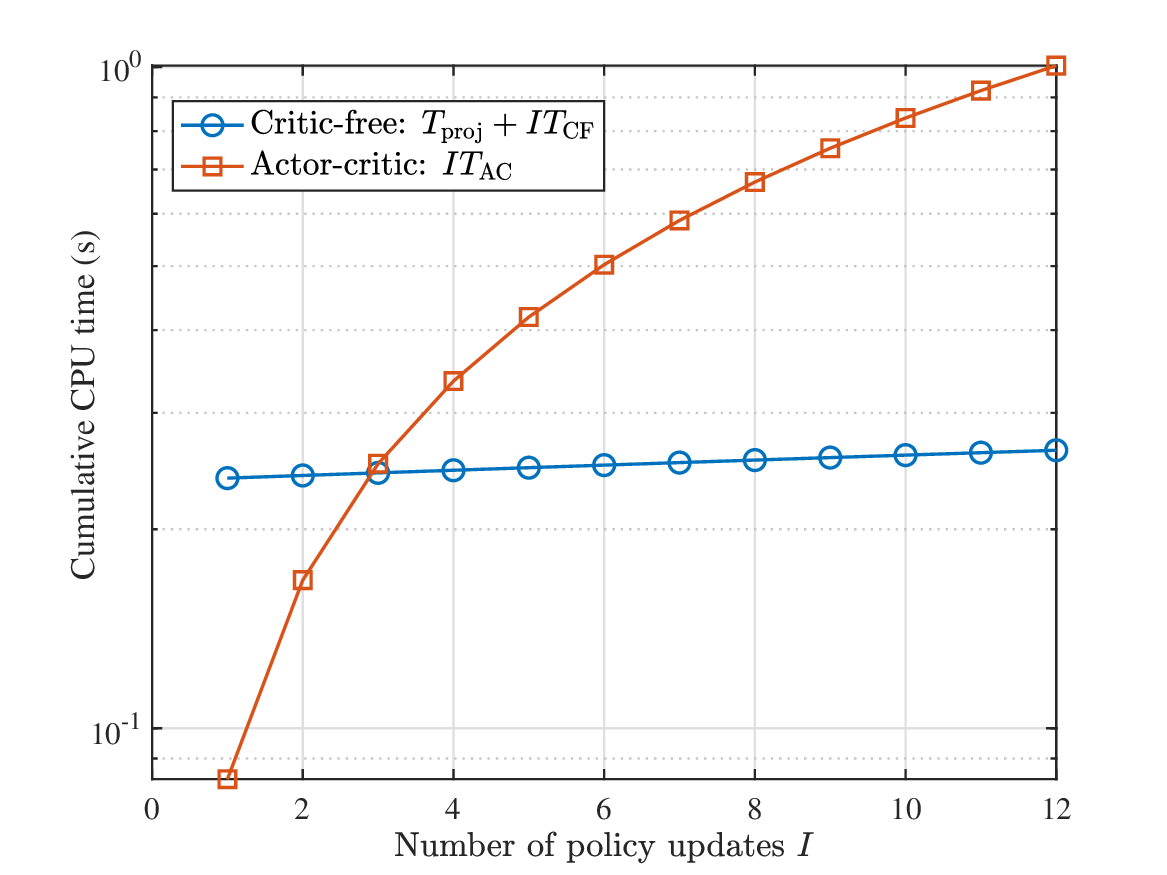}
\caption{Amortization of the one-time endpoint-projection cost at
$n=50$. The cumulative critic-free cost is estimated as
$T_{\mathrm{proj}}+I T_{\mathrm{CF}}$, whereas that of the joint
actor--critic formulation is estimated as $I T_{\mathrm{AC}}$. The
projection overhead is amortized after approximately three policy
updates.}
\label{fig:projection_amortization}
\end{figure}
The computational savings above come at the expense of the one-time
endpoint-nullspace projection. To quantify how quickly this preprocessing
cost is amortized, Fig.~\ref{fig:projection_amortization} reports the
estimated cumulative computation time at $n=50$.
$
T_{\mathrm{CF}}(I)
=T_{\mathrm{proj}}+I T_{\mathrm{CF}},$ and
$
T_{\mathrm{AC}}(I)
=I T_{\mathrm{AC}},
$
where $I$ is the number of policy updates, $T_{\mathrm{proj}}$ is the
one-time projection cost, and $T_{\mathrm{CF}}$ and $T_{\mathrm{AC}}$
are the respective per-update computation times.
Although the critic-free formulation is initially more expensive because
of the projection step, this overhead is amortized after approximately
three policy updates. Beyond this break-even point, its lower recurring
update cost yields an increasing cumulative computational advantage.
Overall, critic elimination reduces both the dimension of the recurring
policy-update regression and the memory required across iterations, with
the one-time preprocessing cost rapidly amortized as the number of policy
updates increases.

\section{Conclusion}
The proposed critic-free method shows that the value matrix is not 
intrinsically required for policy improvement in continuous-time linear 
zero-sum games. Conventional PI uses the value matrix as an intermediate 
variable to construct the next policy; under the stated conditions, the 
PGRE determines the same update directly in the joint policy space. In 
the data-driven setting, endpoint projection removes the value-matrix 
term while retaining the information required for that update. Hence, 
the usual full-rank condition for a joint actor--critic regression is 
stronger than policy improvement requires: the policy update can be 
unique even when the critic is not identifiable. Because the number of 
critic parameters grows quadratically with the state dimension, this 
separation yields increasing computational and memory benefits. 
Numerical studies corroborate the identifiability result and demonstrate 
these benefits.

\bibliographystyle{IEEEtran}
\bibliography{shwj-workwjc}
\end{document}